\let\doendproof\endproof
\renewcommand\endproof{~\hfill$\qed$\doendproof}
\newcommand{\shuffle}{\hskip.3ex plus .1ex minus .1ex%
  \mathchoice{{\scriptstyle\sqcup\hskip-.16em\sqcup}}%
	     {{\scriptstyle\sqcup\hskip-.16em\sqcup}}%
	     {{\scriptscriptstyle\sqcup\hskip-.16em\sqcup}}%
	     {{\scriptscriptstyle\sqcup\hskip-.16em\sqcup}}%
  \hskip.3ex plus .1ex minus .1ex}
\newcommand{\Nat}{\ensuremath{\mathbb{N}}}%
\newcommand{\ta}{\mathtt{A}}
\newcommand{\tb}{\mathtt{B}}
\newcommand{\tc}{\mathtt{C}}
\newcommand{\calA}{\mathcal A}%
\newcommand{\calB}{\mathcal B}%
\newcommand{\dwc}{\mathop{\downarrow}\nolimits}%
\newcommand{\alphabet}{\operatorname{\textit{alph}}}
\newcommand{\Id}{\mathit{Id}}
\newcommand{\subword}{\preccurlyeq}%
\newcommand{\nsubword}{\not\preccurlyeq}%
\newcommand{\subsim}{\lesssim}
\newcommand{\eqby}[1]{\stackrel{\!#1\!}{=}}
\newcommand{\leqby}[1]{\stackrel{\!#1\!}{\leq}}
\newcommand{\eqdef}{\stackrel{\mbox{\begin{scriptsize}def\end{scriptsize}}}{=}}
\newcommand{\equivdef}{\stackrel{\mbox{\begin{scriptsize}def\end{scriptsize}}}{\Leftrightarrow}}
\newcommand{\tuple}[1]{\langle #1 \rangle}
\title{
       On the piecewise complexity of words
}
\author{
        Ph.\ Schnoebelen
\and
        I.\ Vialard
}
\institute{
    Laboratoire M\'ethodes Formelles, Univ.\ Paris-Saclay, France
}
\begin{document}

\maketitle

\begin{abstract}
The piecewise complexity $h(u)$ of a word is the minimal length of
subwords needed to exactly characterise $u$. Its piecewise minimality
index $\rho(u)$ is the smallest length $k$ such that $u$ is minimal
among its order-$k$ class $[u]_k$ in Simon's congruence.

We initiate a study of these two descriptive complexity measures. Among
other results we provide efficient algorithms for computing $h(u)$ and
$\rho(u)$ for a given word $u$.

\end{abstract}

\section*{Foreword for editor and reviewers}
This submission is an extended version of some results presented at
the SOFSEM 2024 conference. The conference paper was titled ``On the
piecewise complexity of words \emph{and periodic words}'' [emphasis
added] and was signed by four authors: that is, the two authors of this
here submission, joined by M.~Praveen and J.~Veron,
see~\cite{PSVV-sofsem2024}. This situation warrants explanations.

The work and results in this submission address the piecewise
complexity of words and is due to Ph.~Schnoebelen joined recently by
I.~Vialard. The SOFSEM conference paper presented an abridged and
incomplete version of these results, together with some extra
developments for periodic words that were obtained in recent
collaborations with M.~Praveen and J.~Veron. This resulted in a
conference paper addressing two related but separate issues. In
retrospect we realised this was not the best redactional choice and
decided that, when it comes to final publication in a peer-reviewed
journal where more details (including full proofs) are expected, the
wise decision is to make two separate articles. This submission is the
first of the two, signed by its two authors, and the second
submission, when it is ready, will be signed by all four authors.

\newpage

\section{Introduction}
\label{sec-intro}

For two words $u$ and $v$, we write $u\subword v$ when $u$ is a
\emph{subword}, i.e., a subsequence, of $v$. For example
$\mathtt{SIMON}\subword \mathtt{STIMULATION}$ while
$\mathtt{IMRE} \not \subword \mathtt{NOISEMAKER}$.
Subwords and subsequences play
a prominent role in many areas of combinatorics and computer science. Our personal
motivations come from descriptive complexity and the possibility of
characterising words and languages via some short witnessing subwords.

In the early 1970s, and with similar motivations, Imre Simon
introduced \emph{piecewise-testable} (PT) languages in his PhD thesis
(see~\cite{simon72,simon75,sakarovitch83}): a language $L$ is PT if
there is a \emph{finite} set of words $F$
such that the membership of a word $u$ in $L$ depends only on which words from
$F$ are subwords of $u$. Since then, PT languages  have	 played an
important role in the algebraic and logical theory of first-order
definable languages, see~\cite{pin86,DGK-ijfcs08,klima2011,place2022} and the
references therein.  They also constitute an important class of simple
regular languages with applications in learning
theory~\cite{kontorovich2008}, databases~\cite{bojanczyk2012b},
linguistics~\cite{rogers2013}, etc. The concept of piecewise-testability has
been extended to variant notions of ``subwords''~\cite{zetzsche2018},
to trees~\cite{bojanczyk2012b}, pictures~\cite{matz98}, infinite
words~\cite{perrin2004,carton2018b}, or any
combinatorial well-quasi-order~\cite{goubault2016}.

When a PT language $L$ can be characterised via a finite $F$ where all
words have length at most $k$, we say that $L$ is piecewise-testable
\emph{of height $k$}, or $k$-PT.  Equivalently, $L$ is $k$-PT if it is
closed under
Simon's congruence of order $k$, denoted  $\sim_k$, that relates words
having the same subwords of length
at most $k$. The \emph{piecewise complexity} of $L$, denoted $h(L)$ (for
``height''), is the smallest $k$ such that $L$ is $k$-PT. It coincides
with the minimum number of variables needed in any $\calB\Sigma_1$ formula that
defines $L$~\cite{thomas82,DGK-ijfcs08}.

The piecewise complexity of languages was studied by
Karandikar and Schnoebelen in~\cite{KS-lmcs2019} where it is a central
tool for establishing elementary upper bounds for the complexity of
the $\mathsf{FO}^2$ fragment of the logic of
subwords.

In this article, we focus on the piecewise complexity of
\emph{individual words}. For $u\in A^*$, we write $h(u)$ for
$h(\{u\})$, i.e., the smallest $k$ such that $[u]_k=\{u\}$, where $[u]_k$
is the equivalence class of $u$ w.r.t.\ $\sim_k$.  We also introduce a
new measure, $\rho(u)$, defined as the smallest $k$ such that $u$ is
minimal (with respect to the  subword ordering) in $[u]_k$.

Let us explain the idea more concretely.  A word can be defined via
subword constraints. For example,  $\mathtt{A B B A}$ is the only word
$u$ over the alphabet $A=\{\ta,\tb\}$ that satisfies the following
constraints:
\begin{gather}
\label{eq-ABBA-constraints}
\mathtt{A A}\subword u \:,
\quad
\mathtt{A A A}\not\subword u \:,
\quad
\mathtt{B B}\subword u \:,
\quad
\mathtt{B B B}\not\subword u \:,
\quad
\mathtt{A A B}\not\subword u \:,
\quad
\mathtt{B A A}\not\subword u \:.
\end{gather}
Indeed, the first four constraints require that $u$ contains exactly two
$\ta$'s and two $\tb$'s, while the
last two constraints require that any $\tb$ in $u$ sits between the two $\ta$'s. Since
all the constraints use subwords of length at most
3, we see that $h(\ta\tb\tb\ta)$ is at most 3.\footnote{We'll see
later how to prove that $\ta\tb\tb\ta$ cannot be defined with
constraints using shorter subwords so that its piecewise complexity is
exactly 3.} Note that we are not concerned with the number of
constraints, as we rather focus
on the length of the subwords in use. Given some $v$ different from
$u=\ta\tb\tb\ta$, we now know that there is a short witness of this
difference: a word of length at most 3 that is subword of only one
among $u$ and $v$. This witness, called a \emph{distinguisher}, depends on $v$ ---actually it can
be chosen among the constraints listed
in~\eqref{eq-ABBA-constraints}--- but the length bound comes from $u$
only.

Such small piecewise complexities often occur even
for long words: e.g., both $h(\mathtt{ABRACADABRA})$ and
$h(\mathtt{THE\,FULL\,WORKS\,OF\,WILLIAM\,SHAKESPEARE})$ equal $4$.
\\

We have two main motivations for this study. Firstly,
it appeared in~\cite{KS-lmcs2019} that bounding $h(L)$ for a PT
language $L$ relies heavily on knowing $h(u)$ for specific words $u$
in and out of $L$. For example, the piecewise complexity of a
\emph{finite} language $L$ is always  $\max_{u\in L}h(u)$, and
the tightness of many upper bounds in~\cite{KS-lmcs2019} relies on
identifying a family of long words with small piecewise
complexity.
Secondly,
the piecewise
complexity of words raises challenging combinatorial and algorithmic
questions. As a starting point, one would like a practical and efficient
algorithm that computes $h(u)$ for any given $u$.

\subsubsection*{Our contribution.}

Along $h(u)$, we introduce a new measure, $\rho(u)$, the \emph{piecewise
minimality index} of $u$, and initiate an investigation of the
combinatorial and algorithmic properties of both measures.  The new
measure $\rho(u)$ is closely related to $h(u)$ but is easier to
compute. Our main results are (1) theoretical results connecting $h$
and $\rho$ and bounding their values in contexts involving
concatenation, and (2) efficient algorithms for computing $h(u)$ and
$\rho(u)$.

\subsubsection*{Related work.}

The existing literature on piecewise complexity mostly considers $h(L)$ for
$L$ a PT-language, and provide general bounds (see,
e.g.,~\cite{KS-lmcs2019,HS-ipl2019}). We are not aware of any
practical algorithm computing $h(L)$ for $L$ a PT-language given,
e.g., via a deterministic finite-state automaton $\calA$, and it is
known that, for any given $k$, deciding whether $h(L(\calA))\leq k$ is in
$\textsf{coNP}$~\cite[Prop.~1]{masopust2015}. For individual words,
\cite[Prop.~3.1]{KS-lmcs2019}
states that $h(u)$ is polynomial-time computable but this is just a
direct reading of \Cref{eq-hu-delta-u-u1au2} from
\Cref{section-for-eq-hu-delta-u-u1au2} below, without any attempt at an
efficient algorithm.

A related notion is the \emph{piecewise distance} between words,
denoted $\delta$,
as defined by Sakarovitch and Simon already in~\cite{sakarovitch83}.
The literature here is a bit richer: Simon
claimed an $O(|uv|)$-time algorithm computing $\delta(u,v)$
in~\cite{simon2003} but the algorithm was never published.  The
problem received renewed attention
recently~\cite{fleischer2018,barker2020} and Gawrychowski et
al.\ eventually produced a linear-time
algorithm~\cite{gawrychowski2021}.  As we show below
(\Cref{prop-hu-delta-u-u1au2,prop-rho-charac}), there is a clear
connection between $\delta$ on one hand, and $h$ and $\rho$ on the
other hand.  However, these connections do not, or not
immediately, lead to efficient algorithms for computing $h$ or $\rho$.

\subsubsection*{Outline of the article.}
After recalling the necessary background in \Cref{sec-basics}, we
define the new measures $h(u)$ and $\rho(u)$ in \Cref{sec-h-rho} and
prove some first elementary properties like monotonicity and
convexity.  In \Cref{sec-algo-h-and-rho} we give efficient
algorithms for computing
$h(u)$ and $\rho(u)$.  Finally, in \Cref{sec-binary} we focus on binary
words and give a specialised, more efficient algorithm, for $h$ and $\rho$.

\section{Words, subwords and Simon's congruence}
\label{sec-basics}

We consider finite words $u,v,\ldots$ over a finite non-empty alphabet $A$.
We write $|u|$ for the length of a word $u$ and
$|u|_a$ for the number of occurrences of a letter  $a$	in $u$.
The
empty word is denoted with $\epsilon$.
For a word $u=a_1a_2\cdots a_n$ of length $|u|=n$ and two positions $i,j$
such that $0\leq
i\leq j\leq n$, we write $u(i,j)$ for the
factor $a_{i+1}a_{i+2}\cdots a_j$. Note that $u(0,|u|)=u$, that
$u(i_1,i_2)\cdot u(i_2,i_3)=u(i_1,i_3)$, and that $|u(i,j)|=j-i$. We
write $u(i)$ as shorthand for $u(i-1,i)$, i.e., $a_i$, the $i$-th
letter of $u$. With $\alphabet(u)$ we denote the set of letters that
occur in $u$. We often abuse notation and write ``$a\in u$'' instead
of ``$a\in\alphabet(u)$'' to say that a letter $a$ occurs in a word
$u$.

We say that $u=a_1\cdots a_n$ is a \emph{subword} of $v$, written
$u\subword v$, if $v$ can be factored under the form $v=v_0a_1v_1 a_2
\cdots v_{n-1} a_n v_n$ where the $v_i$'s can be any words (and can be
empty).	 We write $\dwc u$ for the \emph{downward closure} of $u$,
i.e., the set of all subwords of $u$: e.g.,
$\dwc\mathtt{ABAA} = \{\epsilon, \mathtt{A}, \mathtt{B}, \mathtt{A A},
\mathtt{A B}, \mathtt{B A}, \mathtt{A A A}, \mathtt{A B A}, \mathtt{B
A A}, \mathtt{A B A A}\}$.

Factors are a special case of subwords: $u$ is a \emph{factor} of $v$
if $v=v'u v''$ for some $v',v''$. Furthermore, when $v=v'u v''$ we say
that $u$ is a \emph{prefix} of $v$ when $v'=\epsilon$, and is a
\emph{suffix} of $v$ when $v''=\epsilon$.

When $u\neq v$, a word $s$ is a \emph{distinguisher} (or a
\emph{separator}) if $s$ is subword of exactly one word among $u$ and
$v$~\cite{simon72}. Observe that any two distinct words admit a distinguisher.

For $k
\in\Nat$ we write $A^{\leq k}$ for the set of words over $A$ that have
length at most $k$, and	 for any words $u,v \in A^*$, we let $ u \sim_k v
\equivdef \dwc{u}\cap A^{\leq k}=\dwc{v}\cap A^{\leq k} $.  In other
words, $u\sim_k v$ if $u$ and $v$ have the same subwords of length at
most $k$.  For example $\mathtt{ABAB}\sim_1 \mathtt{AABB}$ (both words
use the same letters, which are their subwords of length 1) but $\mathtt{ABAB}\not\sim_2 \mathtt{AABB}$
($\mathtt{BA}$ is a subword of $\mathtt{ABAB}$, not of
$\mathtt{AABB}$).
The equivalence $\sim_k$, introduced in~\cite{simon72,simon75}, is called \emph{Simon's congruence of order $k$}.
Note that $u\sim_0 v$ for any $u,v$, and $u\sim_k u$ for any
$k$. Finally, $u\sim_{k+1} v$ implies $u\sim_k v$ for any $k$, and
there is a refinement hierarchy
${\sim_0}\supseteq{\sim_1}\supseteq{\sim_2}\supseteq\cdots$ with
$\bigcap_{k\in\Nat}\sim_k=\Id_{A^*}$.
We write $[u]_k$ for the equivalence class of $u\in A^*$ under
$\sim_k$.  Note that each $\sim_k$, for $k=0,1,2,\ldots$, has finite
index~\cite{simon75,sakarovitch83,KKS-ipl2015}.

We further let
$u \subsim_k v \equivdef u\sim_k v \land u\subword v$.
Note that $\subsim_k$ is stronger than $\sim_k$. Both relations are
(pre)congruences: $u\sim_k v$ and $u'\sim_k v'$ imply $uu'\sim_k vv'$,
while $u\subsim_k v$ and $u'\subsim_k v'$ imply $uu'\subsim_k vv'$.

The following properties will be useful:
\begin{lemma}
\label{lem-useful}
For all $u,v,v',w \in A^*$ and $a,b \in A$:
\begin{enumerate}

\item
\label{it-convex}
If  $u \sim_k v$ and  $u\subword w\subword v$ then $u\subsim_k
w\subsim_k v$;

\item
\label{it-carac-richn}
When $k>0$, $u\sim_k u v$ if, and only if, there exists a
factorization $u=u_1 u_2\cdots u_k$ such that $\alphabet(u_1)
\supseteq \alphabet(u_2) \supseteq \cdots \supseteq \alphabet(u_k) \supseteq
\alphabet(v)$;

\item
\label{it-diff-let}
If $u a v\sim_k u b v'$ and $a\neq b$ then
$u bav\sim_k u b v'$ or $u ab v'\sim_k u a v$ (or both);

\item
\label{it-upperbound}
If  $u \sim_k v$ then there exists $w \in
A^*$ such that $u \subsim_k w$ and $v \subsim_k w$;

\item
\label{it-shorter}
If $u\sim_k v$ and $|u|<|v|$ then there exists some $v'$ with
$|v'|=|u|$ and such that $u\sim_k v'\subword v$;

\item
\label{it-pumping}
If $u v \sim_k u a v$ then
$u v \sim_k u a^m v$ for all $m\in\Nat$;

\item
\label{it-sing-inf}
Every equivalence class of $\sim_k$ is a singleton or is infinite.

\end{enumerate}
\end{lemma}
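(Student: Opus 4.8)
The plan is to prove the seven items in order, using earlier ones in the proofs of later ones; I expect item~\ref{it-diff-let} to be the real obstacle.

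Item~\ref{it-convex} is immediate from the definition of $\sim_k$: $u\subword w\subword v$ gives $\dwc u\cap A^{\le k}\subseteq\dwc w\cap A^{\le k}\subseteq\dwc v\cap A^{\le k}$, the two ends coincide by hypothesis, so all three sets are equal, and combining the resulting $\sim_k$'s with the given subword relations yields $u\subsim_k w\subsim_k v$. Item~\ref{it-carac-richn} is classical (essentially due to Simon), and I would argue thus. For ``$\Leftarrow$'', given $u=u_1\cdots u_k$ with $\alphabet(u_1)\supseteq\cdots\supseteq\alphabet(u_k)\supseteq\alphabet(v)$, take $w\subword uv$ with $|w|\le k$ and split it along an embedding as $w=w'w''$ with $w'\subword u$ and $w''\subword v$; every letter of $w$ lies in $\alphabet(u)=\alphabet(u_1)$, and by the alphabet chain a letter missing from $u_i$ is missing from all later blocks, so the leftmost embedding of $w'$ into $u_1\cdots u_k$ places its $i$-th letter in a block of index $\le i$, while the letters of $w''$ all lie in $\alphabet(v)$ and fit in the next $|w''|$ blocks; hence $w\subword u_1\cdots u_{|w|}\subword u$. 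For ``$\Rightarrow$'', induct on $k$, peeling off a shortest prefix $u_1$ of $u$ with $\alphabet(u_1)=\alphabet(u)$ and checking that the remaining $u_2\cdots u_k$ still satisfies $u_2\cdots u_k\sim_{k-1}(u_2\cdots u_k)v$.

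The heart of the lemma is item~\ref{it-diff-let}. Since $uav\subword ubav$ and $ubv'\subword uabv'$, writing $D:=\dwc(uav)\cap A^{\le k}=\dwc(ubv')\cap A^{\le k}$ we get $D\subseteq\dwc(ubav)\cap A^{\le k}$ and $D\subseteq\dwc(uabv')\cap A^{\le k}$, so $ubav\sim_k ubv'$ is equivalent to the first inclusion being an equality and $uabv'\sim_k uav$ to the second. Assume both fail: pick $s\in\dwc(ubav)\setminus D$ and $t\in\dwc(uabv')\setminus D$ of length $\le k$; I aim for a contradiction with $uav\sim_k ubv'$. Since $s\notin D$ we have $s\not\subword uav$, so every embedding of $s$ into $u\cdot b\cdot av$ uses the displayed $b$, giving $s=s_1 b s_2$ with $s_1\subword u$, $s_2\subword av$, and since moreover $s\not\subword ubv'$, $s_2\not\subword v'$; symmetrically $t=t_1 a t_2$ with $t_1\subword u$, $t_2\subword bv'$, $t_2\not\subword v$. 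Now one runs a case analysis on how the distinguished letters $a$ and $b$ can be matched inside $u,v,v'$: when they can only be matched at the displayed positions, shortening $s$ or $t$ by one letter and transporting it across $uav\sim_k ubv'$ puts it back into $D$; the remaining cases, where the relevant letter also occurs inside $u$, require using the information from $s$ and from $t$ jointly. I expect this case analysis to be the main obstacle, and would try to organise it by induction on $|u|$ (or on $|s|+|t|$).

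The remaining items I would derive from the above. For item~\ref{it-pumping}: from $uv\sim_k uav$ one first gets $t_1 a t_2\subword uv$ whenever $t_1\subword u$, $t_2\subword v$, $|t_1|+1+|t_2|\le k$, and then, by induction on $j\ge1$, that $t_1 a^j t_2\subword uv$ for all such $t_1,t_2$ with $|t_1|+j+|t_2|\le k$ --- the step takes an embedding of $t_1 a t_2$ into $uv$ and, according to whether its middle $a$ lies in the $u$-part or the $v$-part, absorbs it into $t_1$ or into $t_2$ and applies the level-$j$ statement to the modified pair; since every length-$\le k$ subword of $ua^m v$ has the form $w'a^j w''$ with $w'\subword u$, $w''\subword v$, $0\le j\le m$, this gives $ua^m v\sim_k uv$. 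For item~\ref{it-upperbound}: if $u\subword v$ or $v\subword u$ take $w=v$ or $w=u$; otherwise write the first mismatch as $u=pau''$, $v=pbv''$ with $a\ne b$, use item~\ref{it-diff-let} to get $pbau''\sim_k v$ or $pabv''\sim_k u$, note that $u$ (resp.\ $v$) is a subword of this new word and that it is still $\sim_k u$, and recurse --- the recursion terminates because $|u|+|v|-2\ell$, with $\ell$ the length of the longest common prefix of $u$ and $v$, strictly decreases. Item~\ref{it-shorter} follows by a descent: staying inside $[u]_k$, delete letters from $v$ one at a time down to length $|u|$, using item~\ref{it-convex} to remain in the class and items~\ref{it-diff-let}--\ref{it-upperbound} to ensure such a deletion exists as long as the current word is longer than $u$. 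Finally item~\ref{it-sing-inf}: if $[u]_k\ne\{u\}$, pick $v\sim_k u$ with $v\ne u$; by item~\ref{it-upperbound} there is $w$ with $u\subsim_k w$ and $v\subsim_k w$, and since $u\ne v$ one of $u,v$, say it equals $pq$, is a strict subword of $w$, so by item~\ref{it-convex} there is a one-letter insertion $pcq$ (with $c$ a letter of $w$) satisfying $pq\sim_k pcq$; item~\ref{it-pumping} then gives $pc^m q\sim_k pq$ for all $m\in\Nat$, so $[u]_k$ is infinite.
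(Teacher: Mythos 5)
Your plan is considerably more ambitious than the paper's own proof, which handles items~(\ref{it-carac-richn})--(\ref{it-pumping}) purely by citation (Simon's Lemmas~3, 5 and~6, his Theorem~4 on the common length of minimal words, and Sakarovitch--Simon) and only argues (\ref{it-convex}) and (\ref{it-sing-inf}) in-house. Where you do give arguments, most of them are sound: (\ref{it-convex}) matches the paper; (\ref{it-carac-richn}) is a correct sketch of the standard greedy-embedding argument; the induction for (\ref{it-pumping}) (absorbing the matched $a$ into $t_1$ or $t_2$ according to which side of $uv$ it lands in) works; (\ref{it-sing-inf}) is derived exactly as in the paper from (\ref{it-convex}), (\ref{it-upperbound}) and (\ref{it-pumping}); and your reduction of (\ref{it-upperbound}) to (\ref{it-diff-let}) via the first-mismatch recursion with the decreasing measure $|u|+|v|-2\ell$ is essentially Simon's own derivation of his Lemma~6 from Lemma~5.

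Two genuine gaps remain, however. First, item~(\ref{it-diff-let}) --- on which your proofs of (\ref{it-upperbound}), (\ref{it-shorter}) and (\ref{it-sing-inf}) all rest --- is not actually proved: you set up the contrapositive, extract the factorizations $s=s_1bs_2$ and $t=t_1at_2$ of the two putative distinguishers, and then defer ``a case analysis'' that you yourself flag as the main obstacle. That case analysis \emph{is} the lemma (Simon's proof of it runs to about a page), so as written the proposal establishes nothing here. Second, your proof of (\ref{it-shorter}) deletes letters from $v$ one at a time while staying in $[u]_k$ and asserts that items~(\ref{it-diff-let})--(\ref{it-upperbound}) ``ensure such a deletion exists as long as the current word is longer than $u$''. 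This is precisely the nontrivial content of Simon's Theorem~4 (all $\subword$-minimal words of a $\sim_k$-class have the same length), which the paper cites for this item: (\ref{it-diff-let})--(\ref{it-upperbound}) hand you class-preserving \emph{insertions} (a common upper bound $w$), not class-preserving deletions, and it is not clear how they would rule out a $\subword$-minimal word of $[u]_k$ strictly longer than $u$. You need to either prove that statement or cite it.
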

\begin{proof}
(\ref{it-convex}) is by combining
$\dwc{u}\subseteq \dwc{w}\subseteq \dwc{v}$ with the definition of $\sim_k$;
(\ref{it-carac-richn}--\ref{it-upperbound}) are Lemmas 3, 5, and 6 from~\cite{simon75};
(\ref{it-shorter}) is an immediate consequence of Theorem~4 from
\cite[p.~91]{simon72}, showing that all minimal (w.r.t.\ $\subword$) words in $[u]_k$ have
the same length --- see also
\cite[Theorem~6.2.9]{sakarovitch83} or \cite{fleischer2018} ---;
(\ref{it-pumping}) is in the proof of Corollary~2.8 from~\cite{sakarovitch83};
(\ref{it-sing-inf}) follows from (\ref{it-convex}), (\ref{it-upperbound}) and (\ref{it-pumping}).
\end{proof}

The fundamental tools for reasoning about piecewise complexity were
developed in Simon's thesis~\cite{simon72}.\footnote{Actually
$\delta$ does not appear \emph{per se} in~\cite{simon72}
or~\cite{simon75}. There, it is used implicitly in the definition of
$r$ and $\ell$ via Equations~\eqref{eq-def-r}
and~\eqref{eq-def-l}. Only in~\cite{sakarovitch83} does $\delta$
appear explicitly, both as a convenient notation and as a first-class
tool.  }
First, there is the concept of \emph{subword ``distance''}\footnote{In fact,
$\delta(u,v)$ is a measure of \emph{similarity} and not of difference,
between $u$ and $v$. A proper piecewise-based distance would be, e.g.,
$d(u,v)\eqdef 2^{-\delta(u,v)}$~\cite{sakarovitch83}.}
$\delta(u,v)\in\Nat\cup\{\infty\}$,  defined for any $u,v\in A^*$, via
\begin{align}
\label{eq-delta-distinguisher}
\delta(u,v)
&\eqdef \max\{k~|~u\sim_k v\}
\\
\label{eq-delta-distinguisher-2}
&=\begin{cases}
	\infty &\text{if $u=v$,}
	\\
	|s|-1 &\text{if $u\neq v$ and $s$ is a shortest distinguisher.}
\end{cases}
\end{align}
Derived notions are the left and right distances~\cite[p72]{simon72}, defined for any
$u,t\in A^*$, via
\begin{align}
\label{eq-def-r}
r(u,t)&\eqdef \delta(u,u t) = \max\{k~|~u\sim_k u t\} \:,
\\
\label{eq-def-l}
\ell(t,u) &\eqdef \delta(t u,u) = \max\{k~|~t u\sim_k u\} \:.
\end{align}

Clearly $r$ and $\ell$ are mirror notions. One usually
proves properties of $r$ only, and (often implicitly) deduce
symmetrical conclusions for $\ell$ by the mirror reasoning.

\begin{lemma}[{\cite[Lemma~6.2.13]{sakarovitch83}}]
\label{lem-uv-uav}
For any words $u,v\in A^*$ and letter $a\in A$
\begin{gather}
\label{eq-uv-uav}
\delta(u v,u a v)=\delta(u,u a)+\delta(a v,v)=r(u,a)+\ell(a,v)
\:.
\end{gather}
\end{lemma}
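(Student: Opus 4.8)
The plan is to prove the two inequalities $\delta(uv,uav)\ge r(u,a)+\ell(a,v)$ and $\delta(uv,uav)\le r(u,a)+\ell(a,v)$ separately, abbreviating $p\eqdef r(u,a)=\delta(u,ua)$ and $q\eqdef\ell(a,v)=\delta(av,v)$. Note first that $p$, $q$ and $\delta(uv,uav)$ are all finite, since a word is never equal to a strictly longer superword of itself, and that $uv\subword uav$, so every distinguisher of $uv$ and $uav$ is a subword of $uav$ but not of $uv$.

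For the lower bound I would show $uv\sim_{p+q}uav$ by taking an arbitrary $w\subword uav$ with $|w|\le p+q$ and proving $w\subword uv$. Fix an embedding of $w$ into $uav$. If it avoids the distinguished middle occurrence of $a$, then already $w\subword uv$. Otherwise exactly one letter of $w$ is matched to that $a$, which factors $w$ as $w=w_1\,a\,w_2$ with $w_1\subword u$, $w_2\subword v$, and $|w_1|+|w_2|=|w|-1\le p+q-1$; hence $|w_1|<p$ or $|w_2|<q$. In the first case $w_1 a\subword ua$ has length $\le p$, so $u\sim_p ua$ gives $w_1 a\subword u$ and thus $w=w_1 a w_2\subword uv$; the second case is symmetric using $av\sim_q v$. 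This is the step I expect to be the crux: the point is the \emph{super-additivity} forced by the fact that $w_1$ and $w_2$ cannot both be ``long'', which is exactly what lets the lone middle letter be absorbed on one side or the other.

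For the upper bound I would exhibit an explicit distinguisher of length $p+q+1$. Since $u\subword ua$ with $\delta(u,ua)=p$, there is a shortest distinguisher $s$ of $u$ and $ua$, so $|s|=p+1$, $s\subword ua$, $s\nsubword u$; in any embedding of $s$ into $ua$ the last letter of $s$ must be matched to the final $a$ (otherwise all of $s$ lands inside $u$, contradicting $s\nsubword u$), hence $s=s_0 a$ with $s_0\subword u$, $|s_0|=p$, and $s_0 a\nsubword u$. Symmetrically there is $t=a\,t_0$ with $t_0\subword v$, $|t_0|=q$, and $a t_0\nsubword v$. Put $x\eqdef s_0\,a\,t_0$, so $|x|=p+q+1$ and $x\subword uav$ because $s_0\subword u$ and $t_0\subword v$. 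If we had $x\subword uv$, look at the position of $uv$ to which the middle $a$ of $x$ is matched: if it lies in the $u$-part then $s_0 a\subword u$, and if it lies in the $v$-part then $a t_0\subword v$, a contradiction either way. Hence $x$ distinguishes $uv$ and $uav$, giving $\delta(uv,uav)\le|x|-1=p+q$.

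Combining the two bounds yields $\delta(uv,uav)=p+q=r(u,a)+\ell(a,v)$, and the first equality in \eqref{eq-uv-uav} is just unfolding the definitions of $r$ and $\ell$. I do not anticipate difficulty with the finiteness remarks or with the ``last letter matches the final $a$'' observations; the only genuinely delicate point is the length counting in the lower bound.
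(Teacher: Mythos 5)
Your proof is correct. Note that the paper itself gives no proof of this lemma: it is imported verbatim from Sakarovitch--Simon (Lemma~6.2.13 of \cite{sakarovitch83}), so there is no internal argument to compare against. Your two-sided argument --- showing $uv\sim_{p+q}uav$ by absorbing the lone middle letter into whichever side its piece is short enough for, and then concatenating a shortest distinguisher $s_0a$ of $(u,ua)$ with one $at_0$ of $(av,v)$ to get an explicit distinguisher of length $p+q+1$ --- is sound, handles the degenerate cases $p=0$ or $q=0$ correctly (where $s_0$ or $t_0$ is empty), and is essentially the classical proof from the cited source. The one point worth stating a touch more carefully is the claim that the last letter of $s$ must be matched to the final $a$ of $ua$: this holds for \emph{every} embedding of $s$ into $ua$ precisely because $s\nsubword u$, which is exactly the justification you give, so nothing is missing.
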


\section{The piecewise complexity of words}
\label{sec-h-rho}

In this section, we define the complexity measures $h(u)$ and
$\rho(u)$, give characterisations in terms of the side
distance functions  $r$ and $\ell$, compare the two measures and establish some first results on the measures of concatenations.

\subsection{Defining words via their subwords}
\label{ssec-h-def}

The piecewise complexity of piecewise-testable (PT) languages was defined
in~\cite{KS-csl2016,KS-lmcs2019}. Formally, for a language $L$ over
$A$, $h(L)$ is the smallest index $k$ such that $L$ is
$\sim_k$-saturated, i.e., closed under $\sim_k$ (and we let $h(L)=\infty$
when $L$ is not PT).  When $L$ is a
singleton $\{u\}$ with $u\in A^*$, this becomes $h\bigl(\{u\}\bigr)
\eqdef \min\{n~|~ \forall v:u\sim_n v\implies u=v\}$: we write this
more simply as $h(u)$ and call it the \emph{piecewise complexity} of
$u$. For example, $h(a^n)=n+1$ when $a$ is any letter.

Our first result expresses $h$ in terms of $\delta$.
\label{section-for-eq-hu-delta-u-u1au2}
\begin{proposition}[After \protect{\cite[Prop.~3.1]{KS-lmcs2019}}]
For any $u\in A^*$,
\label{prop-hu-delta-u-u1au2}
\begin{align}
\label{eq-hu-delta-u-u1au2}
h(u)
&= \max_{\substack{u=u_1u_2 \\ a\in A}} \delta(u,u_1a u_2)+1
\\
\label{eq-hu-delta-u1-u1a-u2-au2}
\tag{H}
&= \max_{\substack{u=u_1u_2 \\ a\in A}}r(u_1,a)+\ell(a,u_2)+1
\:.
\end{align}
\end{proposition}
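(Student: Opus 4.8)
The plan is to prove \Cref{eq-hu-delta-u-u1au2} first, since \Cref{eq-hu-delta-u1-u1a-u2-au2} then follows immediately by applying \Cref{lem-uv-uav} to each term $\delta(u,u_1au_2)$ in the maximum. So the real content is establishing
\[
h(u) = 1 + \max_{u=u_1u_2,\; a\in A}\delta(u,u_1au_2)\:.
\]
Write $H(u)$ for the right-hand side. I would prove the two inequalities separately.

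For $h(u)\leq H(u)$: I must show that if $v\sim_{H(u)}u$ then $v=u$. Suppose $v\neq u$. The key idea is that since $u\subword$-related words in the same $\sim_k$-class can be connected by single-letter insertions, I want to produce from $u$ and $v$ a word of the form $u_1au_2$ (a single-letter insertion into $u$) that is still $\sim_{H(u)}$-equivalent to $u$ --- contradicting the definition of $H(u)$, because that would force $\delta(u,u_1au_2)\geq H(u)$, i.e., $H(u)\geq H(u)+1$. Concretely: by \Cref{lem-useful}(\ref{it-upperbound}) there is a $w$ with $u\subsim_{H(u)}w$ and $v\subsim_{H(u)}w$; since $v\neq u$ we have $u\ssubword w$ (if $u=w$ then $v\subword u$ and $v\sim u$ forces $v=u$ by \Cref{lem-useful}(\ref{it-convex})... actually I need $|v|\leq|u|$; better: if $u = w$, then $v \subword u$ with $v\sim_{H(u)}u$, and one checks $v=u$ using that $u$ is the longest... hmm, let me instead just take any $w$ strictly above $u$). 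Take the first letter $a$ inserted along a chain $u=w_0\ssubword w_1\ssubword\cdots\ssubword w_m=w$ of single-letter insertions; then $w_1=u_1au_2$ for some factorization, and by \Cref{lem-useful}(\ref{it-convex}) applied to $u\subword w_1\subword w$ (all $\sim_{H(u)}$-equivalent) we get $u\sim_{H(u)}u_1au_2$, the desired contradiction.

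For $h(u)\geq H(u)$: I must exhibit, for the optimal factorization $u=u_1u_2$ and letter $a$ achieving the maximum with value $d=\delta(u,u_1au_2)$, a word $v\neq u$ with $v\sim_{h(u)-1}u$ --- equivalently show $u\not\sim_d v$ fails for $v=u_1au_2$, i.e., that $u_1au_2\sim_d u$ and $u_1au_2\neq u$. The inequality $u_1au_2\neq u$ is immediate since $|u_1au_2|>|u|$, and $u_1au_2\sim_d u$ is exactly the definition of $d=\delta(u,u_1au_2)$. Hence $h(u)>d$, i.e., $h(u)\geq d+1 = H(u)$. This direction is essentially definitional once one unwinds $\delta$.

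The main obstacle is the first direction, and specifically the clean handling of the chain of single-letter insertions: I need \Cref{lem-useful}(\ref{it-shorter}) to replace $v$ by a word $v'$ of length $|u|$ with $v'\sim_{H(u)}u$ and $v'\subword$-comparable data, or rather I should be careful that the $w$ from \Cref{lem-useful}(\ref{it-upperbound}) genuinely sits strictly above $u$ --- if $v\subword u$ we instead apply the mirrored argument going down from $u$ to $v$ via single-letter \emph{deletions}, again extracting one insertion step $u_1au_2$ with $u\subword u_1au_2\subword$ (something $\sim_{H(u)}u$), and if $v$ is incomparable to $u$ we route through $w$. Making this case analysis uniform --- probably by always using \Cref{lem-useful}(\ref{it-upperbound}) to get a common upper bound $w\succ u$ and noting $w\neq u$ because otherwise $v\subword u$, $v\sim_{H(u)}u$ would give $v=u$ via \Cref{lem-useful}(\ref{it-shorter}) and (\ref{it-convex}) --- is the one spot demanding care; everything else is bookkeeping with the lemmas already available.
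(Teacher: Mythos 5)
Your overall route is the paper's: the $(\geq)$ direction is read off the definition of $h$, the $(\leq)$ direction uses \Cref{lem-useful}~(\ref{it-upperbound}) to pass to a common upper bound and \Cref{lem-useful}~(\ref{it-convex}) to extract a one-letter insertion $u_1au_2$ with $u\sim u_1au_2$, and deriving \eqref{eq-hu-delta-u1-u1a-u2-au2} from \eqref{eq-hu-delta-u-u1au2} via \Cref{lem-uv-uav} is exactly what the paper does. The one point where you depart from that script is also where your argument goes wrong: the corner case $v\ssubword u$, i.e.\ the case where the upper bound $w$ furnished by \Cref{lem-useful}~(\ref{it-upperbound}) may be $u$ itself. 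You propose to dismiss it by claiming that $v\ssubword u$ together with $v\sim_{H(u)}u$ forces $v=u$ ``via \Cref{lem-useful}~(\ref{it-shorter}) and (\ref{it-convex})''. That implication is false in general (e.g.\ $a^{k}\sim_k a^{k+1}$), and neither lemma item rules the configuration out: what would rule it out is precisely the minimality of $u$ in $[u]_{H(u)}$, which you do not yet know. Your other suggestion --- walking down from $u$ to $v$ by single-letter deletions --- cannot by itself produce a word strictly \emph{above} $u$, which is what your contradiction needs.

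The configuration must be converted into an insertion, not wished away. The standard repair: from $v\ssubword u$ with $v\sim_{H(u)}u$, \Cref{lem-useful}~(\ref{it-convex}) gives some $u'\subword u$ with $|u'|=|u|-1$ and $u'\sim_{H(u)}u$; writing $u=u_1au_2$ with $u'=u_1u_2$, \Cref{lem-useful}~(\ref{it-pumping}) yields $u=u_1au_2\sim_{H(u)}u_1aau_2$, a one-letter insertion into $u$, which feeds your intended contradiction $H(u)\geq H(u)+1$. (Alternatively, \Cref{lem-useful}~(\ref{it-sing-inf}) makes $[u]_{H(u)}$ infinite, hence containing a word strictly longer than $u$, after which your main argument applies verbatim.) For what it is worth, the paper's own proof silently glosses over this same case when it writes ``we can further assume $u\subsim_k v$''; but since you explicitly isolated it as the one spot demanding care, the incorrect justification you supply for it is a genuine gap, albeit a locally repairable one.
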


\begin{proof}
We only prove \eqref{eq-hu-delta-u-u1au2} since
\eqref{eq-hu-delta-u1-u1a-u2-au2}  is just a
rewording based on \Cref{eq-uv-uav}.  \\
$(\geq)$: By definition, $h(u)>\delta(u,v)$ for any $v\neq u$, and in
particular for any $v$ of the form $v=u_1 a u_2$.  \\
$(\leq)$: Let $k=h(u)-1$. By definition of $h$, there exists some
$v\neq u$ with $u\sim_k v$. By
\Cref{lem-useful}~(\ref{it-upperbound}), we can further assume
$u\subsim_k v$, and by \Cref{lem-useful}~(\ref{it-convex}), we can
even assume that $|v|=|u|+1$, i.e., $v = u_1 a u_2$ for some $a\in A$
and some factorization $u=u_1u_2$.  Now $\delta(u,v)=\delta(u,u_1 a
u_2)\geq k$ since $u\sim_k v$.	Thus $h(u)=k+1\leq \delta(u,u_1 a
u_2)+1$ for this particular choice of $u_1,u_2$ and $a$.
\end{proof}

\subsection{Reduced words and the minimality index}
\label{ssec-rho-def}

\begin{definition}[{\cite[p.~70]{simon72}}]
Let $m>0$, a word $u \in A^{*}$ is \emph{$m$-reduced} if
$u\not\sim_m u'$ for all strict subwords $u'$ of $u$.
\end{definition}
In other words, $u$ is $m$-reduced when it is a $\subword$-minimal word in
$[u]_{m}$.
This leads to a new piecewise-based measure for words, that we call
the \emph{minimality index}:
\begin{equation}
\rho(u) \eqdef \min \{ m ~|~ u \text{ is $m$-reduced} \}
\:.
\end{equation}
In particular, $\rho(\epsilon)=0$ and more generally
$\rho(a^n)=n$ for any letter $a\in A$.

\begin{lemma}[{\cite[p.~72]{simon72}}]
\label{lem-reduced}
A non-empty word $u$ is $m$-reduced iff $r(u_1,a)+\ell(a,u_2) < m$ for all
factorizations $u=u_1 a u_2$ with $a\in A$ and $u_1,u_2\in A^*$.
\end{lemma}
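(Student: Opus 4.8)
The plan is to prove the two implications separately, in both cases translating $m$-reducedness into a statement about \emph{single-letter} deletions of $u$ via the additivity of $\delta$ from \Cref{lem-uv-uav}. Concretely, for every factorization $u=u_1au_2$ we have $\delta(u_1u_2,u_1au_2)=r(u_1,a)+\ell(a,u_2)$, so the right-hand condition of the lemma says exactly: for every one-letter deletion $u_1u_2$ of $u$ one has $u\nsubword$-trivially $u\not\sim_m u_1u_2$. With this reformulation in hand the two directions are short.

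For the forward direction, suppose $u$ is $m$-reduced. Given any factorization $u=u_1au_2$, the word $u_1u_2$ is a strict subword of $u$, hence $u\not\sim_m u_1u_2$ by definition of $m$-reduced, i.e.\ $\delta(u_1u_2,u)<m$. By \Cref{lem-uv-uav} this is $r(u_1,a)+\ell(a,u_2)<m$, as required.

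For the converse I would argue by contraposition. Assume $u$ is not $m$-reduced, so there is a strict subword $u'\ssubword u$ with $u'\sim_m u$. Fix an embedding of $u'$ into $u$; since $|u'|<|u|$, at least one letter of $u$ lies outside the image of this embedding, and deleting that letter produces a word $w$ with $u'\subword w\subword u$ and $|w|=|u|-1$, that is, $w=u_1u_2$ for some factorization $u=u_1au_2$. Applying \Cref{lem-useful}~(\ref{it-convex}) to $u'\sim_m u$ together with $u'\subword w\subword u$ yields $w\sim_m u$, hence $\delta(u_1u_2,u_1au_2)\geq m$, and by \Cref{lem-uv-uav} this gives $r(u_1,a)+\ell(a,u_2)\geq m$, contradicting the right-hand condition.

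I do not expect a genuine obstacle here: the only point that needs a sentence of care is realising an arbitrary strict subword of $u$ as sitting inside a one-letter deletion of $u$, which is immediate from the definition of $\subword$. The substance of the argument is entirely carried by the convexity of $\sim_m$ (\Cref{lem-useful}~(\ref{it-convex})) and the decomposition $\delta(u_1u_2,u_1au_2)=r(u_1,a)+\ell(a,u_2)$; it is worth flagging explicitly that the equivalence relies on the fact that testing single-letter deletions suffices, and this reduction is precisely what convexity provides.
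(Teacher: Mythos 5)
Your proof is correct, and for the substance it relies on exactly the same tool as the paper: the decomposition $\delta(u_1u_2,u_1au_2)=r(u_1,a)+\ell(a,u_2)$ from \Cref{lem-uv-uav}. The difference is one of completeness rather than of method. The paper's printed proof only establishes one implication: it assumes $r(u_1,a)+\ell(a,u_2)\geq m$ for some factorization and concludes that $u$ is not minimal in $[u]_m$, which is the contrapositive of ``$m$-reduced implies the bound''; the converse implication is left to the citation of Simon's thesis. You prove both implications, and your argument for the direction the paper omits is the right one: an arbitrary strict subword $u'$ with $u'\sim_m u$ embeds into some one-letter deletion $u_1u_2$ of $u$, \Cref{lem-useful}~(\ref{it-convex}) upgrades $u'\sim_m u$ to $u_1u_2\sim_m u$, and \Cref{lem-uv-uav} then yields $r(u_1,a)+\ell(a,u_2)\geq m$ for that factorization. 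I see no gap; the only blemish is the garbled phrase ``$u\nsubword$-trivially'' in your opening paragraph, which presumably should read ``non-trivially''.
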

\begin{proof}
Assume, by way of contradiction, that $r(u_1,a)+\ell(a,u_2)\geq m$ for
some factorization $u=u_1 a u_2$.
\Cref{lem-uv-uav} then gives $\delta(u,u_1u_2)\geq
m$, i.e.,  $u\sim_m u_1 u_2$. Finally, $u$ is not
minimal in $[u]_m$.
\end{proof}

This has an immediate corollary:
\begin{proposition} For any non-empty word $u\in A^+$
\label{prop-rho-charac}
\begin{align}
\label{eq-m-via-r&l}
\tag{P}
\rho(u) = \max_{\substack{u=v_1av_2\\ a\in A}} r(v_1,a) + \ell(a,v_2) + 1
\:.
\end{align}
\end{proposition}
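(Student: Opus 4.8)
The plan is to derive \eqref{eq-m-via-r&l} directly from \Cref{lem-reduced} by unwinding the definition of $\rho$. First I would observe that $\rho(u)$, as defined, is the least $m>0$ such that $u$ is $m$-reduced, and that the property ``$u$ is $m$-reduced'' is monotone in $m$: if $u$ is $m$-reduced and $m\le m'$, then $u$ is $m'$-reduced, since $u'\sim_{m'} u$ implies $u'\sim_m u$ by the refinement hierarchy. Hence $\rho(u)$ is well-defined as a minimum (the set is upward closed and nonempty, as $u$ is certainly $|u|$-reduced) and equals $\min\{m>0 \mid u \text{ is } m\text{-reduced}\}$.

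Next I would apply \Cref{lem-reduced}: $u$ is $m$-reduced if and only if $r(u_1,a)+\ell(a,u_2) < m$ for every factorization $u=u_1 a u_2$. Writing $M \eqdef \max_{u=v_1av_2,\,a\in A} \bigl(r(v_1,a)+\ell(a,v_2)\bigr)$ — a finite maximum over a nonempty finite set of factorizations since $u\in A^+$ — the condition ``$r(u_1,a)+\ell(a,u_2) < m$ for all factorizations'' is equivalent to $M < m$, i.e.\ $m \ge M+1$. Therefore the least such $m$ is exactly $M+1$, which is precisely the right-hand side of \eqref{eq-m-via-r&l}. One should double-check the edge case $M = 0$: then $\rho(u) = 1$, and indeed a word with $M=0$ has $r(v_1,a)=\ell(a,v_2)=0$ for every split, meaning no single letter can be pumped down, consistent with $u$ being $1$-reduced (e.g.\ a word whose letters are all distinct, like a permutation, or $a^1$).

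I do not expect a serious obstacle here: the proposition is labelled ``an immediate corollary'' and the real content is already in \Cref{lem-reduced} and \Cref{lem-uv-uav}. The only point requiring a little care is making the quantifier manipulation rigorous — turning the universally quantified strict inequality into a single inequality against the maximum — and noting that the maximum is over a nonempty set precisely because $u$ is nonempty (for $u=\epsilon$ there is no factorization $u=v_1 a v_2$, which is why the statement is restricted to $u\in A^+$ and $\rho(\epsilon)=0$ is handled separately). Everything else is bookkeeping.
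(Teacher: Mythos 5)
Your proposal is correct and follows exactly the route the paper intends: the paper states the proposition as an immediate corollary of \Cref{lem-reduced}, and your unwinding of the definition of $\rho$ (monotonicity of $m$-reducedness in $m$, plus converting the universally quantified strict inequality into a single comparison against the maximum $M$, which is taken over a nonempty set because $u\in A^+$) is precisely the bookkeeping the paper omits. Nothing further is needed.
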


Note the difference between
\Cref{eq-hu-delta-u1-u1a-u2-au2,eq-m-via-r&l}: $h(u)$ can be computed
by looking at all ways one would insert a letter $a$ inside $u$ while,
for $\rho(u)$, one is looking at all ways one could remove some letter
from $u$.

The similarity between \Cref{eq-hu-delta-u1-u1a-u2-au2,eq-m-via-r&l}
is what prompted the inclusion of $\rho(u)$ in our investigation. We
should, however, note an important difference between the two
measures: it is very natural to see $h$ as a measure for languages
(indeed, this is how it was first defined~\cite{KS-lmcs2019}),
while $\rho$ only makes sense when applied to individual words.

\begin{remark}
Formally, the definitions of $h(u)$ and $\rho(u)$ depend on the
underlying alphabet $A$ and we should more precisely write
$h_A(u)$ and $\rho_A(u)$. However, the only situation where the values
of $h_A(u)$
and $\rho_A(u)$
is influenced by $A$ is when $u=\epsilon$ for which we have
$h_\emptyset(\epsilon)=0$ while $h_A(\epsilon)=1$ for non-empty $A$
(no differences for $\rho)$. This has potential impact on several of
our equations and for this reason we assumed that $A$ is non-empty
throughout the article. With this assumption, no confusion is created
when writing $h(u)$ and $\rho(u)$ without any subscript.
\qed
\end{remark}

\subsection{Fundamental properties of side distances}

The characterisations given in
\Cref{prop-hu-delta-u-u1au2,prop-rho-charac} suggest that
computing $h(u)$ and $\rho(u)$ reduces to computing the $r$ and $\ell$
side distance functions on prefixes and suffixes of $u$. This will be
confirmed in \Cref{sec-algo-h-and-rho}.

For this reason, we now prove some useful combinatorial results on $r$
and $\ell$. They will be essential for proving more general properties of $h$
and $\rho$ in the rest of this section, and in the analysis of algorithms
in the next section.

The following  lemma provides a recursive way of computing
$r(u,t)$.
\begin{lemma}[{\cite[p.~71--72]{simon72}}]
For any $u,t\in A^*$ and $a\in A$:
\label{lem-sld-comp}
\begin{align}
\label{eq-sld-t}
\tag{R1}
r(u,t) &= \min \bigl\{r(u, a) ~|~ a\in\alphabet(t)\bigr\} \:,
\\
\label{eq-sld-0}
\tag{R2}
r(u,a) &= 0 \text{ if $a$ does not occur in $u$} \:,
\\
\label{eq-sld-u'}
\tag{R3}
r(u,a) &= 1 + r(u',a u'') \text{ if $u=u' a u''$ with
$a\not\in\alphabet(u'')$} \:.
\end{align}
\end{lemma}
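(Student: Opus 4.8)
The plan is to establish the three equations in turn, relying on the definitions $r(u,t)=\delta(u,ut)=\max\{k\mid u\sim_k ut\}$ and on the characterisation of $u\sim_k uv$ given in \Cref{lem-useful}~(\ref{it-carac-richn}).

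First I would prove \eqref{eq-sld-t}. For any $k>0$, \Cref{lem-useful}~(\ref{it-carac-richn}) says $u\sim_k ut$ iff $u$ admits a factorization $u=u_1\cdots u_k$ with $\alphabet(u_1)\supseteq\cdots\supseteq\alphabet(u_k)\supseteq\alphabet(t)$. Since the only role $t$ plays in this condition is through $\alphabet(t)$, and since $\alphabet(t)\subseteq\alphabet(u_k)$ holds iff $a\in\alphabet(u_k)$ for every $a\in\alphabet(t)$, we get $u\sim_k ut$ iff $u\sim_k ua$ for every $a\in\alphabet(t)$. Taking the largest such $k$ on both sides (and checking the boundary case $k=0$, which holds vacuously), this yields $r(u,t)=\min\{r(u,a)\mid a\in\alphabet(t)\}$. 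If $t=\epsilon$ the min is over the empty set, i.e.\ $+\infty$, which is consistent with $r(u,\epsilon)=\delta(u,u)=\infty$.

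Next, \eqref{eq-sld-0} is immediate: if $a\notin\alphabet(u)$ then $a\subword ua$ but $a\nsubword u$, so $a$ is a distinguisher of length $1$ between $u$ and $ua$, hence $\delta(u,ua)=|a|-1=0$ by \eqref{eq-delta-distinguisher-2}. For \eqref{eq-sld-u'}, write $u=u'au''$ with $a\notin\alphabet(u'')$ (so this picks out the \emph{last} occurrence of $a$ in $u$). The claim $r(u,a)=1+r(u',au'')$ should follow again from \Cref{lem-useful}~(\ref{it-carac-richn}): a factorization of $u$ into $k$ blocks each containing $a$ must have its last block containing the final $a$, so that block is of the form $u_k=w a u''$ for some suffix $w$ of $u'$ with $a\notin\alphabet(u'')$; peeling off $au''$ and setting $u_k':=w$, the blocks $u_1,\ldots,u_{k-1},u_k'$ give a factorization of $u'$ into $k-1$ blocks, each of alphabet $\supseteq\alphabet(u_k)\supseteq\alphabet(u_k')\cup\{a\}\supseteq\alphabet(au'')$. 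Conversely, any factorization of $u'$ into $k-1$ blocks whose alphabets contain $\alphabet(au'')$ extends, by appending $au''$ to the last block, to a valid $k$-block factorization of $u$ for the word $a$. This gives the bijection $u\sim_{k}ua \iff u'\sim_{k-1}u'au''$ for all $k\ge 1$, and taking maxima gives \eqref{eq-sld-u'}; one also checks the degenerate case $u'=\epsilon$, where both sides are $1$ (since $r(\epsilon,au'')=0$).

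The main obstacle is getting the block-manipulation in \eqref{eq-sld-u'} exactly right in both directions, in particular handling the edge cases (when a block is empty, when $u'$ itself contains no $a$, or when $k=1$) and making sure the alphabet inclusions are preserved when $au''$ is moved between the word being tested and the last block. Everything else is a straightforward unwinding of the definitions of $\delta$, $r$, and $\sim_k$, together with \Cref{lem-useful}~(\ref{it-carac-richn}).
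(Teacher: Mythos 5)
Your route is genuinely different from the paper's: the paper proves all three identities by manipulating \emph{shortest distinguishers} directly (via \eqref{eq-delta-distinguisher-2}), whereas you derive \eqref{eq-sld-t} and \eqref{eq-sld-u'} from the factorization characterisation of $u\sim_k uv$ in \Cref{lem-useful}~(\ref{it-carac-richn}). That is a legitimate strategy and arguably more structural, but as written it has two concrete problems.

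First, in \eqref{eq-sld-t} the direction $r(u,t)\geq\min_a r(u,a)$ hides a quantifier exchange. Knowing $u\sim_k ua$ for every $a\in\alphabet(t)$ gives you, \emph{for each} $a$, some factorization whose last block contains $a$; it does not directly give one factorization whose last block contains all of $\alphabet(t)$, which is what \Cref{lem-useful}~(\ref{it-carac-richn}) requires for $u\sim_k ut$. This is fixable --- e.g.\ by induction on $|t|$ using that $\sim_k$ is a congruence: from $u\sim_k ut'$ one gets $ua\sim_k ut'a$, and combined with $u\sim_k ua$ this yields $u\sim_k ut'a$ --- but that step must be supplied; it is not a consequence of ``$t$ only enters through $\alphabet(t)$''. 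Second, in \eqref{eq-sld-u'} the block bookkeeping is off: $u_1,\ldots,u_{k-1},u'_k$ is a factorization of $u'$ into $k$ blocks, not $k-1$, and the truncated block $u'_k=w$ need \emph{not} satisfy $\alphabet(w)\supseteq\alphabet(u_k)$ (it may even be empty). The correct move is to absorb $w$ into the penultimate block, i.e.\ use $u'=u_1\cdots u_{k-2}(u_{k-1}w)$; the inclusions $\alphabet(u_{k-2})\supseteq\alphabet(u_{k-1})\supseteq\alphabet(u_k)\supseteq\alphabet(w)$ then show this is a valid $(k-1)$-block factorization witnessing $u'\sim_{k-1}u'au''$, and the converse direction (appending $au''$ to the last block) goes through as you describe. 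With these two repairs your proof works; the paper's distinguisher-based argument avoids both issues at the cost of some case analysis on where a shortest distinguisher first fails to embed.
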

\begin{proof}
\eqref{eq-sld-t}: Since $r(u,\epsilon)=\delta(u,u)=\infty=\min\emptyset$,
the statement holds when $t=\epsilon$.	We now assume $t\neq
\epsilon$ and $r(u,t)\in\Nat$.

Pick $a$ occurring in $t$ and let $k=1+r(u,a)$. By
\Cref{eq-delta-distinguisher-2}, there exists $s$ of length $k$ such
that $s\nsubword u$ and $s\subword u a$. Hence $s\subword u t$, and $s$
is a distinguisher for $u$ and $u t$. With Eq.~\eqref{eq-delta-distinguisher-2} we deduce $r(u,
t)\leq |s|-1=r(u,a)$. Since this
holds for any $a$ in $t$, we conclude
$r(u,t)\leq\min_{a\in\alphabet(t)}r(u,a)$.

For the ``$\geq$'' direction, let $k=1+r(u, t)$ and pick a
distinguisher $s$ of length $k$ such that $s\nsubword u$ and
$s\subword u t$.  Write $s=s_1a s_2$ with $s_1$ the longest prefix of
$s$ that is a subword of $u$, $a$ the first letter after $s_1$, and
$s_2$ the rest of $s$. Now $a s_2\subword t$ so that
$a\in\alphabet(t)$. Since $s_1a\nsubword u$, we deduce $r(u, a)\leq
|s_1a|-1\leq |s|=1+r(u, t)$. We have then found some $a\in\alphabet(t)$
with $r(u,t)\geq r(u,a)$.  \\

\eqref{eq-sld-0}:
If $a$ does not occur in $u$, it is a distinguisher with $u a$
hence $r(u,a)=0$ by \Cref{eq-delta-distinguisher-2}.
\\

\eqref{eq-sld-u'}:
Assume $|u|_a>0$ and write $u=u' a u''$ with $|u''|_a=0$.

Let $k=1+r(u',a u'')$. By \Cref{eq-delta-distinguisher-2} there exists
a distinguisher $s$ of length $k$ with $s\nsubword u'$ and $s\subword
u' a u''$, further entailing $s a\nsubword u' a u''=u$ and $s a\subword
u' a u'' a=u a$, i.e., $s a$ is a distinguisher for $u$ and $u a$.  We
deduce $r(u,a)\leq |s a|-1=k=1+r(u',a u'')$, proving the ``$\leq$''
direction of \eqref{eq-sld-u'}.

For the other direction, let $k=r(u,a)$. Then there is a distinguisher
$s$ of length $k+1$ with $s\nsubword u$ and $s\subword
u a$. Necessarily $s$ is some $ta$, with $t\subword u$. From
$ta\nsubword u$, we deduce $t\nsubword u'$ and then $t$ distinguishes
between $u'$ and $u=u' a u''$. Then $r(u',a u'')\leq |t|-1\leq k-1$,
i.e., $1+r(u',a u'')\leq k=r(u,a)$.
\end{proof}

\begin{corollary}
\label{coro-monot-subalphabet}
\label{coro-dr-a-notin-v}
For any $u,v,t,t'\in A^*$ and $a\in A$
\begin{align}
\label{eq-monot-subalphabet}
\alphabet(t)\subseteq \alphabet(t')
&\implies
r(u,t)\geq r(u,t') \text{ and }
\ell(t,u)\geq \ell(t',u)
\:,
\\
\label{eq-dr-a-notin-v}
a\not\in\alphabet(v)
&\implies r(u v,a) \leq r(u,a) \text{ and }
\ell(a,v u) \leq \ell(a,u)
\:.
\end{align}
\end{corollary}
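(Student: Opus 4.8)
The plan is to derive both statements from \Cref{lem-sld-comp}, working with $r$ only and obtaining the corresponding facts for $\ell$ by the mirror reasoning mentioned in the discussion of \eqref{eq-def-r}--\eqref{eq-def-l}.

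First I would prove \eqref{eq-monot-subalphabet}. By \eqref{eq-sld-t} we have $r(u,t)=\min\{r(u,a)\mid a\in\alphabet(t)\}$ and $r(u,t')=\min\{r(u,a)\mid a\in\alphabet(t')\}$, with the convention $\min\emptyset=\infty$ (which also handles $t=\epsilon$, where $\alphabet(t)=\emptyset\subseteq\alphabet(t')$). Since $\alphabet(t)\subseteq\alphabet(t')$, the second minimum is taken over a superset of the first, so $r(u,t')\leq r(u,t)$. The claim for $\ell$ is the mirror statement.

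Next, for \eqref{eq-dr-a-notin-v}, assume $a\notin\alphabet(v)$ and split on whether $a$ occurs in $u$. If $a\notin\alphabet(u)$, then $a\notin\alphabet(uv)$ as well, so \eqref{eq-sld-0} gives $r(uv,a)=0=r(u,a)$ and the inequality holds. Otherwise write $u=u'\,a\,u''$ with $|u''|_a=0$; then $uv=u'\,a\,(u''v)$ and $a\notin\alphabet(u''v)$ because $a$ occurs neither in $u''$ nor in $v$. Applying \eqref{eq-sld-u'} twice yields $r(uv,a)=1+r(u',a\,u''v)$ and $r(u,a)=1+r(u',a\,u'')$. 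Since $\alphabet(a\,u'')\subseteq\alphabet(a\,u''v)$, the already-proved \eqref{eq-monot-subalphabet} gives $r(u',a\,u''v)\leq r(u',a\,u'')$, and hence $r(uv,a)\leq r(u,a)$. Once more, the statement for $\ell$ follows by the mirror argument.

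The argument is entirely routine once \Cref{lem-sld-comp} is available; the only points that need a little attention are the two degenerate cases ($t=\epsilon$ in the first part, and $a$ not occurring in $u$ in the second), which are absorbed by the convention $\min\emptyset=\infty$ and by \eqref{eq-sld-0} respectively. I do not foresee any genuine obstacle.
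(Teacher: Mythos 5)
Your proof is correct and follows essentially the same route as the paper: \eqref{eq-monot-subalphabet} read off from \eqref{eq-sld-t} as a minimum over a superset, and \eqref{eq-dr-a-notin-v} by the same two-case split (using \eqref{eq-sld-0} when $a\notin\alphabet(u)$, and \eqref{eq-sld-u'} together with the already-established \eqref{eq-monot-subalphabet} otherwise), with the $\ell$ statements obtained by mirroring. No issues.
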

\begin{proof}
We only prove the claims on $r$ since the claim on $\ell$ can be
derived by mirroring.
The implication in~\eqref{eq-monot-subalphabet} is direct
from~\eqref{eq-sld-t}.
For the proof of \eqref{eq-dr-a-notin-v}, the assumption is that $a$
does not occur in $v$ and we consider two cases:

(i)
if  $a$ does not occur in $u$, we have
$r(u v,a) \eqby{\eqref{eq-sld-0}} 0
\eqby{\eqref{eq-sld-0}}r(u,a)$.

(ii) if $a$ occurs in $u$, we write $u=u' a u''$ with $a$ not
occurring in $u''$, so that $r(u v,a)
\eqby{\eqref{eq-sld-u'}}
1+r(u',a u''v)
\leqby{\eqref{eq-monot-subalphabet}}
1+r(u',a u'')
\eqby{\eqref{eq-sld-u'}}
r(u,a)$.
\end{proof}

Compared with \Cref{lem-sld-comp} used by Simon, our next Lemma provides  an alternation
version that leads to simple algorithms in later sections, and that is
very handy in proofs by induction on $u$.
\begin{lemma}
\label{lem-new-r-rule*}
For all $u\in \Sigma^*$ and any letters $a,b\in \Sigma$:
\begin{equation}
\label{eq-new-r-rules*}
\tag{R4}
\begin{aligned}
r(\epsilon,a) & = 0
\\
r(ub,a)	      & =
	      \begin{cases}
		r(u,a) + 1		   &\text{if $a=b,$}
		\\
		\min\bigl(r(u,b)+1,\: r(u,a)\bigr) &\text{if $a\neq b.$}
	      \end{cases}
\end{aligned}
\end{equation}
\end{lemma}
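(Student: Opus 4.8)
The plan is to derive \eqref{eq-new-r-rules*} directly from Simon's recursive characterisation in \Cref{lem-sld-comp}, together with the monotonicity facts collected in \Cref{coro-dr-a-notin-v}. The base case $r(\epsilon,a)=0$ is immediate from \eqref{eq-sld-0}, since $a$ does not occur in $\epsilon$. For the inductive rule we fix $u\in\Sigma^*$ and $a,b\in\Sigma$ and analyse $r(ub,a)$ by cases on whether $a$ occurs in $ub$ and whether $a=b$.

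First I would treat the case $a=b$. Then $a$ occurs in $ub$, and writing $ub = u\cdot a$ is precisely a factorization of the form $u'au''$ with $u''=\epsilon$ and $a\notin\alphabet(u'')$, so \eqref{eq-sld-u'} gives $r(ub,a) = r(ua,a) = 1 + r(u, a)$, which is the claimed value (note $a\epsilon = a$ here, so the recursive argument is literally $r(u,a)$). Next, the case $a\neq b$: here the last letter of $ub$ is not $a$. If $a$ does not occur in $u$ either, then $a\notin\alphabet(ub)$, so \eqref{eq-sld-0} gives $r(ub,a)=0$; on the right-hand side, $r(u,a)=0$ as well and $r(u,b)+1\geq 1 > 0$, so the minimum is $0$ and both sides agree. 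If $a$ does occur in $u$, write $u = u'au''$ with $a\notin\alphabet(u'')$; then also $ub = u'a(u''b)$ with $a\notin\alphabet(u''b)$ since $b\neq a$, so \eqref{eq-sld-u'} gives $r(ub,a) = 1 + r(u', au''b)$ and likewise $r(u,a) = 1 + r(u', au'')$.

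**The main work** is then to show $1 + r(u', au''b) = \min\bigl(r(u,b)+1,\ r(u,a)\bigr)$ in this last subcase. By \eqref{eq-sld-t}, $r(u', au''b) = \min\{\,r(u',c) \mid c\in\alphabet(au''b)\,\} = \min\bigl(r(u', au''),\ r(u',b)\bigr)$, splitting off the letter $b$ from the alphabet $\alphabet(au'')\cup\{b\}$. The first term gives $1 + r(u',au'') = r(u,a)$ as noted. For the second term I claim $1 + r(u',b) \geq \min\bigl(r(u,b)+1, r(u,a)\bigr)$ and, combined with the first term, that the overall minimum is exactly $\min(r(u,b)+1, r(u,a))$; here I would use that $u = u'au''$ with $a\notin\alphabet(u'')$, so by \eqref{eq-dr-a-notin-v} (applied to the suffix $u''$, which does not contain $a$ — wait, we need $b\notin\alphabet(u'')$, which need not hold) one relates $r(u',b)$ and $r(u,b) = r(u'au'',b)$. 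The cleaner route is to avoid peeling $u$ apart this way and instead argue by a direct induction on $|u|$ using only \eqref{eq-sld-t} and the already-established $a=b$ case: namely, if $u = u_0 c$ then $r(u',b)$-type terms reduce via the inductive hypothesis, and $r(u'au'',b) = \min(r(u',b), r(au'', b)\text{-chain})$ unwinds to the same expression. The potential obstacle is bookkeeping: matching the nested minima from \eqref{eq-sld-t} against the two-branch formula without double-counting or sign slips. The safest presentation is a clean induction on $|u|$: base case $u=\epsilon$ is the first displayed line; for the step write $u = u_0 c$, apply the induction hypothesis to $r(u_0,\cdot)$, and check the four combinations of ($a=b$ or not) $\times$ ($a=c$ or not), each of which collapses to the stated minimum after one application of \eqref{eq-new-r-rules*} to $u_0$ and elementary arithmetic on $\min$. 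I expect this induction to close without surprises, the only delicate point being the subcase $a\neq b$, $a\neq c$, $b=c$, where one must verify $\min(r(u_0,c)+1, r(u_0,a)) + 0$-type manipulations reconcile with $\min(r(u_0c,b)+1, r(u_0c,a)) = \min(r(u_0,b\!=\!c)+1+1\ \text{or}\ldots)$ — I would handle this by expanding both sides via the $a=b$ rule and \eqref{eq-sld-t}.
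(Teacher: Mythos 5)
Your proposal does not close. The decisive case is $a\neq b$ with $a$ occurring in $u$: writing $u=u'au''$ with $a\notin\alphabet(u'')$, the peeling via \eqref{eq-sld-u'} and \eqref{eq-sld-t} correctly yields $r(ub,a)=\min\bigl(r(u,a),\:r(u',b)+1\bigr)$, but the target is $\min\bigl(r(u,a),\:r(u,b)+1\bigr)$, so you still owe the identity $\min\bigl(r(u,a),\:r(u',b)+1\bigr)=\min\bigl(r(u,a),\:r(u'au'',b)+1\bigr)$. You notice the obstacle yourself, but neither escape route repairs it. \Cref{coro-dr-a-notin-v} is inapplicable when $b\in\alphabet(u'')$, and even when $b\notin\alphabet(au'')$ it gives only the one-sided $r(u,b)\leq r(u',b)$, which can be strict (cf.\ the paper's example $r(aa,a)=2>r(aab,a)=1$), so it does not by itself show the two minima coincide. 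The fallback ``direct induction on $|u|$'' is only sketched: the induction hypothesis for $u_0$ tells you about $r(u_0x,y)$ for single appended letters, whereas the quantity to evaluate in the step is $r(u_0cb,a)$, i.e.\ $u_0$ extended by \emph{two} letters; you are thrown back on \eqref{eq-sld-t}--\eqref{eq-sld-u'} and the same unproved min-identity. The ``delicate subcase'' you flag at the end is exactly where this resurfaces, and ``I would handle this by expanding both sides'' is not an argument.

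The paper avoids the issue entirely by arguing on shortest distinguishers via \eqref{eq-delta-distinguisher-2}: for $(\geq)$ it takes a shortest distinguisher $s=ta$ between $ub$ and $uba$ and splits on whether $t\subword u$ (then $s$ separates $u$ from $ua$, so $r(u,a)\leq|s|$) or $t\nsubword u$ (then $t$ separates $u$ from $ub$, so $r(u,b)\leq|s|-1$); for $(\leq)$ it upgrades shortest distinguishers for $(u,ua)$ and for $(u,ub)$ into distinguishers for $(ub,uba)$. If you want to keep your algebraic route, you must supply an honest proof of the identity $\min\bigl(r(u',au''),\:r(u',b)\bigr)=\min\bigl(r(u',au''),\:r(u'au'',b)\bigr)$; as it stands, that is precisely the content of the lemma, and it is missing.
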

\begin{proof}
First note that $r(\epsilon,a)=0$ and $r(u a,a)=r(u,a)+1$ are special
cases of Equations \eqref{eq-sld-0} and \eqref{eq-sld-u'},
respectively. Now there only remains to consider $r(u b,a)$ assuming
$a\neq b$.

\noindent
($\geq$): Following \Cref{eq-delta-distinguisher-2,eq-def-r}, write
$r(u b,a) = \delta(u b,u b a) = |s|$, where $s$ is a shortest
distinguisher between $u b$ and $u b a$. Hence $s$ is some $t a$ with
$t\subword u b$ and $s\not\subword u b$. If $t\subword u$, then
$s\subword u a$ and $s\not\subword u$, so $s$ is a distinguisher
between $u$ and $u a$, thus $r(u ,a)=\delta(u,u
a)\leqby{\eqref{eq-delta-distinguisher-2}} |s|$. If, on the other
hand, $t\not\subword u$, then $t$ is a distinguisher between $u$ and
$ub$, so $r(u,b)=\delta(u,u b)\leqby{\eqref{eq-delta-distinguisher-2}}
|s|-1$. In both cases we deduce $r(u b,a)=|s|\geq
\min\bigl(r(u,b)+1,r(u,a)\bigr)$.

\noindent
($\leq$): Let $r(u,a)=|s|$, where $s$ is a shortest distinguisher
between $u$ and $u a$: necessarily $s$ is some $t a$ with $t\subword
u$ and $s\not\subword u$. Now $s=t a\not\subword u b$ since $a\neq b$.
With $s\subword u a\subword u b a$, we see that $s$ is a distinguisher
between $u b$ and $u b a$. We deduce $r(u b,a)\leq |s|=r(u,a)$.

If now we let $r(u,b)=|s|$, where $s$ is a shortest distinguisher
between $u$ and $u b$, then $s\not\subword u$, so $s a\not\subword u
b$. Since, on the other hand, $s a\subword u b a$, we see that $s a$ is
a distinguisher between $u b$ and $u b a$. We deduce $r(u b,a)\leq
|s|+1=r(u,b)+1$.

Joining the two deductions above yields $r(u b,a)\leq
\min\bigl(r(u,b)+1,r(u,a)\bigr)$ and concludes the proof.
\end{proof}

\begin{lemma}[Monotonicity of $r$ and $\ell$]
\label{lem-sld-mono}
For all $u,v,t\in A^*$
\begin{xalignat}{2}
\label{eq-sld-mono}
r(v,t)&\leq r(u v,t)
\:,
&
\ell(t,u)&\leq \ell(t,u v)
\:.
\end{xalignat}
\end{lemma}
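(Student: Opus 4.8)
The plan is to obtain \eqref{eq-sld-mono} as an immediate consequence of the fact, recalled just before \Cref{lem-useful}, that each $\sim_k$ is a congruence. Concretely, for the inequality on $r$ I will establish the set inclusion
\[
\{k\in\Nat \mid v\sim_k vt\}\ \subseteq\ \{k\in\Nat \mid uv\sim_k uvt\}\,,
\]
which yields $r(v,t)=\delta(v,vt)\leq\delta(uv,uvt)=r(uv,t)$ by taking maxima (reading a set of indices that is unbounded as the value $\infty$, in accordance with \eqref{eq-def-r} and the definition of $\delta$).

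To establish the inclusion I fix $k$ with $v\sim_k vt$. Using reflexivity $u\sim_k u$ together with the congruence property ``$u\sim_k v$ and $u'\sim_k v'$ imply $uu'\sim_k vv'$'', I get $uv\sim_k u(vt)$, and since $u(vt)=(uv)t$ this is exactly $uv\sim_k (uv)t$, i.e.\ $k$ lies in the right-hand set. The only point needing a separate word is the degenerate case $r(v,t)=\infty$: then $v\sim_k vt$ for all $k$, hence $v=vt$, hence $t=\epsilon$, hence $uv=uvt$ and $r(uv,t)=\infty$ as well --- so the inequality still holds (this is also what the inclusion argument gives once $\max\Nat=\infty$ is granted).

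The statement for $\ell$ follows by the mirror reasoning: from $\ell(t,u)=\delta(tu,u)=\max\{k \mid tu\sim_k u\}$, whenever $tu\sim_k u$ one right-multiplies by $v$ (again combining the congruence property with reflexivity $v\sim_k v$) to obtain $tuv\sim_k uv$, whence $\ell(t,uv)=\delta(tuv,uv)\geq k=\ell(t,u)$.

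I do not expect a genuine obstacle here; the argument is essentially one line of congruence reasoning. The two things to be careful about are purely bookkeeping: handling the $\infty$-values (equivalently, the empty-$t$ case), and applying the congruence on the correct side --- on the \emph{left} for $r$ (since $r(\cdot,t)$ appends $t$ on the right while we extend the first argument on the left) and on the \emph{right} for $\ell$.
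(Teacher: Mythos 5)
Your proof is correct, and it takes a genuinely different and more direct route than the paper. The paper proves \eqref{eq-sld-mono} by a double induction on $|v|$ and then $|t|$, running through the recursion rules \eqref{eq-sld-0}, \eqref{eq-sld-u'} and \eqref{eq-sld-t} of \Cref{lem-sld-comp} with a case analysis on whether $t$ is empty, a letter absent from $v$, a letter occurring in $v$, or longer. You instead observe that $v\sim_k vt$ together with $u\sim_k u$ yields $uv\sim_k uvt$ by the congruence property of $\sim_k$ stated in \Cref{sec-basics}, giving the inclusion of index sets and hence the inequality of their maxima; the mirror argument (right-multiplication) handles $\ell$. Your handling of the degenerate $\infty$ case is also sound, since $\{k\mid v\sim_k vt\}$ is downward closed (the hierarchy ${\sim_0}\supseteq{\sim_1}\supseteq\cdots$), so it is either a finite initial segment or all of $\Nat$, and set inclusion transfers to the maxima in either case. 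What the paper's longer route buys is a demonstration of the recursion machinery that drives the algorithms of \Cref{sec-algo-h-and-rho} and a template reused in the proofs of \eqref{eq-mono-insert-1a} and \eqref{eq-sld-extend3}, where no one-line congruence argument is available; your argument buys brevity and makes transparent that monotonicity of $r$ in its first argument is nothing more than $\sim_k$ being a congruence. Both are valid proofs of the statement.
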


{\let\endproof\doendproof
\begin{proof}
We prove the left-side inequality by induction on $|v|$ and then on
$|t|$, the right-side inequality being derived by mirroring. We
consider several cases:

(i) if $t=\epsilon$ then $r(v,t) = \infty = r(u v,t)$.

(ii) if $t$ is a letter not occurring in $v$, then
$r(v,t)\eqby{\eqref{eq-sld-0}}0\leq r(u v,t)$.

(iii) if $t=a$ is a letter occurring in $v$, we write $v$ under the
form $v=v' av''$ with $a$ not occurring in $v''$ and derive
\begin{gather*}
r(v,t)
\eqby{\eqref{eq-sld-u'}} 1+r(v',av'') \leqby{\text{i.h.}}
1+r(u v',av'') \eqby{\eqref{eq-sld-u'}} r(u v,t)
\:.
\end{gather*}

(iv) if $|t|>1$ then
\begin{gather}
\tag*{\qed}
r(v,t) \eqby{\eqref{eq-sld-t}}
\min_{a\in t}r(v,a) \leqby{\text{i.h.}}
\min_{a\in t}r(u v,a) \eqby{\eqref{eq-sld-t}}
r(u v,t)
\:.
\end{gather}
\end{proof}
}

Observe that the monotonicity of $r(v,t)$ with regards to $v$ only
holds when we extend $v$ to its left. And indeed $r(u,a)$
can be strictly larger than $r(u v,a)$, e.g., $r(a a,a) = 2 > r(a a
b,a) = 1$.

There is an upper bound on how much $r(v,t)$ and $\ell(t,u)$ may increase
when one add letters in $v$ or $u$  (at any position):
\begin{lemma}
For any $u,v\in A^*$ and $a,b\in A$
\begin{xalignat}{2}
\label{eq-mono-insert-1a}
r(u a v,b)&\leq 1+r(u v,b) \:,
&
\ell(b,u a v)&\leq 1+\ell(b,u v)  \:.
\end{xalignat}
\end{lemma}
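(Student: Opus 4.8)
The plan is to prove the left-hand inequality $r(uav,b)\le 1+r(uv,b)$ by induction on $|v|$, with the alternation rule \eqref{eq-new-r-rules*} doing essentially all the work; the companion inequality for $\ell$ then follows by the usual mirror argument.

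First I would dispose of the base case $v=\epsilon$ by unfolding $r(ua,b)$ via \eqref{eq-new-r-rules*}: when $a=b$ it equals $r(u,b)+1\le 1+r(u,b)$, and when $a\neq b$ it equals $\min\bigl(r(u,a)+1,\,r(u,b)\bigr)\le r(u,b)\le 1+r(u,b)$. The same computation in fact establishes the handy auxiliary bound $r(wc,b)\le 1+r(w,b)$ for every $w\in A^*$ and $c\in A$, which I would record separately.

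For the inductive step I would write $v=v'c$ and split on whether $c=b$. If $c=b$, rule \eqref{eq-new-r-rules*} gives $r(uav'b,b)=r(uav',b)+1$ and $r(uv'b,b)=r(uv',b)+1$, so the goal is exactly $1$ plus the induction hypothesis $r(uav',b)\le 1+r(uv',b)$. If $c\neq b$, the rule gives $r(uav'c,b)=\min\bigl(r(uav',c)+1,\,r(uav',b)\bigr)$ and $r(uv'c,b)=\min\bigl(r(uv',c)+1,\,r(uv',b)\bigr)$; applying the induction hypothesis once with target letter $c$ and once with target letter $b$ bounds the first $\min$-argument by $1$ plus the second and the second by $1$ plus the second, and monotonicity of $\min$ then yields $r(uav'c,b)\le 1+r(uv'c,b)$.

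The step I expect to be the real subtlety — and the reason this is an induction on $|v|$ rather than an immediate consequence of the monotonicity of $r$ (\Cref{lem-sld-mono}) together with the auxiliary bound above — is that $r(uv'c,b)$ can be \emph{strictly smaller} than $r(uv',b)$ (witness $r(aa,a)=2>r(aab,a)=1$), so the naive estimate $r(uav'c,b)\le 1+r(uav',b)\le 2+r(uv',b)$ fails to reach $1+r(uv'c,b)$. The remedy is to push both $uav'$ and the reference word $uv'$ through the same instance of \eqref{eq-new-r-rules*} in lockstep, which is exactly what forces the induction hypothesis to be invoked at the trailing letter $c$ as well as at $b$; since the statement is already universally quantified over $b$, this costs nothing.
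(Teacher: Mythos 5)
Your proof is correct and follows essentially the same route as the paper's: induction on $|v|$, with the base case and both branches of the inductive step handled by unfolding \eqref{eq-new-r-rules*} and applying the induction hypothesis at both the trailing letter and at $b$, then mirroring for $\ell$. The only difference is expository (you make explicit the case split in the base case and the reason the naive monotonicity argument fails), which the paper leaves implicit.
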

\begin{proof}
We prove the first inequality by induction on the length of $v$, the second
inequality being derived by mirroring. For the base case, where
$v=\epsilon$, we have
\[
r(u a v,b) = r(u a,b)\leqby{\eqref{eq-new-r-rules*}} 1+r(u,b) = 1+r(u v,b)\:.
\]
For the inductive case, assume	$v=v' a'$ with $a'\in A$.
If $b=a'$, we have
\begin{align*}
r(u a v' b,b)\eqby{\eqref{eq-new-r-rules*}}
1+r(u a v',b)\leqby{\text{i.h.}}
2+r(u v',b)\eqby{\eqref{eq-new-r-rules*}}
1+r(u v' b,b) = 1+r(u v,b)
\:.
\end{align*}
If, on the other hand, $b\neq a'$, we have
\begin{align*}
&r(u  a v' a', b) \eqby{\eqref{eq-new-r-rules*}}
\min\left\{\! \begin{array}{c}
	1+r(u a v',b)\\ r(u a v',a')
	      \end{array}\!\right\}
\leqby{\text{i.h.}}
1+\min\left\{\! \begin{array}{c}
	1+r(u v',b)\\ r(u v',a')
	      \end{array}\!\right\}
\eqby{\eqref{eq-new-r-rules*}} 1+r(u v' a',b)\:.
\end{align*}
In both cases, we conclude $r(u a v,b)=r(u a v' a',b)\leq 1+r(u v'
a',b)=1+r(u v,b)$ as requested.
\end{proof}

\subsection{Relating $h$ and $\rho$}

Using side distances it is possible to compare $h$ and $\rho$.
\begin{theorem}
\label{thm-h-m}
$h(u) \geq 1+ \rho(u)$ for any word $u$.
\end{theorem}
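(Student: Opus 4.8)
The plan is to compare the two characterisations in \Cref{eq-hu-delta-u1-u1a-u2-au2} and \Cref{eq-m-via-r&l} directly. Note that both are maxima of quantities of the form $r(\cdot,a)+\ell(a,\cdot)+1$, but taken over different families: for $\rho(u)$ we range over the factorisations $u=v_1av_2$ obtained by \emph{deleting} an occurrence of a letter from $u$, whereas for $h(u)$ we range over \emph{all} factorisations $u=u_1u_2$ together with \emph{any} letter $a\in A$ to be inserted. So the natural strategy is: take the factorisation $u=v_1av_2$ witnessing the maximum in \Cref{eq-m-via-r&l}, i.e.\ with $\rho(u)=r(v_1,a)+\ell(a,v_2)+1$, and exhibit from it a factorisation $u=u_1u_2$ and a letter $a'\in A$ such that $r(u_1,a')+\ell(a',u_2)\ge \rho(u)$, which via \Cref{eq-hu-delta-u1-u1a-u2-au2} gives $h(u)\ge r(u_1,a')+\ell(a',u_2)+1\ge\rho(u)+1$. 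The empty-word case $u=\epsilon$ is trivial since $\rho(\epsilon)=0$ and $h(\epsilon)=1$, so assume $u\in A^+$.

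The obvious candidate is to keep the \emph{same} letter $a$ and the \emph{same} split point: take $u_1=v_1$, $u_2=av_2$ (or symmetrically $u_1=v_1a$, $u_2=v_2$). Then I need $r(v_1,a)+\ell(a,av_2)\ge r(v_1,a)+\ell(a,v_2)$, i.e.\ $\ell(a,av_2)\ge\ell(a,v_2)$; this is exactly the monotonicity of $\ell$ under extension on the appropriate side, \Cref{lem-sld-mono} (the $\ell$ version: $\ell(t,u)\le\ell(t,uv)$, here with $t=a$, $u=v_2$, prepending $a$ — and indeed \Cref{lem-sld-mono} is stated for extension on the correct side for $\ell$). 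So with $u_1=v_1$ and $u_2=av_2$ we get $h(u)\ge r(v_1,a)+\ell(a,av_2)+1\ge r(v_1,a)+\ell(a,v_2)+1=\rho(u)+1$, as desired.

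I should double-check the direction of monotonicity so that the inequality goes the right way: \Cref{lem-sld-mono} states $r(v,t)\le r(uv,t)$ and $\ell(t,u)\le\ell(t,uv)$, i.e.\ $r$ increases when we prepend to its first argument and $\ell$ increases when we append to its second argument. For $\ell(a,\cdot)$ I am prepending an $a$ to $v_2$ to form $av_2$; prepending is covered by the $r$-side statement under mirroring, and indeed the $\ell$-side of \Cref{lem-sld-mono} as written, $\ell(t,u)\le\ell(t,uv)$, is append-on-the-right. To be safe I would instead use the split $u_1=v_1a$, $u_2=v_2$: then I need $r(v_1a,a)\ge r(v_1,a)$, which is $r(v_1a,a)=r(v_1,a)+1\ge r(v_1,a)$ by \Cref{eq-new-r-rules*} (or \Cref{eq-sld-u'}) — even cleaner, with no monotonicity needed. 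Thus $h(u)\ge r(v_1a,a)+\ell(a,v_2)+1=r(v_1,a)+1+\ell(a,v_2)+1=\rho(u)+2>\rho(u)+1$. Either way the bound follows; the only mild subtlety is making sure the chosen $(u_1,u_2,a')$ is a legal witness for \Cref{eq-hu-delta-u1-u1a-u2-au2}, which it is since $a\in A$ automatically. I do not anticipate a real obstacle here; the main care is simply picking a split that makes the side-distance comparison immediate from the already-established rules rather than requiring a fresh argument.
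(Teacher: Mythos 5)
Your final argument (the ``safer'' split you settle on) is exactly the paper's proof: take the factorization $u=v_1av_2$ witnessing \eqref{eq-m-via-r&l}, absorb the deleted letter into the left part via $r(v_1a,a)=r(v_1,a)+1$ from \eqref{eq-sld-u'}, and feed the split $(v_1a,\,v_2)$ with inserted letter $a$ into \eqref{eq-hu-delta-u1-u1a-u2-au2}; your instinct to avoid the monotonicity route (prepending to the second argument of $\ell$ is not what \Cref{lem-sld-mono} gives) was also sound. The only flaw is an arithmetic slip in the last line: since $\rho(u)=r(v_1,a)+\ell(a,v_2)+1$, the quantity $r(v_1,a)+1+\ell(a,v_2)+1$ equals $\rho(u)+1$, not $\rho(u)+2$ (the latter would contradict \Cref{thm-h-m-2letter} for binary words); with the count corrected you obtain exactly $h(u)\geq 1+\rho(u)$.
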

\begin{proof}
For the empty word, one has $h(\epsilon)=1$ and $\rho(\epsilon)=0$.
We now assume that $u$ is non-empty:
by \Cref{prop-rho-charac}, there is a factorization $u=u_1 a u_2$ with
$a\in A$, such that
\begin{align*}
\rho(u) &= 1 + r(u_1,a) + \ell(a,u_2)
\eqby{\eqref{eq-sld-u'}} r(u_1a,a)+\ell(a,u_2)
\leqby{\eqref{eq-hu-delta-u1-u1a-u2-au2}}
h(u)-1
\:.
\end{align*}
and this concludes the proof.
\end{proof}

The difference between $h(u)$ and $\rho(u)$ can be larger than $1$,
see \Cref{rem-h-rho-gap} below. However, when $|A|\leq 2$ the
inequality becomes an equality, see \Cref{sec-binary}.

\subsection{Subword complexity and concatenation}
\label{ssec-subw-compl-concat}

While the subwords of $u v$ are obtained by concatenating the subwords
of $u$ and the subwords of $v$, there is no simple relation between
$h(u v)$ or $\rho(u v)$ on one hand, and $h(u)$, $h(v)$, $\rho(u)$ and
$\rho(v)$ on the other hand.

However, we can prove that $h$ and $\rho$ are monotonic and convex
with respect to concatenation.
We start with convexity.
\begin{theorem}[Convexity]
\label{thm-add-h-rho}
For all $u,v\in A^*$
\begin{gather}
\label{eq-add-rho}
\rho(u v) \leq \rho(u)+\rho(v)
\:,
\\
\shortintertext{and}
\label{eq-add-h-with-rho}
h(u v) \leq \max \bigl\{ h(u)+\rho(v), \: \rho(u)+h(v)\bigr\}
\:.
\end{gather}
\end{theorem}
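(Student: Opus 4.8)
The plan is to route both inequalities through a single estimate on how much a side distance of a factor can grow under concatenation: for all $x,y\in A^*$ and $c\in A$,
\begin{equation}
\label{eq-conv-key}
\tag{$\star$}
\ell(c,xy)\le\ell(c,x)+\rho(y),
\qquad\text{and, by mirroring,}\qquad
r(yx,c)\le r(x,c)+\rho(y)\,.
\end{equation}
Granting \eqref{eq-conv-key}, both parts of \Cref{thm-add-h-rho} follow from the characterisations \Cref{prop-rho-charac} and \Cref{prop-hu-delta-u-u1au2}. For $\rho(uv)$ (the case $uv=\epsilon$ being trivial) pick a factorisation $uv=w_1 c w_2$ realising the maximum in \Cref{prop-rho-charac}, so that $\rho(uv)=r(w_1,c)+\ell(c,w_2)+1$; the deleted letter $c$ lies in $u$ or in $v$. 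If $u=u_1 c u_2$ with $w_1=u_1$ and $w_2=u_2 v$, then \eqref{eq-conv-key} gives $\ell(c,u_2 v)\le\ell(c,u_2)+\rho(v)$, whence
\[
\rho(uv)\le\bigl(r(u_1,c)+\ell(c,u_2)+1\bigr)+\rho(v)\le\rho(u)+\rho(v)
\]
by \Cref{prop-rho-charac} applied to $u$; the case $c\in v$ is symmetric, using the mirror form of \eqref{eq-conv-key} and \Cref{prop-rho-charac} applied to $v$. For $h(uv)$ one argues in the same way with \Cref{prop-hu-delta-u-u1au2} in place of \Cref{prop-rho-charac}: an inserted letter $a$ sits at a cut $uv=w_1 w_2$ which either falls inside $u$, producing a term bounded by $h(u)+\rho(v)$, or inside $v$, producing a term bounded by $\rho(u)+h(v)$; hence $h(uv)\le\max\{h(u)+\rho(v),\,\rho(u)+h(v)\}$.

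It then remains to prove \eqref{eq-conv-key}, and I would establish its first inequality by induction on $|x|$ (the second following by the mirror argument), using the $\ell$-versions of the rules in \Cref{lem-sld-comp} and the characterisation \Cref{prop-rho-charac}. If $c\notin\alphabet(x)$ then $\ell(c,x)=0$ by \eqref{eq-sld-0}, and it suffices to show $\ell(c,xy)\le\rho(y)$: this is immediate when $c\notin\alphabet(y)$, and otherwise write $y=y_1 c y_2$ with $c\notin\alphabet(y_1)$, so that $xy=(xy_1)\,c\,y_2$ with $c\notin\alphabet(xy_1)$, giving $\ell(c,xy)=1+\ell(xy_1 c,y_2)\le 1+\ell(c,y_2)$ by the $\ell$-versions of \eqref{eq-sld-u'} and \eqref{eq-sld-t}, while \Cref{prop-rho-charac} applied to $y=y_1 c y_2$ yields $\rho(y)\ge r(y_1,c)+\ell(c,y_2)+1=\ell(c,y_2)+1$, using $r(y_1,c)=0$ (again \eqref{eq-sld-0}). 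If instead $c\in\alphabet(x)$, write $x=x_1 c x_2$ with $c\notin\alphabet(x_1)$, so that $\ell(c,x)=1+\ell(x_1 c,x_2)$ and $\ell(c,xy)=1+\ell(x_1 c,x_2 y)$ by the $\ell$-version of \eqref{eq-sld-u'}; unfolding both via the $\ell$-version of \eqref{eq-sld-t},
\[
\ell(x_1 c,\,x_2 y)=\min_{b\in\alphabet(x_1 c)}\ell(b,x_2 y)\le\min_{b\in\alphabet(x_1 c)}\bigl(\ell(b,x_2)+\rho(y)\bigr)=\rho(y)+\ell(x_1 c,\,x_2),
\]
the inequality being the induction hypothesis applied, for each $b$, to the strictly shorter prefix $x_2$. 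Adding $1$ to both sides yields the first inequality of \eqref{eq-conv-key}.

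The crux is getting \eqref{eq-conv-key} exactly right. The first guess is plain subadditivity, $\ell(c,xy)\le\ell(c,x)+\ell(c,y)$, but this breaks when $c\in\alphabet(x)$: after unfolding the leftmost occurrence of $c$ one has to take a minimum over \emph{all} letters $b$ of $x_1 c$, and $\ell(b,y)$ need not be controlled by $\ell(c,y)$, so the bound is not preserved by the minimum. Replacing the second summand with the letter-independent quantity $\rho(y)$ repairs this, since $\rho(y)$ pulls straight out of the minimum; the price is that the auxiliary inequality ``$\ell(c,y)\le\rho(y)$ for every letter $c$'' has to be proved inside the same induction, which is exactly the base case $c\notin\alphabet(x)$ treated above. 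Apart from that, the two derivations of the theorem from \eqref{eq-conv-key} are short calculations with \Cref{prop-rho-charac} and \Cref{prop-hu-delta-u-u1au2}.
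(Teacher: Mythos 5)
Your proof is correct and follows essentially the same route as the paper: your key estimate $(\star)$ is exactly the paper's inequality \eqref{eq-sld-extend3} specialised to single letters, and the derivation of both parts of the theorem from it via \Cref{prop-rho-charac,prop-hu-delta-u-u1au2} is precisely the argument the paper gives. The only difference is how the key estimate is proved --- you induct on the kept word and recurse through the first occurrence of the letter via the mirror of \eqref{eq-sld-u'}, whereas the paper inducts letter by letter on the added word using \eqref{eq-new-r-rules*} together with the auxiliary bound \eqref{eq-r-leq-rho} --- a cosmetic variation of the same idea.
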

Note that, thanks to \Cref{thm-h-m}, the second inequality entails
\begin{equation}
\label{eq-add-h}
h(u v)\leq h(u)+h(v) -1
\end{equation}
and is in fact stronger.

The proof of \Cref{thm-add-h-rho} relies on the following two lemmas:
\begin{lemma}
For any $u\in A^*$, for any $a\in A$
\begin{equation}
\label{eq-r-leq-rho}
r(u,a)\leq\rho(u)\:.
\end{equation}
\end{lemma}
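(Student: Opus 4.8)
The plan is to prove $r(u,a)\le\rho(u)$ by relating the right distance $r(u,a)$ to the minimality characterisation of $\rho$ in \Cref{prop-rho-charac}. The idea is that $r(u,a)$ measures ``how much of a suffix of $u$ consists only of the letter $a$'', more precisely how long the run of $\subsim$-redundant $a$'s at the right end of $u$ is, and removing such an $a$ witnesses that $u$ is not $m$-reduced for $m=r(u,a)$.

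First I would dispose of the trivial cases: if $a\notin\alphabet(u)$ then $r(u,a)=0$ by \eqref{eq-sld-0}, and $\rho(u)\ge 0$ always, so the inequality holds; in particular this covers $u=\epsilon$. So assume $a$ occurs in $u$ and write $u=u_1 a u_2$ with $a\notin\alphabet(u_2)$ (the last occurrence of $a$). By \eqref{eq-sld-u'} we have $r(u,a)=1+r(u_1,a u_2)$, and by \eqref{eq-sld-t} applied to the word $au_2$ (whose alphabet is $\{a\}\cup\alphabet(u_2)$) together with \eqref{eq-monot-subalphabet}, one gets $r(u_1,au_2)\le r(u_1,a)$. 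Hence
\begin{equation*}
r(u,a)=1+r(u_1,au_2)\le 1+r(u_1,a)\le 1+r(u_1,a)+\ell(a,u_2),
\end{equation*}
since $\ell(a,u_2)\ge 0$. But the right-hand side is exactly the quantity $r(v_1,b)+\ell(b,v_2)+1$ from \eqref{eq-m-via-r&l} for the particular factorization $u=u_1\,a\,u_2$ (i.e.\ $v_1=u_1$, $b=a$, $v_2=u_2$), which is bounded above by the maximum, namely $\rho(u)$. This gives $r(u,a)\le\rho(u)$.

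Alternatively, and perhaps more cleanly, I would use \Cref{lem-reduced} directly: if $r(u,a)\ge m$ then, taking the factorization $u=u_1 a u_2$ at the last occurrence of $a$, we showed $r(u_1,a)+\ell(a,u_2)\ge r(u_1,au_2)=r(u,a)-1\ge m-1$, so by the contrapositive of \Cref{lem-reduced} $u$ is not $(m-1)$-reduced, whence $\rho(u)\ge m$. Taking $m=r(u,a)$ yields the claim. I do not expect a serious obstacle here: the only point requiring a little care is the reduction to the last occurrence of $a$ so that \eqref{eq-sld-u'} applies, and the use of alphabet-monotonicity \eqref{eq-monot-subalphabet} to pass from $r(u_1,au_2)$ to $r(u_1,a)$; everything else is bookkeeping with the already-established recursive rules for $r$.
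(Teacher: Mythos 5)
Your proposal is correct and follows essentially the same route as the paper: dispose of the case $a\notin\alphabet(u)$ via \eqref{eq-sld-0}, then factor $u=u_1au_2$ at the last occurrence of $a$, apply \eqref{eq-sld-u'} and the alphabet-monotonicity of $r$ to bound $r(u,a)$ by $r(u_1,a)+\ell(a,u_2)+1$, and conclude with \eqref{eq-m-via-r&l}. The alternative phrasing via \Cref{lem-reduced} is just a repackaging of the same computation.
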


{\let\endproof\doendproof
\begin{proof}
If $a$ does not appear in $u$, then $r(u,a)\eqby{\eqref{eq-sld-0}} 0$.
On the other hand,
if $u = u' a u''$ with $a \not\in \alphabet(u'' )$, then
\begin{align}
\notag
r(u,a) & \eqby{\eqref{eq-sld-u'}} r(u',a u'') + 1 \eqby{\eqref{eq-sld-t}}
\min \bigl\{r(u', b) ~|~ b\in\alphabet(a u'')\bigr\}+1
\\
\tag*{\qed}
       & \leq r(u',a)+1 \leq r(u',a)+\ell(a,u'') +1 \leqby{\eqref{eq-m-via-r&l}} \rho(u)
\:.
\end{align}
\end{proof}
}

\begin{lemma}
For any $u,v,t\in A^*$
\begin{xalignat}{2}
\label{eq-sld-extend3}
r(u v,t)&\leq \rho(u) + r(v,t)
\:,
&
\ell(t,u v)&\leq \rho(v) + \ell(t,u)
\:.
\end{xalignat}
\end{lemma}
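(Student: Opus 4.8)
The plan is to prove the left inequality $r(uv,t)\leq \rho(u)+r(v,t)$ by induction on the structure of $t$, following the same case split used in the proof of \Cref{lem-new-r-rule*} and \Cref{lem-sld-mono}; the right inequality then follows by mirroring. First I would dispose of the two base cases: if $t=\epsilon$ then $r(uv,t)=\infty=r(v,t)$, and since $\rho(u)\geq 0$ the inequality holds trivially (note $\infty\leq\infty$); if $t=a$ is a single letter not occurring in $v$, then $r(v,a)=0$ by \eqref{eq-sld-0}, so the claim reduces to $r(uv,a)\leq\rho(u)$, which I would establish by distinguishing whether $a$ occurs in $u$: if not, $r(uv,a)=0$ by \eqref{eq-sld-0}; if so, write $u=u'au''$ with $a\notin\alphabet(u'')$ and use \eqref{eq-dr-a-notin-v} (since $a\notin\alphabet(v)$) to get $r(uv,a)\leq r(u,a)$, then conclude with \eqref{eq-r-leq-rho}.

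The interesting base case is $t=a$ a single letter that \emph{does} occur in $v$. Here I would factor $v=v'av''$ with $a\notin\alphabet(v'')$, so that $uv=uv'av''$ has the same property, and compute
\begin{align*}
r(uv,a)\eqby{\eqref{eq-sld-u'}} 1+r(uv',av'')\leqby{\text{i.h.}} 1+\rho(u)+r(v',av'')\eqby{\eqref{eq-sld-u'}}\rho(u)+r(v,a)\:,
\end{align*}
where the induction hypothesis is applied to the shorter first argument $v'$ (the induction should be organized on $|v|$, with an inner induction on $|t|$, exactly as in \Cref{lem-sld-mono}). Finally, for $|t|>1$, I would use \eqref{eq-sld-t} to reduce to single letters:
\begin{align*}
r(uv,t)\eqby{\eqref{eq-sld-t}}\min_{a\in\alphabet(t)}r(uv,a)\leqby{\text{i.h.}}\min_{a\in\alphabet(t)}\bigl(\rho(u)+r(v,a)\bigr)=\rho(u)+\min_{a\in\alphabet(t)}r(v,a)\eqby{\eqref{eq-sld-t}}\rho(u)+r(v,t)\:.
\end{align*}

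The main obstacle, such as it is, lies in the $t=a\in\alphabet(v)$ case: one must be careful that when $v$ is rewritten as $v'av''$ the reduction \eqref{eq-sld-u'} applies to $uv=u\,v'\,a\,v''$ only because $a\notin\alphabet(v'')$ is inherited from the factorization of $v$, and that the induction hypothesis is invoked on $v'$ (shorter than $v$) rather than on a shorter $t$. The auxiliary facts \eqref{eq-r-leq-rho} and \eqref{eq-dr-a-notin-v}, already available in the excerpt, are exactly what makes the single-letter base case go through, so no new combinatorial input is needed beyond an careful bookkeeping of which argument the induction decreases on. The right inequality $\ell(t,uv)\leq\rho(v)+\ell(t,u)$ is obtained by reading the whole argument through the mirror map.
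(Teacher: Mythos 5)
Your proof is correct and follows essentially the same strategy as the paper's: a double induction on $|v|$ and then $|t|$, reducing the single-letter case via a recursion for $r$ and bottoming out with $r(u,a)\leq\rho(u)$ from \eqref{eq-r-leq-rho}. The only difference is cosmetic: in the case $t=a$ you decompose $v$ at the last occurrence of $a$ using \eqref{eq-sld-0}--\eqref{eq-sld-u'} together with \eqref{eq-dr-a-notin-v} (exactly the case split used in the paper's proof of \Cref{lem-sld-mono}), whereas the paper peels off the last letter of $v$ with \eqref{eq-new-r-rules*}; both reductions are available in the text and lead to the same conclusion.
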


\begin{proof}
We prove the first inequality by induction on $|v|$ and then on $|t|$,
the second inequality being derived by mirroring. We consider several
cases:
\\
(i) if $t=\epsilon$ then $r(u v,t) = \infty = r(u,t)$.
\\
(ii) if $t=a$ is a letter, there are three subcases: firstly, if
$v=\epsilon$, then $\rho(u v,t) = r(u,a) \leqby{\eqref{eq-add-rho}}
\rho(u)$; secondly, if $v = v' a$ ends with $a$ then
\[
r(u v,t)=r(u v' a,a)
\eqby{\eqref{eq-new-r-rules*}} 1 + r(u v',a)
\leqby{\text{i.h.}} \rho(u) + 1 + r(v',a)
\eqby{\eqref{eq-new-r-rules*}} \rho(u) + r(v,t)
\:;
\]
and, thirdly, if $v=v'b$ ends with $b\neq a$, we use
\begin{align*}
r(u v,t)=r(u v' b,a)
&\eqby{\eqref{eq-new-r-rules*}}
\min\bigl\{r(u v',b)+1, r(u v',a)\bigr\}
\\
&\leqby{\text{i.h.}}
\rho(u)+\min\bigl\{r(v',b)+1,r(v',a)\bigr\}
\eqby{\eqref{eq-new-r-rules*}} \rho(u)+r(v,t)
\:.
\end{align*}
(iii) if $|t|>1$ then
$
r(uv,t) \eqby{\eqref{eq-sld-t}}
\min_{a\in t}r(uv,a) \leqby{\text{i.h.}}
\rho(u)+\min_{a\in t}r(v,a) \eqby{\eqref{eq-sld-t}}
r(u v,t)
$.
\end{proof}

\begin{proof}[of \Cref{thm-add-h-rho}]
For \eqref{eq-add-rho} we assume that $u v\neq\epsilon$,
otherwise the claim is trivial.
Now by \Cref{prop-rho-charac}, $\rho(u v)$ is
$r(w_1,a)+\ell(a,w_2)+1$ for some letter $a$ and some
factorization $u v=w_1 a w_2$ of $u v$.
Let us assume w.l.o.g.\ (the other case
is symmetrical) that $w_1=u_1$ is a prefix of $u=u_1 a u_2$ so that
$w_2=u_2v$.
Now
\begin{gather*}
\rho(u)= r(u_1,a)+\ell(a,u_2v)+1
\leqby{\eqref{eq-sld-extend3}} r(u_1,a)+\ell(a,u_2)+1 + \rho(v)
\leqby{\eqref{eq-m-via-r&l}} \rho(u) + \rho(v)
\:.
\end{gather*}
We proceed similarly for  \eqref{eq-add-h-with-rho}: By \Cref{prop-hu-delta-u-u1au2}, $h(u v)$ is
$r(w_1,a)+\ell(a,w_2)+1$ for some letter $a$ and some factorization $u
v=w_1w_2$ of $u v$. If $w_1=u_1$ is a prefix of $u=u_1u_2$ so that
$w_2=u_2v$, then
\begin{gather*}
h(u v)= r(u_1,a) + \ell(a,u_2v) + 1
\leqby{\eqref{eq-sld-extend3}} r(u_1,a) + \ell(a,u_2) + \rho(v)+1
\leqby{\eqref{eq-hu-delta-u1-u1a-u2-au2}} h(u) + \rho(v) \:.
\end{gather*}
On the
other hand, if $w_2$ is a suffix of $v$, through the same reasoning we
get $h(u v)\leq \rho(u)+h(v)$.
\end{proof}

\begin{theorem}[Monotonicity]
\label{thm-mono-h}
For all $u,v\in A^*$
\begin{xalignat}{2}
\label{eq-mono-h}
h(u)&\leq h(u v)
\:,
&
h(v)&\leq h(u v)
\:,
\\
\label{eq-mono-rho}
\rho(u)&\leq \rho(u v)
\:,
&
\rho(v)&\leq \rho(u v)
\:.
\end{xalignat}
\end{theorem}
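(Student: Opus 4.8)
The plan is to prove the four inequalities of \Cref{thm-mono-h} using the characterisations (H) and (P) from \Cref{prop-hu-delta-u-u1au2,prop-rho-charac}, together with the monotonicity of side distances in \Cref{lem-sld-mono}. All four cases reduce to the same idea: a factorization witnessing the measure of the shorter word can be extended to a factorization of the longer word without decreasing the relevant $r$ and $\ell$ values.

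\begin{proof}[of \Cref{thm-mono-h}]
We first treat $\rho$. If $u=\epsilon$ the inequality $\rho(v)\leq\rho(uv)$ is trivial (and $\rho(\epsilon)=0\leq\rho(uv)$ gives the other one), so assume $v\neq\epsilon$; by \Cref{prop-rho-charac} pick a factorization $v=v_1 a v_2$ with $\rho(v)=r(v_1,a)+\ell(a,v_2)+1$. Then $uv=(uv_1)\,a\,v_2$ is a factorization of $uv$, and \Cref{lem-sld-mono} gives $r(v_1,a)\leq r(uv_1,a)$ while $\ell(a,v_2)$ is unchanged, so by \eqref{eq-m-via-r&l} we get $\rho(uv)\geq r(uv_1,a)+\ell(a,v_2)+1\geq\rho(v)$. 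Symmetrically, writing $u=u_1 a u_2$ for a factorization achieving $\rho(u)$ and using the mirror inequality $\ell(a,u_2)\leq\ell(a,u_2v)$ yields $\rho(uv)\geq\rho(u)$.

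The argument for $h$ is identical but uses characterisation (H), where the factorization ranges over \emph{all} splittings $u=u_1u_2$ (a letter is inserted rather than removed). For $h(v)\leq h(uv)$: pick $v=v_1v_2$ and $a\in A$ with $h(v)=r(v_1,a)+\ell(a,v_2)+1$; then $uv=(uv_1)(v_2)$ is a splitting of $uv$ into which we may insert $a$ at the same spot, and \Cref{lem-sld-mono} gives $r(uv_1,a)\geq r(v_1,a)$ with $\ell(a,v_2)$ unchanged, so \eqref{eq-hu-delta-u1-u1a-u2-au2} gives $h(uv)\geq h(v)$. For $h(u)\leq h(uv)$, pick $u=u_1u_2$ and $a$ achieving $h(u)$, split $uv$ as $(u_1)(u_2v)$, and use the mirror monotonicity $\ell(a,u_2v)\geq\ell(a,u_2)$. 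One should also check the degenerate empty-word cases ($h(\epsilon)=1\leq h(uv)$ since $uv$ is either empty or has $h(uv)\geq 1$), which are immediate.

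There is no real obstacle here: the only subtlety is bookkeeping the direction of monotonicity --- \Cref{lem-sld-mono} lets $r(v,t)$ grow only when $v$ is extended \emph{on the left} and dually for $\ell$, which is exactly why one must put the freshly concatenated block on the correct side (prepend $u$ before $v_1$ when bounding an $r$-term, append $v$ after $u_2$ when bounding an $\ell$-term). Since each of the four statements isolates one such term and leaves the other untouched, the extension is always in the allowed direction, and the proof goes through verbatim in all four cases.
\end{proof}
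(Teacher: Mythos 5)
Your proof is correct and follows essentially the same route as the paper: both use the characterisations \eqref{eq-hu-delta-u1-u1a-u2-au2} and \eqref{eq-m-via-r&l} together with the monotonicity of the side distances from \Cref{lem-sld-mono}, extending a witnessing factorization of the shorter word to one of $uv$. The only difference is presentational --- the paper proves the two left-hand inequalities and obtains the other two by mirroring, whereas you spell out all four cases explicitly.
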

\begin{proof}
By \Cref{prop-hu-delta-u-u1au2}, $h(u)$ is $\delta(u_1u_2,u_1a u_2)+1$
for some letter $a\in A$ and some factorization $u=u_1 u_2$ of $u$. We
derive
\begin{gather*}
h(u)			\eqby{\eqref{eq-hu-delta-u1-u1a-u2-au2}}
r(u_1,a)+\ell(a,u_2)+1	\leqby{\eqref{eq-sld-mono}}
r(u_1,a)+\ell(a,u_2v)+1 \leqby{\eqref{eq-hu-delta-u1-u1a-u2-au2}}
h(u v)
\:.
\end{gather*}
For Eq.~\eqref{eq-mono-rho} we proceed similarly: if $u=\epsilon$ then
$\rho(u)=0$ and the claim follows. If $u\not=\epsilon$ then, by
\Cref{prop-rho-charac}, $\rho(u)$ is $\ell(v_1,a)+r(a,v_2)+1$ for a
factorization $u=v_1 a v_2$. We derive
\begin{gather*}
\rho(u)			    \eqby{\eqref{eq-m-via-r&l}}
\ell(v_1,a)+r(a,v_2)+1	    \leqby{\eqref{eq-sld-mono}}
\ell(v_1,a) + r(a, v_2 v)+1 \leqby{\eqref{eq-m-via-r&l}}
\rho(uv)
\:.
\end{gather*}
This accounts for the left halves of
(\ref{eq-mono-h},\ref{eq-mono-rho}). Their right halves follow by
mirroring.
\end{proof}

\subsection{Subword complexity and shuffle}

Beyond concatenations, we can consider shuffles of words.  Let us
write $u\shuffle v$ for the set of all words that can be obtained by
shuffling, or ``interleaving'', $u$ and $v$, i.e., the set of all
words $w = u_1 v_1 u_2 v_2 \cdots u_n v_n$ such that $u=u_1\cdots u_n$ and
$v=v_1\cdots v_n$~\cite{sakarovitch83}.	 This notion extends to the
shuffling of three and more words. In~\cite[Sect.~4]{HS-ipl2019}
Halfon and Schnoebelen proved the following bound:
\begin{gather}
\label{eq-from-HS-ipl2019}
w \in u_1\shuffle \cdots \shuffle u_m
\implies
h(w)\leq 1+\max_{a\in A} \: \bigl\{ |u_1|_a+\cdots +|u_m|_a \bigr\} \:,
\\
\shortintertext{with the corollary}
\label{eq-hu-leq-1+max-ua}
	     h(u)\leq 1+\max_{a\in \alphabet(u)}|u|_a \:.
\end{gather}

With our new tools, we can prove a bound mixing piecewise complexity and length.
\begin{proposition}
\label{prop-h-mono-shuffle}
For any $u,v\in A^*$
\begin{equation}
\label{eq-h-mono-shuffle}
w\in u\shuffle v \implies h(w)\leq h(u) + |v| \:.
\end{equation}
\end{proposition}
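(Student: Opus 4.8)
The plan is to proceed by induction on $|v|$. The base case $v=\epsilon$ is immediate since then $w=u$ and $h(w)=h(u)\leq h(u)+0$. For the inductive step, write $v=v'a$ (or $v=av'$; by a mirror argument either choice works, but peeling off the last letter meshes best with the rules \eqref{eq-new-r-rules*} as stated). A word $w\in u\shuffle(v'a)$ is obtained by inserting the single letter $a$ at some position into a word $w'\in u\shuffle v'$; that is, $w=w_1aw_2$ with $w'=w_1w_2\in u\shuffle v'$. By the induction hypothesis, $h(w')\leq h(u)+|v'|=h(u)+|v|-1$, so it suffices to show $h(w)\leq h(w')+1$, i.e. that inserting one letter into a word increases its piecewise complexity by at most one.

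The heart of the argument is therefore the auxiliary claim: $h(w_1aw_2)\leq h(w_1w_2)+1$ for any $w_1,w_2\in A^*$ and $a\in A$. To prove it, invoke \Cref{prop-hu-delta-u-u1au2}: $h(w_1aw_2)$ equals $r(x_1,b)+\ell(b,x_2)+1$ for some letter $b$ and some factorization $w_1aw_2=x_1x_2$. The cut point of this factorization lies either inside $w_1$, inside $w_2$, or exactly at the inserted $a$. In each case I split $x_1$ and $x_2$ back into a part coming from $w_1w_2$ plus possibly the stray $a$, and then bound the resulting side distance using the ``insertion costs at most one'' lemma \eqref{eq-mono-insert-1a}: namely $r(u'au'',b)\leq 1+r(u'u'',b)$ and $\ell(b,u'au'')\leq 1+\ell(b,u'u'')$. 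Concretely, if the cut is inside $w_1$, say $x_1=p$ a prefix of $w_1$ and $x_2=qaw_2$ where $w_1=pq$, then $r(x_1,b)=r(p,b)$ unchanged and $\ell(b,qaw_2)\leq 1+\ell(b,qw_2)$ by \eqref{eq-mono-insert-1a}, so $r(x_1,b)+\ell(b,x_2)+1\leq r(p,b)+\ell(b,qw_2)+2\leq h(w_1w_2)+1$ by \eqref{eq-hu-delta-u1-u1a-u2-au2} applied to the factorization $w_1w_2=p\cdot qw_2$. The case where the cut is inside $w_2$ is symmetric, using the $r$-version of \eqref{eq-mono-insert-1a}. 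If the cut is exactly at $a$ (so $x_1=w_1a$, $x_2=w_2$, and the chosen letter is $b$), then $r(w_1a,b)\leq 1+r(w_1,b)$ directly from \eqref{eq-new-r-rules*} (either $r(w_1a,a)=r(w_1,a)+1$ when $b=a$, or $r(w_1b',\ldots)\leq r(w_1,b)+1$ otherwise), and $\ell(b,w_2)$ is already a valid term for $w_1w_2$ via the factorization $w_1\cdot w_2$, giving $r(w_1a,b)+\ell(b,w_2)+1\leq r(w_1,b)+\ell(b,w_2)+2\leq h(w_1w_2)+1$.

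The main obstacle, such as it is, is bookkeeping: one must be careful that in every sub-case the resulting factorization of $w_1w_2$ is genuinely of the form required by \eqref{eq-hu-delta-u1-u1a-u2-au2} (a split $w_1w_2 = y_1 y_2$ with the same letter $b$ inserted between $y_1$ and $y_2$), so that the bound $r(y_1,b)+\ell(b,y_2)+1\leq h(w_1w_2)$ is legitimate. No deep combinatorics is needed beyond \eqref{eq-mono-insert-1a} and \eqref{eq-new-r-rules*}; the content is entirely in the single-letter-insertion lemma, which then bootstraps to arbitrary $v$ by iterating $|v|$ times. One could alternatively phrase the whole thing without the explicit auxiliary claim by inducting directly and handling the position of the new letter together with the position of the cut in $h(w)$, but isolating ``inserting a letter costs at most $1$'' as a standalone lemma makes the write-up cleanest.
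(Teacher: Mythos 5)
Your proof is correct and takes essentially the same route as the paper's: induction on $|v|$ reducing to the single-letter insertion bound $h(w_1aw_2)\leq h(w_1w_2)+1$, which both you and the paper derive from the characterisation \eqref{eq-hu-delta-u1-u1a-u2-au2} together with the insertion lemma \eqref{eq-mono-insert-1a}. The only difference is presentational: you isolate the auxiliary claim and enumerate three positions for the cut, whereas the paper folds the single-letter case into the induction and disposes of half the cases by a ``w.l.o.g.'' symmetry.
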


\begin{proof}
By induction over $|v|$. The claims holds obviously when $v=\epsilon$.

Assume now that $|v|=1$, i.e., that $v=a$ is a letter. Then $w$ is some $u_1 a u_2$ and
$h(w)=r(w_1,b)+\ell(b,w_2)+1$ for some factorizations $u=u_1u_2$,
$w=w_1w_2$ and some inserted letter $b$. W.l.o.g.\, we may assume that
$u_1a$ is a prefix of $w_1$, the other case being symmetrical. So let
us write $w=u_1 a w''_1 w_2$ and derive
\begin{align*}
h(w)
& = 1+r(u_1 a w''_1,b)+\ell(b,w_2)
\leqby{\eqref{eq-mono-insert-1a}}
 2+r(u_1 w''_1,b)+\ell(b,w_2)
\\
& \leq
1+h (u_1 w''_1 w_2)
=
1+h(u_1 u_2)=1+h(u)=h(u)+|v|
\:.
\end{align*}

Assume now that $|v|>1$ and write $v=v'a$ where $a$ is the last letter
of $v$.	 If $w\in u\shuffle (v' a)$ then $w\in w'\shuffle a$ for some
$w'\in u\shuffle v'$. We obtain
\begin{gather*}
h(w)\leqby{\text{i.h.}} h(w')+1\leqby{\text{i.h.}}h(u)+|v'|+1=h(u)+|v|
\:.
\end{gather*}
Finally all cases have been considered and the claim is proved.
\end{proof}

\begin{remark}
We thought for some time that the bound $h(u v)\leq h(u)+h(v)-1$ from
\eqref{eq-add-h} could be extended to a more general ``$h(w)\leq
h(u)+h(v)-1$ for all $w\in u\shuffle v$''. However this fails even
with a binary alphabet, e.g., taking $u=\mathtt{A A A B B B}$,
$v=\mathtt{A A A A B A A A A}$ and $w=\mathtt{A A A A A A A B A B A B A A B}$, where
$h(u)=4$, $h(v)=6$, $w\in u\shuffle v$ and $h(w)=12$.
\qed
\end{remark}

\section{Computing $h(u)$ and $\rho(u)$}
\label{sec-algo-h-and-rho}

With \Cref{prop-hu-delta-u-u1au2,prop-rho-charac}, computing $h$ and
$\rho$ reduces to computing $r$ and $\ell$.

The recursion in \Cref{eq-new-r-rules*} involves the prefixes
of $u$. We define the \emph{$r$-table of $u$} as the rectangular
matrix containing all the $r(u(0,i),a)$ for $i=0,1,\ldots,|u|$ and
$a\in A$. For simplicity we just write $r(i,a)$ for $r(u(0,i),a)$.
\begin{example}
\label{ex-w1}
Let $u=\mathtt{A B B A C C B C C A B A A B C}$ over
$A=\{\ta,\tb,\tc\}$. The $r$-table of $u$ is
\begin{center}
\scalebox{1.0}{%
\begin{tikzpicture}[auto,x=5mm,y=3mm,anchor=mid,baseline,node distance=1.3em]
{
\def\Wc{0.92}
\def\ixh{2.7}
\def\alphah{2.9}
\def\wlh{1.5}
\def\wlet{\ta, \tb, \tb, \ta, \tc, \tc, \tb, \tc, \tc,
\ta, \tb, \ta, \ta, \tb, \tc}
\def\riA{0, 1, 1, 1, 2, 1, 1, 1, 1, 1, 2, 2, 3, 4, 4, 3}
\def\riB{0, 0, 1, 2, 2, 1, 1, 2, 2, 2, 2, 3, 3, 3, 4, 3}
\def\riC{0, 0, 0, 0, 0, 1, 2, 2, 3, 4, 2, 2, 2, 2, 2, 3}
\node[left] at (-1.0,\ixh) {\tiny $i$};
\foreach \i in {0,...,15} \node at ({\i*\Wc},\ixh) {\tiny \i};
\node[left] at (-1.0,\wlh) {$w$};
\foreach \val [count=\i] in \wlet \node at ({(\i-0.5)*\Wc},\wlh) {$\val$};
\node[left] at (-1.0,0) {$r(i,\ta)$};
\foreach \val [count=\i] in \riA \node at ({(\i-1)*\Wc},0) {\val};
\node[left] at (-1.0,-1.3) {$r(i,\tb)$};
\foreach \val [count=\i] in \riB \node at ({(\i-1)*\Wc},-1.3) {\val};
\node[left] at (-1.0,-2.6) {$r(i,\tc)$};
\foreach \val [count=\i] in \riC \node at ({(\i-1)*\Wc},-2.6) {\val};
{\tikzstyle{every path}=[draw,thick,-]
\path (-0.8,\ixh) -- (-0.8,-2.9);
\path (-3.2,0.85) -- (15.5*\Wc-0.15,0.85);}
}
\end{tikzpicture}
}
\end{center}
With \Cref{eq-new-r-rules*}, it is easy to build such a table column
by colum.  Let us look at the colum $r(7,a)_{a\in A}$: since $u(0,7)
=\mathtt{A B B A C C B}$ ends with $\tb$, one has
\begin{align*}
r(7,\ta) &\eqby{\eqref{eq-new-r-rules*}}
	\min \bigl(r(6,\tb)+1,r(6,\ta)\bigr) = \min(2,1) = 1\:,
\\
r(7,\tb) &\eqby{\eqref{eq-new-r-rules*}}
	r(6,\tb)+1=2 \:,
\\
r(7,\tc) &\eqby{\eqref{eq-new-r-rules*}}
	\min \bigl(r(6,\tb)+1,r(6,\tc)\bigr) = \min(2,2) = 2\:.
\end{align*}

\end{example}

\begin{corollary}
\label{coro-algo-h}
$h(u)$ can be computed in quadratic time
$O(|A|\cdot |u|)$, hence in linear time when $A$ is fixed.
\end{corollary}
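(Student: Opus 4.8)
The plan is to build the $r$-table of $u$ column by column using the recursion \eqref{eq-new-r-rules*}, then read off $h(u)$ from \Cref{eq-hu-delta-u1-u1a-u2-au2} after also computing the mirror $\ell$-table.

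First I would construct the $r$-table: its columns are indexed by $i=0,1,\ldots,|u|$ and its rows by $a\in A$, storing $r(i,a)=r(u(0,i),a)$. Column $0$ is all zeros by the base case $r(\epsilon,a)=0$. Given column $i$ and the letter $b=u(i+1)$, each entry of column $i+1$ is obtained in constant time from \eqref{eq-new-r-rules*}: the entry for $b$ itself is $r(i,b)+1$, and for $a\neq b$ it is $\min(r(i,b)+1,\,r(i,a))$. Since there are $|u|+1$ columns and $|A|$ rows, and each entry costs $O(1)$, the whole $r$-table is filled in time $O(|A|\cdot|u|)$. By the mirror argument (reading $u$ right-to-left), the $\ell$-table storing $\ell(a,u(i,|u|))$ for all $i$ and $a$ is built in the same time bound.

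Then I would combine the two tables. By \Cref{prop-hu-delta-u-u1au2}, $h(u)=1+\max_{u=u_1u_2,\,a\in A} \bigl(r(u_1,a)+\ell(a,u_2)\bigr)$, and every factorization $u=u_1u_2$ corresponds to a cut position $i$ with $u_1=u(0,i)$ and $u_2=u(i,|u|)$. So $h(u)=1+\max_{0\le i\le|u|}\max_{a\in A}\bigl(r(i,a)+\ell_i(a)\bigr)$, where $\ell_i(a)$ is the corresponding $\ell$-table entry. This is a maximum over $O(|A|\cdot|u|)$ already-computed values, hence again $O(|A|\cdot|u|)$ time. Totalling the three phases gives the claimed $O(|A|\cdot|u|)$ bound, which is linear in $|u|$ when $A$ is fixed.

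There is no real obstacle here — the only point needing a line of care is that \eqref{eq-new-r-rules*} extends $u$ on the \emph{right}, so it indeed computes $r$ for all prefixes $u(0,i)$ in one left-to-right sweep, while the analogous right-to-left sweep handles $\ell$ for all suffixes; after that everything is a bounded-size table lookup and a maximum.
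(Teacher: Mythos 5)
Your proposal is correct and follows exactly the paper's own argument: fill the $r$-table (and, by mirroring, the $\ell$-table) via \eqref{eq-new-r-rules*} in $O(|A|\cdot|u|)$ time, then sweep over all cut positions and letters using \Cref{prop-hu-delta-u-u1au2} to extract $h(u)$. You merely spell out the column-by-column recursion in a little more detail than the paper does.
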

\begin{proof}
It is clear that \Cref{eq-new-r-rules*} directly leads to a
$O(|A|\cdot |u|)$ algorithm for computing the $r$-table and,
symmetrically, the $\ell$-table for $u$.  One then finds $\max_{a\in
A}\max_{i=0,\ldots,|u|} r(u(0,i),a)+\ell(a,u(i,|u|))+1$ by looping
over the two tables. As stated in \Cref{prop-hu-delta-u-u1au2}, this
gives $h(u)$.
\end{proof}

In order to compute $\rho(u)$, we use the $r$- and $\ell$-vectors.
\begin{definition}[\protect{$r$-vector, $\ell$-vector, \cite[p.~73]{simon72}}]
\label{def-r-l-vectors}
The $r$-vector of $u=a_1\cdots a_m$ is $\tuple{r_1,\ldots,r_m}$,
defined with $r_i=r(a_1\cdots a_{i-1}, a_i)$ for all $i=1,\ldots,m$.\\
The $\ell$-vector of $u$ is $\tuple{\ell_1,\ldots,\ell_m}$, defined
symmetrically with $\ell_i=\ell(a_i,a_{i+1}\cdots a_m)$ for all $i=1,\ldots,m$.\\
\end{definition}
With these definitions, the following equality is just a rewording of \Cref{prop-rho-charac}:
\begin{equation}
\label{eq-rho-by-rl}
\rho(u)=1+\max_{i=1,\ldots,m}(r_i+\ell_i) \:.
\end{equation}
\begin{example}
Let us continue with $u=\mathtt{A B B A C C B C C A B A A B C}$. Its $r$- and
$\ell$-vectors are, respectively:
\[
\begin{array}{r c c c c c c c c c c c c c c c c c}
		      & &\ta\:&\tb\:&\tb\:&\ta\:&\tc\:&\tc\:&\tb\:&\tc\:&\tc\:&\ta\:&\tb\:&\ta\:&\ta\:&\tb\:&\tc\:&
\\
\text{$r$-vector:}    & \langle & 0,& 0,& 1,& 1,& 0,& 1,& 1,& 2,& 3,& 1,& 2,& 2,& 3,& 3,& 2&\rangle
\\
\text{$\ell$-vector:} & \langle & 3,& 4,& 3,& 2,& 4,& 3,& 2,& 2,& 1,& 2,& 1,& 1,& 0,& 0,& 0&\rangle
\end{array}
\]
By summing the two vectors, looking for a maximum value, and adding 1,
we quickly obtain $\rho(u) = 1+ \max_{i=1,\ldots,|u|} (r_i+\ell_i)  =
5$.
\qed
\end{example}

\begin{remark}
The \emph{attribute} of $u$ defined in
\cite[\textsection~3]{fleischer2018} is exactly the juxtaposition of
Simon's $r$- and $\ell$-vectors with all values shifted by 1.
\qed
\end{remark}

One could extract the $r$- and $\ell$-vectors from the $r$- and
$\ell$-tables but there is a faster way. Indeed, writing
$\tuple{r_1,r_2,\ldots, r_m}$ for $u$'s $r$-vector, one has
\begin{equation}
\label{eq-algo-ri}
r_i = \begin{cases}
	1+\min(r_j,r_{j+1},\ldots,r_{i-1}) &\text{if $u(i)$ last
	occurred as $u(j)$,}
\\
	0 & \text{if $u(i)$ did not occur previously in $u$.}
     \end{cases}
\end{equation}
From this, one may derive an algorithm that builds the $r$-vector by a
one-pass traversal of $u$.

\Cref{fig-algo-rvector} provides explicit
code, extracted from the algorithm in~\cite{barker2020} that computes
the canonical representative of $u$ modulo $\sim_k$. While traversing
$u$, it maintains a table $\texttt{locc}:A\to\Nat$ of positions of
last occurrences of each letter, as well as a stack $\texttt{L}$ of
positions of interest.
We refer to \cite{barker2020} for the
correctness proof. The running time is $O(|A|+|u|)$ since there is a linear
number of insertions in the stack $\texttt{L}$ and all the positions
read from $\texttt{L}$ are removed except the last read.
\begin{figure}[htp]
\centering
\begin{tabular}{c}
\lstset{language=python,texcl=true,basicstyle=\tt,commentstyle=\it,mathescape=true}
\begin{lstlisting}
function r-vector(u,A):
    # initialize {\rm\texttt{locc}} and {\rm\texttt{L}}
    for a in A: locc[a]=0;
    L.push(0)
    # fill {\rm\texttt{r}} and maintain {\rm\texttt{locc}}:
    for i from 1 to |u|:
	a = u[i];
	while (head(L) >= locc[a]):
	    j = L.pop()
	r[i] = 1+r[j] if j>0, else 0
	L.push(j)
	L.push(i)
	locc[a]=i
    return r
\end{lstlisting}
\end{tabular}
\caption{Linear-time algorithm for $r$-vector, after~\cite{barker2020}.}
\label{fig-algo-rvector}
\end{figure}
With a mirror algorithm, the $\ell$-vector is also computed in linear time.
\begin{corollary}
\label{coro-algo-rho}
$\rho(u)$ can be computed in linear time $O(|A| + |u|)$.
\end{corollary}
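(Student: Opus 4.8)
The plan is to combine the linear-time computation of the $r$-vector (Figure~\ref{fig-algo-rvector}) with a mirror computation of the $\ell$-vector, and then read off $\rho(u)$ from Equation~\eqref{eq-rho-by-rl}.

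First I would observe that the algorithm of Figure~\ref{fig-algo-rvector}, whose correctness is imported from~\cite{barker2020}, computes the $r$-vector $\tuple{r_1,\ldots,r_m}$ of $u$ in time $O(|A|+|u|)$: the initialisation of the table \texttt{locc} costs $O(|A|)$, and the main loop runs $|u|$ iterations, each doing $O(1)$ work apart from the inner \texttt{while}; since each position is pushed onto \texttt{L} at most twice (once as the current index $i$, once when re-pushed as $j$) and popped at most once per push, the total work in the \texttt{while} loops is amortised $O(|u|)$. This justifies Equation~\eqref{eq-algo-ri} being implementable in linear time, as already noted in the paragraph preceding the figure.

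Next I would note that, by the mirror symmetry of $r$ and $\ell$ recorded just before Lemma~\ref{lem-uv-uav}, running the same algorithm on the reverse word $a_m\cdots a_1$ and reversing the output yields the $\ell$-vector $\tuple{\ell_1,\ldots,\ell_m}$ of $u=a_1\cdots a_m$, again in time $O(|A|+|u|)$; equivalently one writes out the explicit mirror code. Having both vectors in hand, a single pass computing $\max_{i=1,\ldots,m}(r_i+\ell_i)$ and adding $1$ takes $O(|u|)$ time and, by \Cref{prop-rho-charac} rephrased as Equation~\eqref{eq-rho-by-rl}, returns exactly $\rho(u)$. Summing the three phases gives the claimed bound $O(|A|+|u|)$.

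I do not expect any genuine obstacle here: every ingredient is already in place, and the only things to be careful about are the amortised-time accounting for the stack \texttt{L} (which is standard and can be cited to~\cite{barker2020}) and the edge case $u=\epsilon$, where $m=0$, the max over an empty index set is taken to be $0$, and the formula correctly returns $\rho(\epsilon)=0$ in accordance with the convention fixed after \Cref{def-r-l-vectors}. Accordingly the proof is little more than: run the linear-time $r$-vector algorithm, run its mirror to get the $\ell$-vector, then combine via Equation~\eqref{eq-rho-by-rl}.
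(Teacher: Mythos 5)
Your proposal is correct and follows essentially the same route as the paper, which likewise obtains $\rho(u)$ by computing the $r$-vector with the linear-time algorithm of Figure~\ref{fig-algo-rvector}, the $\ell$-vector by the mirror algorithm, and then applying Equation~\eqref{eq-rho-by-rl}. The amortised stack accounting and the $u=\epsilon$ edge case you mention are fine additions but do not change the argument.
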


\section{Piecewise complexity of binary words}
\label{sec-binary}

The case where $|A|=2$ is special and enjoys some special properties.

To begin with, the inequality in \Cref{thm-h-m} becomes an equality
with binary words.
\begin{theorem}
\label{thm-h-m-2letter}
Assume	$|A|= 2$.  Then	 $h(u) = \rho(u) + 1$ for any $u\in A^*$.
\end{theorem}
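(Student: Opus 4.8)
The plan is to prove the two inequalities separately. The bound $h(u)\geq\rho(u)+1$ is already available as \Cref{thm-h-m}, so all the work is in showing $h(u)\leq\rho(u)+1$. I would fix a factorization $u=u_1u_2$ and a letter $c\in A$ realizing the maximum in \eqref{eq-hu-delta-u1-u1a-u2-au2}, so that $h(u)=r(u_1,c)+\ell(c,u_2)+1$, and then exhibit a factorization $u=v_1bv_2$ with $b\in A$ such that $r(v_1,b)+\ell(b,v_2)\geq r(u_1,c)+\ell(c,u_2)$; by \Cref{prop-rho-charac} this gives $\rho(u)\geq h(u)-1$, as wanted. In slogan form: turn the \emph{insertion} point witnessing $h(u)$ into a \emph{deletion} point witnessing $\rho(u)$.

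Most cases are immediate. If $u_1$ ends with $c$, write $u_1=u_1'c$ and delete that last letter: since $r(u_1,c)=1+r(u_1',c)$ by \eqref{eq-new-r-rules*}, we get $\rho(u)\geq r(u_1',c)+\ell(c,u_2)+1=r(u_1,c)+\ell(c,u_2)$; symmetrically if $u_2$ begins with $c$. If $c\notin\alphabet(u_1)$ (in particular if $u_1=\epsilon$), then $r(u_1,c)=0$ by \eqref{eq-sld-0}, while $\ell(c,u_2)\leq\rho(u_2)\leq\rho(u)$ by the mirror of \eqref{eq-r-leq-rho} and monotonicity \eqref{eq-mono-rho}; symmetrically if $c\notin\alphabet(u_2)$, now using $r(u_1,c)\leq\rho(u_1)\leq\rho(u)$ and $\ell(c,u_2)=0$. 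This leaves exactly one case to treat.

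In that remaining case $c$ occurs in both $u_1$ and $u_2$, the word $u_1$ does not end with $c$, and $u_2$ does not begin with $c$. Here the hypothesis $|A|=2$ is essential: writing $d$ for the other letter, the maximal $c$-free suffix of $u_1$ and the maximal $c$-free prefix of $u_2$ are powers of $d$, so $u_1=\gamma c d^s$ and $u_2=d^t c\delta$ with $s,t\geq1$, and $u=\gamma c\,d^{s+t}\,c\delta$. I would delete the last $c$ of $u_1$, i.e.\ take $v_1=\gamma$, $b=c$, $v_2=d^{s+t}c\delta$, which yields $\rho(u)\geq r(\gamma,c)+\ell(c,d^{s+t}c\delta)+1$. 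Two observations then finish it. First, \eqref{eq-sld-u'} gives $r(u_1,c)=1+r(\gamma,cd^s)$, and $r(\gamma,cd^s)\leq r(\gamma,c)$ by \eqref{eq-monot-subalphabet}, so $r(\gamma,c)\geq r(u_1,c)-1$. Second, because $t\geq1$ the mirrors of \eqref{eq-sld-u'} and \eqref{eq-sld-t} give $\ell(c,u_2)=1+\ell(d^tc,\delta)=1+\min\bigl(\ell(c,\delta),\ell(d,\delta)\bigr)$, and likewise, because $s+t\geq1$, $\ell(c,d^{s+t}c\delta)=1+\ell(d^{s+t}c,\delta)=1+\min\bigl(\ell(c,\delta),\ell(d,\delta)\bigr)$; hence the two $\ell$-values coincide. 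Combining, $\rho(u)\geq\bigl(r(u_1,c)-1\bigr)+\ell(c,u_2)+1=r(u_1,c)+\ell(c,u_2)$.

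The hard part is precisely this last case, and specifically the realization that fusing the two $d$-blocks flanking the insertion point into a single block leaves the relevant side-distance $\ell(c,\cdot)$ unchanged. This is exactly where binarity is used: it forces the $c$-free flanks of the insertion point to be powers of the unique other letter, so the $\min$-over-alphabet formula \eqref{eq-sld-t} stabilises as soon as a block is nonempty; over a larger alphabet the two flanks may carry different letters and the identity $\ell(c,d^{s+t}c\delta)=\ell(c,u_2)$ fails, consistently with \Cref{thm-h-m} being strict in general. The degenerate sub-cases ($\gamma$ or $\delta$ empty, or $c$ missing from one side) need only routine checking and are all absorbed by the easy cases above.
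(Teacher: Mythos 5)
Your proof is correct, and it follows the same top-level strategy as the paper's: fix a witness $u=u_1u_2$, $c$ for \eqref{eq-hu-delta-u1-u1a-u2-au2} and convert the insertion point into a deletion point for \eqref{eq-m-via-r&l}, arguing by cases on the letters flanking the cut. The two arguments part ways in the central case, where $c$ is adjacent to the cut on neither side. The paper (its Case~2) looks only at the two letters immediately surrounding the cut, writes $u=u_1'\,b\,b\,u_2'$, deletes one of the two adjacent $b$'s, and pays for the switch of letter with \eqref{eq-new-r-rules*} on both sides, via $r(u_1'b,a)\le 1+r(u_1',b)$ and $\ell(a,bu_2')\le 1+\ell(b,u_2')$, before reassembling $2+r(u_1'b,b)+\ell(b,u_2')$. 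You instead expose the full blocks $u=\gamma c\,d^{s+t}\,c\,\delta$, delete the nearest $c$ to the left of the cut, and prove the block-fusion identity $\ell(c,d^{t}c\delta)=\ell(c,d^{s+t}c\delta)$ from \eqref{eq-sld-u'} and \eqref{eq-sld-t}; your degenerate cases ($c$ absent from $u_1$ or from $u_2$) go through $\ell(c,u_2)\le\rho(u_2)\le\rho(u)$ rather than the paper's direct computation in its Case~3, which is equally valid. The paper's version is marginally more local, since it never needs to locate the nearest occurrence of $c$; yours makes the role of $|A|=2$ more transparent --- the observation that the $\min$ in \eqref{eq-sld-t} stabilises as soon as the $c$-free flank is nonempty is precisely what fails for three letters, consistent with \Cref{rem-h-rho-gap}.
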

\begin{proof}
In view of \Cref{thm-h-m}, there only remains to prove $h(u)\leq \rho(u)+1$.
Let us write $A=\{a,b\}$ and assume $u\in A^*$.
Using \eqref{eq-hu-delta-u1-u1a-u2-au2} we know that
$h(u)=1+r(u_1,x)+\ell(x,u_2)$ for some letter $x\in A$ and
factorization
$u=u_1u_2$. W.l.o.g.\ we can assume $x=a$ and consider	several
possibilities
for $u_1$ and $u_2$.
\begin{description}

\item[Case 1:] $u_1$ ends with $a$, i.e., $u_1 = u'_1 a$. Then
\begin{equation}
\begin{aligned}
\notag
 h(u) &= 1+r(u'_1 a,a) +\ell(a,u_2) \eqby{\eqref{eq-sld-u'}} 2+r(u'_1,a) + \ell(a,u_2) \\
	 &\leqby{\eqref{eq-m-via-r&l}} 1+\rho(u'_1a u_2)=1+\rho(u) \:.
\end{aligned}
\end{equation}
\item[Case 2:] $u_1$ ends with $b$ and $u_2$ starts with $b$, i.e.,
  $u_1 = u'_1 b$, $u_2 = bu'_2$, and $u= u'_1 b b u'_2$. Then
\begin{equation}
\begin{aligned}
\notag
 h(u) &= 1+r(u_1,a) +\ell(a,u_2) = 1+r(u'_1 b,a) +\ell(a,bu'_2)
\\
      &\leqby{\eqref{eq-new-r-rules*}} 1 + (1+r(u'_1,b)) + (1+\ell(b,u'_2))
      = 3 + r(u'_1,b) + \ell(b,u'_2)
\\
      &\eqby{\eqref{eq-sld-u'}} 2+r(u'_1b,b) + \ell(b,u'_2) \leqby{\eqref{eq-m-via-r&l}} 1+\rho(u'_1b b u'_2)=1+\rho(u) \:.
\end{aligned}
\end{equation}
\item[Case 3:] $u_1$ ends with $b$ and $u_2$ is	 empty, i.e.,
  $u_1 = u'_1 b$ and $u=u_1$. Then
\begin{equation}
\begin{aligned}
\notag
 h(u) &= 1+r(u_1,a) +\ell(a,u_2) = 1+r(u'_1 b,a) +\ell(a,\epsilon)
\\
	&\eqby{\eqref{eq-sld-0}} 1+r(u'_1 b,a) +\ell(b,\epsilon) \leqby{\eqref{eq-new-r-rules*}} 1 + 1+r(u'_1,b) + \ell(b,\epsilon)
\\
      &\leqby{\eqref{eq-m-via-r&l}} 1+\rho(u'_1 b u'_2)=1+\rho(u) \:.
\end{aligned}
\end{equation}
\item[Remaining Cases:] If $u_2$ starts with $a$, we use symmetry and
  reason as in Case 1.	If $u_2$ starts with $b$ and $u_1$ is empty,
  we reason as in Case 2.  If both $u_1$ and $u_2$ are empty, we have
  $u=\epsilon$, $h(\epsilon)=1$ and $\rho(\epsilon)=0$.
\end{description}
All cases have been dealt with and the proof is complete.
\end{proof}

\begin{remark}
\label{rem-h-rho-gap}
\Cref{thm-h-m-2letter} cannot be generalised to words using $3$ or more
different letters.  For example, with $A=\{\ta,\tb,\tc\}$ and
$u=\mathtt{C A A C B A B A}$,
one has $h(u)=5$ and $\rho(u)=3$.  Larger gaps are possible:
$u=\mathtt{C B C B C B C B B C A B B A B A B A A A}$ has $h(u)=10$ and $\rho(u)=6$.
\qed
\end{remark}

When dealing with binary words, the alternation of letters is fixed
and it is  convenient  to describe a word by just its first letter and
the lengths of its blocks.
In this spirit, we would write a binary word that starts with
$\ta$ under the form
$u = \ta^{n_1}\tb^{n_2}\ta^{n_3}\tb^{n_4}\cdots (\ta/\tb)^{n_k}$ where
the last letter is $\ta$ or $\tb$ depending on the parity of $k$ and
where $n_i>0$ for all $i=1,\ldots,k$. It turns out that we can compute
$h(u)$ and $\rho(u)$ efficiently when the number $k$ of blocks is
notably smaller than the length of $u$. We now explain the underlying
technique as a succession of cumulative steps for which the rather
technical proofs have been relegated to the Appendix.

In a word of the form
$u = \ta^{n_1}\tb^{n_2}\ta^{n_3}\tb^{n_4}\cdots (\ta/\tb)^{n_k}$ an
\emph{isolated} (occurrence of a) letter is any block
with $n_i=1$. Our first result gives a direct formula for
the piecewise complexity of binary words with no isolated letters
except perhaps at the extremities.
\begin{lemma}[See Appendix]
\label{lem-h-binary-ni>=2}
Let $u=\ta^{n_1}\tb^{n_2}\ta^{n_3}\cdots(\ta/\tb)^{n_k}$ with $k\geq
2$. If $n_i\geq 2$ for all $i=2,3,\ldots,k-1$ and $n_1,n_k\geq 1$ then
\begin{equation}
\label{eq-rho-altab}
 \rho(u) = k + \max\biggl(\begin{array}{c}
				n_1{+}1,\: n_k{+}1,
				\\
				n_2,n_3,\ldots,n_{k-1}
		      \end{array}\biggr)-3.
\end{equation}
\end{lemma}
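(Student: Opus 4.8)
The plan is to use \Cref{eq-m-via-r\&l}, which reduces the computation of $\rho(u)$ to finding the maximum of $r(v_1,a)+\ell(a,v_2)+1$ over all factorizations $u=v_1av_2$. Since the letters of $u$ are grouped into blocks, I would first argue that for a factorization cutting inside block $j$ (say the block is $a^{n_j}$ with the cut splitting it as $a^{p}\cdot a\cdot a^{q}$, where $p+q+1=n_j$), the values $r(v_1,a)$ and $\ell(a,v_2)$ depend only on the block structure around position $j$, not on the exact split; this is because the letter being removed is $a$ and both $v_1$ ends in $a$'s and $v_2$ starts with $a$'s (unless $j$ is an extremal block). So the search over all $|u|$ positions collapses to a search over the $k$ blocks, with the interior of block $j$ contributing $r(\ta^{n_1}\cdots a^{n_j-1},a)+\ell(a,a^{n_j-1}\cdots(\ta/\tb)^{n_k})+1$ — or rather one should be slightly careful and evaluate $r$ and $\ell$ at the actual prefixes/suffixes obtained.

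The technical heart is then to compute $r(\ta^{n_1}\tb^{n_2}\cdots c^{n_j},c)$ for the relevant letter $c$ (the letter of block $j$), under the hypothesis $n_i\ge 2$ for $2\le i\le k-1$. I would prove by induction on $j$, using the recursive rules \eqref{eq-sld-u'} and \eqref{eq-sld-t} (or the alternation rule \eqref{eq-new-r-rules*}), a closed formula of the shape $r(\text{prefix up to and including block }j,\,c_j) = \bigl\lceil j/\text{something}\bigr\rceil$-like expression — concretely I expect $r$ at the end of block $j$ to equal $n_j$ plus (roughly) $\lfloor (j-1)/1\rfloor$ corrected by whether $n_1$ is $1$; more precisely, using \eqref{eq-sld-u'} we get $r(\cdots c^{n_j},c)=n_j + r(\cdots,\text{other letter at block }j{-}1)$, and then \eqref{eq-sld-t} says $r(\cdots,\text{that letter})=\min$ over the two letters of the alphabet of $r(\cdots,\cdot)$, which because all interior blocks have length $\ge 2$ lets the recursion ``descend one block at a time'' cleanly, each interior block contributing exactly $1$. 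The base cases are block $1$ (where $r(\ta^{n_1},\ta)=n_1$ but $r(\ta^{n_1},\tb)=0$) and block $2$, and here the asymmetry producing the $n_1+1$ vs.\ $n_i$ terms in \eqref{eq-rho-altab} appears: cutting inside block $1$ removes an $\ta$ with $v_2$ still containing $\ta$, which via \eqref{eq-new-r-rules*} is worth one more than cutting inside an interior block.

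Having closed formulas $r_j := r(\text{prefix to block }j,c_j)$ and symmetrically $\ell_j := \ell(c_j,\text{suffix from block }j)$, the last step is bookkeeping: for a cut interior to block $j$, \Cref{eq-m-via-r\&l}'s summand is $r(\text{prefix to block }j\text{ minus one letter},c_j)+\ell(c_j,\text{suffix from block }j\text{ minus one letter})+1$. When $2\le j\le k-1$, one of the two sides still has a full run of $c_j$ adjacent, so by \eqref{eq-sld-u'}/its mirror the ``minus one letter'' only decreases the side already counting that run by $1$; doing this carefully yields a per-block contribution of $k + n_j - 3$ for interior $j$, $k + n_1 - 2$ for $j=1$ (and symmetrically $k+n_k-2$ for $j=k$), and taking the max over $j$ gives exactly \eqref{eq-rho-altab}. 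I expect the main obstacle to be the careful treatment of the extremal blocks $j=1$ and $j=k$ and of the small cases $k=2$ and $n_1=1$ or $n_k=1$: the $+1$ shifts in \eqref{eq-rho-altab} (the $n_1{+}1$, $n_k{+}1$ versus the bare $n_i$) hinge on whether, after removing a letter from block $j$, the letter $c_j$ still occurs on the other side, and getting every boundary case consistent with the stated formula is where the ``rather technical'' Appendix proof will spend its effort; the rest is a routine induction on the block index using the rules already established in \Cref{lem-sld-comp,lem-new-r-rule*,coro-monot-subalphabet}.
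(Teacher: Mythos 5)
Your plan follows essentially the same route as the paper's appendix proof: both reduce via \Cref{prop-rho-charac} (equivalently \Cref{eq-rho-by-rl}) to computing the $r$- and $\ell$-side distances at every cut position, derive closed forms for them by induction along the word --- using the hypothesis $n_i\geq 2$ on interior blocks so that the recursion descends one block at a time, each interior block contributing exactly one unit --- and then observe that the resulting per-block contribution is $k+n_j-3$ for interior blocks versus $k+n_1-2$ and $k+n_k-2$ at the extremities, whose maximum is \eqref{eq-rho-altab}. The one imprecision is your opening claim that $r(v_1,a)$ and $\ell(a,v_2)$ do not depend on where a block is split: individually they do (one grows and the other shrinks by one per position inside a block) and only their \emph{sum} is constant across a block, but since your subsequent computation evaluates them at the actual prefixes and suffixes this does not affect the argument.
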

For example, \Cref{lem-h-binary-ni>=2} tells us that, for $u =
\ta^{34}\tb^{23}\ta^{18}\tb$, one has $\rho(u) = 4 + \max(35,23,18,2)
- 3 = 36$, (and $h(u) = 37$ thanks to \Cref{thm-h-m-2letter}).
It also validates the claim $h(\ta\tb\tb\ta)$=3 made in the introduction
of this article.
\\

There remains to account for isolated letters, where $n_i=1$. When
they appear in pairs, we invoke the following lemma.
\begin{lemma}[See Appendix]
\label{lem-h-binary-u11v}
Assume $A=\{\ta,\tb\}$ and  $u,v\in A^*$.
If $u$ does not end with $\ta$ and $v$ does not start with $\tb$
then $\rho(u\ta\tb v)=1+\rho(u v)$.
\end{lemma}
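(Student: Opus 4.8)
The plan is to establish the two inequalities $\rho(u\ta\tb v)\le 1+\rho(uv)$ and $\rho(u\ta\tb v)\ge 1+\rho(uv)$ by comparing, term by term, the factorisations appearing in \eqref{eq-m-via-r&l} for $u\ta\tb v$ and for $uv$. I would start with two structural observations. First, since $r(w_1,c)+\ell(c,w_2)=\delta(w_1w_2,w_1cw_2)$ by \Cref{lem-uv-uav}, and since deleting any occurrence of $c$ inside a maximal block of $c$'s produces the same word $w_1w_2$, the quantity optimised in \eqref{eq-m-via-r&l} depends only on the \emph{block} of the deleted letter; so the maximum may be restricted to one representative per block, which I take to be the leftmost occurrence in its block. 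Second, because $u$ does not end with $\ta$ and $v$ does not start with $\tb$, the blocks of $u\ta\tb v$ are exactly those of $u$, followed by the isolated $\ta$, the isolated $\tb$, and the blocks of $v$, with nothing merging; hence the relevant factorisations of $u\ta\tb v$ split into four families according to whether the deleted letter lies in $u$, is the inserted $\ta$, is the inserted $\tb$, or lies in $v$.

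The heart of the argument is a seam-insertion lemma: \emph{if $x$ does not end with $\ta$ and $y$ does not start with $\tb$, then $\ell(c,x\ta\tb y)=1+\ell(c,xy)$ for every $c\in A$, and moreover $r(x\ta\tb y,c)=1+r(xy,c)$ whenever appending $c$ to $xy$ opens a new block (that is, $y=\epsilon$ with $c=\ta$, or $y\neq\epsilon$ with $c$ not the last letter of $y$).} I would prove it by induction on $|x|+|y|$ using the recurrences \eqref{eq-new-r-rules*}, \eqref{eq-sld-0} and \eqref{eq-sld-u'}, together with the easy binary facts (themselves consequences of \eqref{eq-new-r-rules*}) that $r(z,\ta)\le r(z,\tb)$ when $z$ ends with $\tb$, and symmetrically $\ell(\ta,z)\ge\ell(\tb,z)$ when $z$ starts with $\ta$. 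This lemma is where I expect the real difficulty to lie: one must obtain the exact value $+1$ in \emph{both} directions, and one must pin down the side condition on the $r$-statement, because that statement genuinely fails without it — e.g.\ with $x=\ta\tb\tb\tb$, $y=\epsilon$, $c=\tb$ one has $r(x\ta\tb y,c)=3$ whereas $1+r(xy,c)=4$ — so it matters that the leftmost-in-block convention from the first observation supplies exactly the configurations the lemma can handle.

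Granting the lemma, the rest is bookkeeping. For ``$\le$'': deleting a leftmost-in-block letter $c$ of $u$, say $u=u_1cu_2$ with $u_2$ not ending in $\ta$, contributes $r(u_1,c)+\ell(c,u_2\ta\tb v)+1=1+\bigl(r(u_1,c)+\ell(c,u_2v)+1\bigr)\le 1+\rho(uv)$ by the $\ell$-part of the lemma; deleting such a letter of $v$ is symmetric and invokes the $r$-part (the leftmost-in-block choice provides precisely the required block condition). For the two inserted letters, \eqref{eq-new-r-rules*} gives $\ell(\ta,\tb v)\le\ell(\ta,v)$ and $r(u\ta,\tb)\le r(u,\tb)$, so the corresponding terms are at most $r(u,\ta)+\ell(\ta,v)+1=\delta(uv,u\ta v)+1$ and $r(u,\tb)+\ell(\tb,v)+1=\delta(uv,u\tb v)+1$ by \Cref{lem-uv-uav}, hence at most $h(uv)=1+\rho(uv)$ by \Cref{thm-h-m-2letter}. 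For ``$\ge$'': pick a factorisation of $uv$ realising $\rho(uv)$ that deletes a leftmost-in-block letter $c$; this $c$ lies in $u$ or in $v$, and lifting the deletion to the corresponding deletion in $u\ta\tb v$ and applying the lemma yields a term equal to $1+\rho(uv)$, so $\rho(u\ta\tb v)\ge 1+\rho(uv)$. The degenerate cases $u=\epsilon$ or $v=\epsilon$ are either subsumed by the above or checked directly (e.g.\ $\rho(\ta\tb)=1=1+\rho(\epsilon)$).
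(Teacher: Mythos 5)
Your overall strategy---comparing the deletion terms of \eqref{eq-m-via-r&l} position by position across the seam, with a $+1$ shift on the $\ell$-side before the seam and on the $r$-side after it---is essentially the paper's own proof recast from $r$-/$\ell$-vectors into factorisations, and most of the bookkeeping is sound (your use of \Cref{thm-h-m-2letter} to dispose of the two inserted letters is a neat shortcut, and is not circular since that theorem is proved independently of the appendix lemmas). The genuine problem is that your central seam-insertion lemma is false as stated. You correctly see that the $r$-half needs a block-opening side condition and even exhibit a counterexample, but the $\ell$-half needs the mirror condition too: take $x=\epsilon$ (which does not end with $\ta$), $y=\ta\ta$ and $c=\ta$. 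Then $\ell(\ta,x\ta\tb y)=\ell(\ta,\ta\tb\ta\ta)=\delta(\ta\ta\tb\ta\ta,\ta\tb\ta\ta)=2$ (shortest distinguisher $\ta\ta\tb$), whereas $1+\ell(\ta,xy)=1+\ell(\ta,\ta\ta)=1+2=3$. The prepended $\ta$ merges with the inserted $\ta$-block, exactly the phenomenon you flagged on the $r$-side, so the claim ``for every $c\in A$'' overclaims and any induction purporting to prove it must break somewhere.

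The gap is repairable, because the instances of the $\ell$-half that your bookkeeping actually invokes are only those where $cx$ is a suffix of $u$: since $u$ does not end with $\ta$, the case $x=\epsilon$ then forces $c=\tb$, and one checks that for these restricted pairs the identity $\ell(c,x\ta\tb y)=1+\ell(c,xy)$ does hold. This restriction is built in automatically in the paper's formulation, which works with the $\ell$-vector $\ell_i=\ell(u(i),u(i,|u|)\cdots)$ and therefore only ever evaluates $\ell$ at such pairs. As written, however, your key lemma must be restated with the symmetric side condition on its $\ell$-half (e.g.\ ``$x\neq\epsilon$, or $c=\tb$'', or simply ``$cx$ is a suffix of $u$'') and its inductive proof carried out under that hypothesis; without this the argument does not stand.
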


Finally, for isolated letters that are surrounded by multiple letters,
we use another lemma.
\begin{lemma}[See Appendix]
\label{lem-h-binary-un1mv}
Assume $A=\{\ta,\tb\}$ and $u,v\in A^*$.  Then $\rho(u \ta\ta\tb\ta\ta
v)=\rho(u\ta\ta\ta\tb\tb\tb \tilde{v})$ where $\tilde{v}$ is the word
obtained from $v$ by replacing any $\ta$ by a $\tb$ and vice versa.
\end{lemma}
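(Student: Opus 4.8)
The plan is to exploit the $r$/$\ell$-vector formula $\rho(w)=1+\max_i(r_i+\ell_i)$ from \eqref{eq-rho-by-rl} and to compare, position by position, the vectors of $w:=u\ta\ta\tb\ta\ta v$ and $w':=u\ta\ta\ta\tb\tb\tb\tilde v$ (both are non-empty, so the formula applies). A preliminary observation is the \emph{swap symmetry}: the letter exchange $\ta\leftrightarrow\tb$ preserves $\subword$, hence preserves $\delta$ and all the side distances; in particular $\ell(\ta,\tilde x)=\ell(\tb,x)$ and $\ell(\tb,\tilde x)=\ell(\ta,x)$, the $\ell$-vector of $\tilde x$ equals that of $x$ entrywise, and --- since the recursion \eqref{eq-new-r-rules*} only ever consults the two running values $r(\cdot,\ta)$ and $r(\cdot,\tb)$ --- propagating $r$-values left-to-right through $\tilde x$ starting from $(p,q)$ yields, entrywise, the same numbers as propagating through $x$ starting from $(q,p)$. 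The mirror statements hold for the right-to-left recursion computing $\ell$-vectors.

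I would then split the positions of $w$, and symmetrically of $w'$, into three zones: the prefix $u$; the middle factor ($\ta\ta\tb\ta\ta$ in $w$, $\ta\ta\ta\tb\tb\tb$ in $w'$); and the suffix ($v$ in $w$, $\tilde v$ in $w'$), and argue that $\max_i(r_i+\ell_i)$ agrees on each zone. \textbf{Prefix zone.} $r_i$ depends only on the prefix of the word up to position $i$, which is a prefix of $u$ in both cases, so the $r_i$'s agree; and the $\ell_i$'s at a prefix position are, by the mirror of \eqref{eq-new-r-rules*} read from the right, determined by $u$ together with the pair $\bigl(\ell(\ta,X),\ell(\tb,X)\bigr)$ where $X$ is what follows $u$. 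A short computation with the mirror of \eqref{eq-new-r-rules*} shows that, writing $m':=\min\bigl(\ell(\ta,v)+2,\ \ell(\tb,v)+1\bigr)$, this pair equals $(m'+2,\,m')$ \emph{both} for $X=\ta\ta\tb\ta\ta v$ and for $X=\ta\ta\ta\tb\tb\tb\tilde v$ (using the swap symmetry to rewrite $\ell(\cdot,\tilde v)$ via $\ell(\cdot,v)$); hence the $\ell_i$'s agree on the prefix zone too. \textbf{Suffix zone.} By the mirror argument the $\ell_i$'s form the $\ell$-vector of $v$, resp.\ of $\tilde v$ (equal entrywise by the swap symmetry), and the $r_i$'s are obtained by $r$-propagation through $v$ from $\bigl(m+2,\,m\bigr)$, resp.\ through $\tilde v$ from $\bigl(m,\,m+2\bigr)$, where $m:=\min\bigl(r(u,\ta)+2,\ r(u,\tb)+1\bigr)$; these agree by the swap symmetry.

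\textbf{Middle zone.} Here I simply tabulate $r_i$ and $\ell_i$ at the five positions of $\ta\ta\tb\ta\ta$ in $w$ and the six positions of $\ta\ta\ta\tb\tb\tb$ in $w'$, using \eqref{eq-new-r-rules*} from the left (started at $\bigl(r(u,\ta),r(u,\tb)\bigr)$) and its mirror from the right (started at $\bigl(\ell(\ta,v),\ell(\tb,v)\bigr)$, resp.\ at the swapped pair for $\tilde v$). Setting $P:=r(u,\ta)+m'+1$ and $Q:=m+\ell(\ta,v)+1$, the one-line identity $\min(x+2,y+1)=\min(x+1,y)+1$ makes the two sums $r_i+\ell_i$ that occur only in $w'$ coincide identically with $P$ and with $Q$, while the single sum occurring only in $w$ is at most $Q-1$; consequently both block maxima equal $\max(P,Q)$. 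Combining the three zones gives $\max_i(r_i^w+\ell_i^w)=\max_i(r_i^{w'}+\ell_i^{w'})$, i.e.\ $\rho(w)=\rho(w')$. The cases $u=\epsilon$ or $v=\epsilon$ require no separate treatment since \eqref{eq-new-r-rules*} and its mirror apply verbatim with $r(\epsilon,\cdot)=\ell(\cdot,\epsilon)=0$. I expect the genuine work to lie entirely in the bookkeeping of these R4-propagations --- in particular in checking that the incoming boundary pairs at the right end of $u$ and at the left end of $v$ really do coincide for $w$ and for $w'$, which is precisely what forces the block lengths $3$ and $3$ in $w'$ together with the flip $v\mapsto\tilde v$.
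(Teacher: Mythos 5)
Your argument is correct, and although it rests on the same foundation as the paper's proof --- comparing the $r$- and $\ell$-vectors of $w=u\ta\ta\tb\ta\ta v$ and $w'=u\ta\ta\ta\tb\tb\tb\tilde v$ through \eqref{eq-rho-by-rl} and the recursion \eqref{eq-new-r-rules*} --- the execution is genuinely different. The paper proves a shifted identity $r'_{i+1}=r_i$ for every position past the critical point by an induction running through the whole of $v$, which must single out the first occurrence of $\tb$ in $v$ (because the last-occurrence recursion \eqref{eq-algo-ri} there reaches back across the modified block), and then disposes of the one unmatched position of $w'$ by a domination argument. You replace that induction by the observation that \eqref{eq-new-r-rules*}-propagation is a function of the running pair $\bigl(r(\cdot,\ta),r(\cdot,\tb)\bigr)$ alone, so that only the boundary pairs emitted by the two middle factors need checking; I verified that these are indeed $(m+2,m)$ and $(m,m+2)$ on the $r$ side and $(m'+2,m')$ for both words on the $\ell$ side, that the five middle sums of $w$ are $P,P,m+m'-2,Q,Q$ while the six of $w'$ are $P,P,P,Q,Q,Q$, and that $m+m'-2\le\min(P,Q)-1$, so both block maxima equal $\max(P,Q)$ as you claim. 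What your route buys is uniformity: the swap symmetry does the work of the suffix induction, there is no case distinction on whether or where $\tb$ occurs in $v$, and the degenerate cases $u=\epsilon$ or $v=\epsilon$ genuinely come for free. The price is the explicit tabulation of the eleven middle positions, which the paper compresses into the single inequality $r'_K+\ell'_K\le r'_{K-1}+\ell'_{K-1}$.
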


With these three lemmas we have a complete method for binary words.
\begin{theorem}
For binary words $u\in \{\ta,\tb\}^*$, $h(u)$ and $\rho(u)$ can be
computed in time $O(|u|)$, even if $u$ is given in exponential
notation $u=a_1^{n_1}a_2^{n_2} \cdots a_k^{n_k}$ with input size
$|u|\eqdef \sum_{i=1}^k(1+\log_2 n_i)$.
\end{theorem}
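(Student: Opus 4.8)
The plan is to invoke \Cref{thm-h-m-2letter}, which reduces the problem to computing $\rho(u)$, and then to recast \Cref{lem-h-binary-u11v,lem-h-binary-un1mv} as two rewriting rules acting purely on the \emph{sequence of block lengths} of $u$. First normalise $u$ to maximal block form $a_1^{n_1}\cdots a_k^{n_k}$ with $a_i\neq a_{i+1}$ (merging adjacent equal letters, which costs $O(|u|)$), then discard the letters and keep only the tuple $(n_1,\dots,n_k)$. This is legitimate because the two rules below, the closed formula of \Cref{lem-h-binary-ni>=2}, and hence the value $\rho(u)$ itself, all depend on the block lengths only and not on the leading letter; in particular the suffix-complementation that appears in \Cref{lem-h-binary-un1mv} is invisible at this level, since complementing a suffix changes no $n_i$.

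The two rules are: \textbf{(A)} if $n_i=n_{i+1}=1$ for some $i$, delete both entries; \textbf{(B)} if $n_i=1$ with both $n_{i-1}$ and $n_{i+1}$ defined and $\ge 2$, replace the triple $(n_{i-1},1,n_{i+1})$ by the pair $(n_{i-1}{+}1,\,n_{i+1}{+}1)$. Rule (A) is an instance of \Cref{lem-h-binary-u11v} or of its $\ta\!\leftrightarrow\!\tb$ mirror (available since $\rho$ is invariant under renaming the alphabet): the two side conditions of that lemma hold automatically, because in an alternating word the block preceding a unit block, if it exists, carries the other letter, and likewise for the following block; the rule decreases the number of blocks by $2$ and, by the lemma, replaces $\rho$ by $\rho-1$. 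Rule (B) is an instance of \Cref{lem-h-binary-un1mv} or its mirror; one checks on the block lengths that $(\dots,n_{i-1},1,n_{i+1},\dots)$ is sent to $(\dots,n_{i-1}{+}1,n_{i+1}{+}1,\dots)$ in all boundary cases, the rule decreases the number of blocks by $1$ and leaves $\rho$ unchanged. Applying the rules until none applies yields a block sequence $(m_1,\dots,m_{k'})$ together with the number $\Delta$ of (A)-steps used, and $\rho(u)=\rho(w)+\Delta$ where $w$ is a binary word with that block-length sequence (the value not depending on which one). The process terminates because each step strictly decreases the number of blocks; and when it halts there is no interior unit block, since such a block has two neighbours and, if either is a unit, (A) applies, while if neither is, (B) applies. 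Hence $m_i\ge 2$ for $2\le i\le k'-1$, so $\rho(w)$ equals $0$ if $k'=0$, equals $m_1$ if $k'=1$ (using $\rho(a^n)=n$), and equals $k'+\max(m_1{+}1,\,m_{k'}{+}1,\,m_2,\dots,m_{k'-1})-3$ if $k'\ge 2$ by \Cref{lem-h-binary-ni>=2}; then $h(u)=\rho(u)+1$.

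For the running time, keep the block sequence in a doubly linked list and run a single shift--reduce (stack) pass over it: each original block is pushed once, each product of a reduction is pushed once and there are $\le k$ reductions, so only $O(k)$ list operations occur overall. The only super-constant cost per step is the $O(\log n_i)$-bit arithmetic $n\mapsto n{+}1$ and the comparisons needed to test whether a neighbour has size $\ge 2$; together with the final pass that computes the maximum for \Cref{lem-h-binary-ni>=2}, these sum to $O\bigl(\sum_i\log n_i\bigr)$. Altogether the running time is $O\bigl(k+\sum_i\log n_i\bigr)=O(|u|)$ for the stated input size.

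The points that need care in the full proof are: that rules (A) and (B) are faithful block-level renderings of the two lemmas in \emph{all} boundary cases (including the automatic fulfilment of the side conditions of \Cref{lem-h-binary-u11v} and the availability of the letter-swapped mirrors of both lemmas), and that the suffix-complementation of \Cref{lem-h-binary-un1mv} can genuinely be ignored so that one never needs to reconstruct the actual letters; finally, one should verify that each read triggers at most one reduction, so that the shift--reduce pass together with the bignum arithmetic stays within the $O(|u|)$ budget. I expect the first of these --- pinning down precisely how each lemma transforms the block-length sequence, especially near the ends of the word --- to be the main obstacle, the rest being routine.
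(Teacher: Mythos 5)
Your proposal is correct and follows essentially the same route as the paper: reduce to $\rho$ via \Cref{thm-h-m-2letter}, eliminate isolated blocks using \Cref{lem-h-binary-u11v} (pairs of unit blocks, $\rho$ drops by 1) and \Cref{lem-h-binary-un1mv} (a unit block between blocks of size $\geq 2$, $\rho$ unchanged and the two neighbours each grow by 1), then apply the closed formula of \Cref{lem-h-binary-ni>=2}; the paper only sketches this on an example, whereas you spell out termination, the automatic side conditions, and the block-length-only invariance. The one imprecision is the closing remark that ``each read triggers at most one reduction'' --- reductions can cascade (e.g.\ on $(3,1,2,1,2)$) --- but your earlier amortised count of at most $k$ reductions in total already suffices for the $O(|u|)$ bound.
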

Let us explain the algorithm on an example.  Given
$u=\ta^1\tb^5\ta^1\tb^1 \ta^1\tb^1\ta^1\tb^7\ta^1\tb^4$, we first
remove the pairs of isolated letters. This
yields $u'=\ta^1\tb^5\ta^1\tb^7\ta^1\tb^4$ with $\rho(u)=2+\rho(u')$.
Then we remove the first isolated letter surrounded by larger blocks,
obtain $u''=\ta^1\tb^6\ta^8\tb^1\ta^4$, remove the other and obtain
$u'''=\ta^1\tb^6\ta^9\tb^5$ with $\rho(u''')=\rho(u'')=\rho(u')$. Here
\Cref{lem-h-binary-ni>=2} applies and yields
$\rho(u''')=4+\max(1+1,6,9,5+1)-3=10$. Thus $\rho(u)=12$ and
$h(u)=13$.
Finally,
the above algorithm can work directly on a list of pairs
$(a_1,n_1), (a_2,n_2), \ldots, (a_k,n_k)$.
\\

An interesting application of these results is that we can derive an
exact value for the maximum length of a binary word having a given piecewise complexity.
\begin{theorem}
\label{thm-maxlenu-binary}
For any $u$ binary word the following holds:
\begin{equation}
\label{eq-maxlenu-binary}
|u|\leq \Bigl\lfloor\frac{h(u)^2}{4}\Bigr\rfloor+h(u)-1\:.
\end{equation}
Furthermore, the bound is tight.
\end{theorem}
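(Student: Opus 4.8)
The plan is to combine the structural lemmas for binary words with the identity $h(u)=\rho(u)+1$ from \Cref{thm-h-m-2letter}, turning the problem into a question about how long a binary word can be for a fixed value of $\rho$. First I would reduce to words with no isolated letters, except possibly at the two extremities: \Cref{lem-h-binary-u11v} shows that removing a pair of adjacent isolated letters $\ta\tb$ (in the right alternation position) decreases $\rho$ by exactly $1$ while decreasing the length by $2$, and \Cref{lem-h-binary-un1mv} shows that an isolated letter surrounded by larger blocks can be merged into its neighbours without changing $\rho$ and without changing $|u|$. A word maximising $|u|$ for a given $\rho$ must therefore have \emph{no} isolated interior letters at all (any such letter could be absorbed, making room to add length elsewhere while keeping $\rho$ fixed --- or, more carefully, one argues that a maximiser can be chosen in the normalised form to which \Cref{lem-h-binary-ni>=2} applies). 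So it suffices to prove the bound for words $u=\ta^{n_1}\tb^{n_2}\cdots(\ta/\tb)^{n_k}$ with $n_i\geq 2$ for $2\leq i\leq k-1$ and $n_1,n_k\geq 1$.

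For such words \Cref{lem-h-binary-ni>=2} gives the closed form $\rho(u)=k+M-3$ where $M=\max(n_1{+}1,\,n_k{+}1,\,n_2,\dots,n_{k-1})$, hence $h(u)=k+M-2$. The optimisation problem is then: maximise $|u|=\sum_{i=1}^k n_i$ subject to the constraints $2\leq n_i\leq M$ for interior $i$, $1\leq n_1\leq M-1$, $1\leq n_k\leq M-1$, and with $k$ and $M$ linked by $k+M=h(u)+2$. For fixed $k$ and $M$ the sum is maximised by pushing every block to its ceiling: $n_1=n_k=M-1$ and $n_i=M$ for the $k-2$ interior blocks, giving $|u|\leq (k-2)M+2(M-1)=kM-2$. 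Writing $h=h(u)$, so $k+M=h+2$, we must maximise $kM-2$ over positive integers with $k+M=h+2$; the product $kM$ is maximised when $k$ and $M$ are as equal as possible, yielding $kM\leq\lfloor (h+2)^2/4\rfloor$ and therefore $|u|\leq\lfloor (h+2)^2/4\rfloor-2=\lfloor h^2/4\rfloor+h-1$ after expanding $\lfloor(h+2)^2/4\rfloor=\lfloor h^2/4\rfloor+h+1$. This gives the inequality \eqref{eq-maxlenu-binary}.

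For tightness, I would exhibit an explicit family attaining the bound: take $k$ and $M$ with $k+M=h+2$ and $\{k,M\}$ as balanced as possible, and set $u^\star=\ta^{M-1}\tb^M\ta^M\cdots(\ta/\tb)^{M-1}$ with $k$ blocks (interior blocks of size $M$, the two extreme blocks of size $M-1$); one checks $n_i\geq 2$ for interior blocks as soon as $M\geq 2$ (the small cases $h\leq 2$ being handled by hand, e.g.\ $h(a^n)=n+1$ already meets the bound), applies \Cref{lem-h-binary-ni>=2} to confirm $\rho(u^\star)=k+M-3=h-1$ hence $h(u^\star)=h$, and computes $|u^\star|=kM-2=\lfloor h^2/4\rfloor+h-1$.

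The main obstacle I anticipate is the reduction step: justifying rigorously that a length-maximiser can be assumed to have the normalised shape required by \Cref{lem-h-binary-ni>=2}. Removing a pair of isolated letters shortens the word, so it does not directly show a maximiser has none; the correct argument runs the other way --- given any $u$, one repeatedly applies \Cref{lem-h-binary-u11v} and \Cref{lem-h-binary-un1mv} to obtain a normalised $u_0$ with $\rho(u_0)=\rho(u)-(\text{number of isolated pairs removed})\le\rho(u)$ and $|u_0|=|u|-2\cdot(\text{that number})$, then bounds $|u_0|$ via the closed form, and finally checks that the "lost" length $2p$ against the "saved" complexity $p$ is consistent with the quadratic bound, i.e.\ that $|u|=|u_0|+2p$ and $h(u)=h(u_0)+p$ still satisfy \eqref{eq-maxlenu-binary} --- this amounts to verifying that the function $h\mapsto\lfloor h^2/4\rfloor+h-1$ grows fast enough that adding $p$ to $h$ buys at least $2p$ of length, which is immediate since its discrete derivative is at least $2$ for $h\geq 2$. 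Getting the edge cases (very small $h$, words that are a single block, the parity of $h$ in the balancing of $k$ and $M$) stated cleanly is the fiddly part.
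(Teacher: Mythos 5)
Your proposal is correct and follows essentially the same route as the paper --- normalise away isolated interior letters via \Cref{lem-h-binary-u11v,lem-h-binary-un1mv}, apply the closed formula of \Cref{lem-h-binary-ni>=2} together with $h(u)=\rho(u)+1$, and maximise $kM-2$ subject to $k+M=h+2$ --- and in fact it supplies the optimisation and book-keeping details that the paper explicitly leaves to the reader, with the same tight examples. One small correction: \Cref{lem-h-binary-un1mv} does not preserve length but replaces the $5$-letter pattern $\ta\ta\tb\ta\ta$ by the $6$-letter pattern $\ta\ta\ta\tb\tb\tb$ (compare the paper's worked example, where $u'$ has length $19$ and $u''$ has length $20$); this only makes the normalised word longer, which is the harmless direction for your inequality $|u|\leq|u_0|+2p$.
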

\begin{proof}
We first argue that the claimed bound is reached. If $h(u)=2g$ is even, the bound is reached
by taking $u$ made of $g+1$ alternating blocks of length $g+1$ each
except for the two extreme blocks of length $g$. E.g., $h(u)=6$ is
obtained with $u=\ta^3\tb^4\ta^4\tb^3$ having maximum length 14.
For odd $h(u)=2g-1$ the claimed bound is reached with $g$ blocks of length
$g+1$ (except for extreme blocks). E.g.,  $h(u)=7$ is obtained
with $u=\ta^4\tb^5\ta^5\tb^4$ having maximum length 18.

We let the reader prove that \Cref{eq-maxlenu-binary} holds. This is
direct from \Cref{lem-h-binary-ni>=2} when $u$ has no isolated
letters. And if $u$ has isolated letters, \Cref{lem-h-binary-u11v} or
\Cref{lem-h-binary-un1mv} shows how to build a longer word with same $h(u)$.
\end{proof}

\section{Conclusion}
\label{sec-concl}

In this article, we focused on the piecewise complexity of individual
words, as captured by the piecewise height $h(u)$ and the minimality
index $\rho(u)$, a new measure suggested by~\cite{simon72} and that we
introduce here.

These measures admit various characterisations, including
\Cref{prop-hu-delta-u-u1au2,prop-rho-charac}, that can be leveraged
into efficient algorithms running in quadratic time $O(|A|\cdot|u|)$
for $h(u)$ and linear time $O(|A|+|u|)$ for $\rho(u)$. Our analysis
further allows to establish monotonicity and convexity properties for
$h$ and $\rho$
and to relate $h$ and $\rho$.  \\

This first foray into the piecewise complexity of individual words
raise a wide range of new questions. Let us mention just two:
\begin{enumerate}
\item
The minimality index was found to be very useful in the analysis of
binary words but its usefulness could certainly be improved. For
example, is there a way to define $\rho(L)$ for a PT-language $L$?
Can we relate $\rho(u)$ to some more usual measures of logical complexity?
\item
For
fixed alphabet size $k=|A|$, what are the longest words having
piecewise complexity at most $n$? A lower bound can be found
in~\cite{KS-lmcs2019} where is given a family of words
$(U_{k,\lambda})_{k,\lambda\in\Nat}$ having length $(\lambda+1)^k-1$
and piecewise complexity $k\lambda+1$.	An upper bound is given
in~\cite[Prop.~3.8]{KS-lmcs2019}:
\begin{equation}
	|u|\leq \biggl(\frac{h(u)-1}{|A|}+2\biggr)^{|A|}-1 \:.
\end{equation}
Here the lower and the upper bounds are similar but do not match
exactly.  For $|A|=2$ we can give the exact maximum length of $u$ as a
function of $h(u)$, see \Cref{thm-maxlenu-binary}, but for $|A|>2$ the problem is open.
\end{enumerate}

\section*{Acknowledgements}

This article is an updated and extended presentation of results that
first appeared in~\cite{PSVV-sofsem2024}, where they are supplemented
by algorithms, obtained in collaborations with M.~Praveen and
J.~Veron, for computing $h(u^n)$ and $\rho(v^n)$ efficiently. It
profited from remarks and suggestions by our two co-authors, as well
as further suggestions from anonymous reviewers.



\bibliographystyle{alpha}
\bibliography{subwords}

\newcommand{\etalchar}[1]{$^{#1}$}
\begin{thebibliography}{GKK{\etalchar{+}}21}

\bibitem[BFH{\etalchar{+}}20]{barker2020}
L.~Barker, P.~Fleischmann, K.~Harwardt, F.~Manea, and D.~Nowotka.
\newblock Scattered factor-universality of words.
\newblock In {\em Proc.\ DLT 2020}, volume 12086 of {\em Lecture Notes in
  Computer Science}, pages 14--28. Springer, 2020.

\bibitem[BSS12]{bojanczyk2012b}
M.~Boja{\'{n}}czyk, L.~Segoufin, and H.~Straubing.
\newblock Piecewise testable tree languages.
\newblock {\em Logical Methods in Comp.\ Science}, 8(3), 2012.

\bibitem[CP18]{carton2018b}
O.~Carton and M.~Pouzet.
\newblock {Simon}'s theorem for scattered words.
\newblock In {\em Proc.\ DLT 2018}, volume 11088 of {\em Lecture Notes in
  Computer Science}, pages 182--193. Springer, 2018.

\bibitem[DGK08]{DGK-ijfcs08}
V.~Diekert, P.~Gastin, and M.~Kufleitner.
\newblock A survey on small fragments of first-order logic over finite words.
\newblock {\em Int.\ J.\ Foundations of Computer Science}, 19(3):513--548,
  2008.

\bibitem[FK18]{fleischer2018}
L.~Fleischer and M.~Kufleitner.
\newblock Testing {Simon}'s congruence.
\newblock In {\em Proc.\ MFCS 2018}, volume 117 of {\em Leibniz International
  Proceedings in Informatics}, pages 62:1--62:13. Leibniz-Zentrum f{\"u}r
  Informatik, 2018.

\bibitem[GKK{\etalchar{+}}21]{gawrychowski2021}
P.~Gawrychowski, M.~Kosche, T.~Ko{\ss}, F.~Manea, and S.~Siemer.
\newblock Efficiently testing {Simon}'s congruence.
\newblock In {\em Proc.\ STACS 2021}, volume 187 of {\em Leibniz International
  Proceedings in Informatics}, pages 34:1--34:18. Leibniz-Zentrum f{\"u}r
  Informatik, 2021.

\bibitem[GS16]{goubault2016}
J.~Goubault{-}Larrecq and S.~Schmitz.
\newblock Deciding piecewise testable separability for regular tree languages.
\newblock In {\em Proc.\ ICALP 2016}, volume~55 of {\em Leibniz International
  Proceedings in Informatics}, pages 97:1--97:15. Leibniz-Zentrum f{\"u}r
  Informatik, 2016.

\bibitem[HS19]{HS-ipl2019}
S.~Halfon and {\relax Ph}.~Schnoebelen.
\newblock On shuffle products, acyclic automata and piecewise-testable
  languages.
\newblock {\em Information Processing Letters}, 145:68--73, 2019.

\bibitem[KCM08]{kontorovich2008}
L.~Kontorovich, C.~Cortes, and M.~Mohri.
\newblock Kernel methods for learning languages.
\newblock {\em Theoretical Computer Science}, 405(3):223--236, 2008.

\bibitem[KKS15]{KKS-ipl2015}
P.~Karandikar, M.~Kufleitner, and {\relax Ph}.~Schnoebelen.
\newblock On the index of {Simon}'s congruence for piecewise testability.
\newblock {\em Information Processing Letters}, 115(4):515--519, 2015.

\bibitem[Kl{\'{\i}}11]{klima2011}
O.~Kl{\'{\i}}ma.
\newblock Piecewise testable languages via combinatorics on words.
\newblock {\em Discrete Mathematics}, 311(20):2124--2127, 2011.

\bibitem[KS16]{KS-csl2016}
P.~Karandikar and {\relax Ph}.~Schnoebelen.
\newblock The height of piecewise-testable languages with applications in
  logical complexity.
\newblock In {\em Proc.\ CSL 2016}, volume~62 of {\em Leibniz International
  Proceedings in Informatics}, pages 37:1--37:22. Leibniz-Zentrum f{\"u}r
  Informatik, 2016.

\bibitem[KS19]{KS-lmcs2019}
P.~Karandikar and {\relax Ph}.~Schnoebelen.
\newblock The height of piecewise-testable languages and the complexity of the
  logic of subwords.
\newblock {\em Logical Methods in Comp.\ Science}, 15(2), 2019.

\bibitem[Mat98]{matz98}
O.~Matz.
\newblock On piecewise testable, starfree, and recognizable picture languages.
\newblock In {\em Proc.\ FOSSACS '98}, volume 1378 of {\em Lecture Notes in
  Computer Science}, pages 203--210. Springer, 1998.

\bibitem[MT15]{masopust2015}
T.~Masopust and M.~Thomazo.
\newblock On the complexity of $k$-piecewise testability and the depth of
  automata.
\newblock In {\em Proc.\ DLT 2015}, volume 9168 of {\em Lecture Notes in
  Computer Science}, pages 364--376. Springer, 2015.

\bibitem[Pin86]{pin86}
J.-\'E. Pin.
\newblock {\em Varieties of Formal Languages}.
\newblock Plenum, New-York, 1986.

\bibitem[PP04]{perrin2004}
D.~Perrin and J.-{\'E}. Pin.
\newblock {\em Infinite words: Automata, Semigroups, Logic and Games}, volume
  141 of {\em Pure and Applied Mathematics Series}.
\newblock Elsevier Science, 2004.

\bibitem[PSVV24]{PSVV-sofsem2024}
M.~Praveen, {\relax Ph}.~Schnoebelen, J.~Veron, and I.~Vialard.
\newblock On the piecewise complexity of words and periodic words.
\newblock In {\em Proc.\ SOFSEM 2024}, volume 14519 of {\em Lecture Notes in
  Computer Science}, pages 456--470. Springer, 2024.

\bibitem[PZ22]{place2022}
{\relax Th.}~Place and M.~Zeitoun.
\newblock A generic polynomial time approach to separation by first-order logic
  without quantifier alternation.
\newblock In {\em Proc.\ FST\&TCS 2022}, volume 250 of {\em Leibniz
  International Proceedings in Informatics}, pages 43:1--43:22. Leibniz-Zentrum
  f{\"u}r Informatik, 2022.

\bibitem[RHF{\etalchar{+}}13]{rogers2013}
J.~Rogers, J.~Heinz, M.~Fero, J.~Hurst, D.~Lambert, and S.~Wibel.
\newblock Cognitive and sub-regular complexity.
\newblock In {\em Proc.\ FG 2012 \& 2013}, volume 8036 of {\em Lecture Notes in
  Computer Science}, pages 90--108. Springer, 2013.

\bibitem[Sim72]{simon72}
I.~Simon.
\newblock {\em Hierarchies of Event with Dot-Depth One}.
\newblock PhD thesis, University of Waterloo, Waterloo, ON, Canada, 1972.

\bibitem[Sim75]{simon75}
I.~Simon.
\newblock Piecewise testable events.
\newblock In {\em Proc.\ 2nd GI Conf.\ on Automata Theory and Formal
  Languages}, volume~33 of {\em Lecture Notes in Computer Science}, pages
  214--222. Springer, 1975.

\bibitem[Sim03]{simon2003}
I.~Simon.
\newblock Words distinguished by their subwords.
\newblock In {\em Proc.\ WORDS 2003}, 2003.

\bibitem[SS83]{sakarovitch83}
J.~Sakarovitch and I.~Simon.
\newblock Subwords.
\newblock In M.~Lothaire, editor, {\em Combinatorics on Words}, volume~17 of
  {\em Encyclopedia of Mathematics and Its Applications}, chapter~6, pages
  105--142. Cambridge Univ.\ Press, 1983.

\bibitem[Tho82]{thomas82}
W.~Thomas.
\newblock Classifying regular events in symbolic logic.
\newblock {\em Journal of Computer and System Sciences}, 25(3):360--376, 1982.

\bibitem[Zet18]{zetzsche2018}
G.~Zetzsche.
\newblock Separability by piecewise testable languages and downward closures
  beyond subwords.
\newblock In {\em Proc.\ LICS 2018}, pages 929--938. ACM Press, 2018.

\end{thebibliography}

\appendix
\section{Proofs for \Cref{sec-binary}}

Let us fix $u = \ta^{n_1}\tb^{n_2}\ta^{n_3}\tb^{n_4}\cdots
(\ta/\tb)^{n_k}$ with $n_i\geq 1$ for all $i=1,\ldots,k$ and write
$\lambda_i \eqdef n_1+\cdots+n_i$ for the cumulative length of the
first $i$ blocks, so that $|u|=\lambda_k$. A position
$p\in\{1,2,\ldots,|u|\}$ has a unique decomposition of the form
$p=\lambda_i+d$ with $0\leq i<k$ and $1\leq d\leq n_i$.

\subsection{Proof of \Cref{lem-h-binary-ni>=2}}

We are now ready to consider the $r$-vector $\tuple{r_1, r_2, \ldots,
r_{|u|}}$ of $u$ as given in \Cref{def-r-l-vectors}.
\begin{lemma}
Assume $u$ satisfies $n_i\geq 2$ for $i=2,\ldots,k-1$.	Then, for all
$0\leq i<k$ and $1\leq x\leq n_{i+1}$,
\begin{equation}
\label{eq-rix}
r_{\lambda_i+x} = \begin{cases} i+x-1 &\text{if $i=0$,}\\
			     i+x-2 &\text{if $i>0$.}
		\end{cases}
\end{equation}
\end{lemma}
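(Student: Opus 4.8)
The plan is to prove \eqref{eq-rix} by induction on the position $p = \lambda_i + x$, working left to right through $u$ and using the alternation recurrence \eqref{eq-new-r-rules*} at each step. The statement to prove is a closed formula for every entry of the $r$-vector, so the natural proof is a single induction that simultaneously handles the ``inside a block'' step and the ``just after a block boundary'' step.

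First I would record two base observations: $r_1 = r(\epsilon, u(1)) = 0$, matching the $i=0$, $x=1$ case of \eqref{eq-rix}; and more generally, for a run of identical letters at the very start, $r_{1+x} = r_x + 1$ by the $a=b$ clause of \eqref{eq-new-r-rules*}, which gives $r_x = x-1$ throughout the first block, i.e.\ the $i=0$ line. For the inductive step I would distinguish two cases. \textbf{Case A: $x > 1$ (a non-initial position within block $i+1$).} Here $u(p) = u(p-1)$, so \eqref{eq-new-r-rules*} gives $r_p = r_{p-1} + 1$, and the claimed value follows directly from the induction hypothesis applied to $p-1 = \lambda_i + (x-1)$ (same $i$, same case split on $i=0$ vs $i>0$). \textbf{Case B: $x = 1$ (the first letter of block $i+1$, with $i \geq 1$).} Now $u(p) = u(\lambda_i + 1)$ differs from $u(\lambda_i) = u(p-1)$, which is the last letter $c$ of block $i$; call the new letter $a$. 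By the $a \neq b$ clause, $r_p = \min\bigl(r(u(0,\lambda_i), c) + 1,\ r(u(0,\lambda_i), a)\bigr)$. The first term is $r_{\lambda_i} + 1$, which by the induction hypothesis equals $(i-1) + n_i - 1 + 1 = i + n_i - 2$ when $i \geq 2$ (or $n_1$ when $i = 1$). The second term is $r(u(0,\lambda_i), a)$ where $a$ is the letter that heads block $i+1$; since the alternation is strict, $a$ is exactly the letter of block $i-1$, so this equals $r_{\lambda_{i-1}+1}$ shifted — more carefully, I would argue via \eqref{eq-algo-ri} that $r(u(0,\lambda_i), a) = 1 + \min(r_{\lambda_{i-2}+1}, \ldots, r_{\lambda_{i-1}})$, i.e.\ $a$ last occurred at position $\lambda_{i-1}$ and the minimum over the intervening block is achieved at its left end, giving the value $i - 2$ for $i \geq 2$ (and $0$ for $i = 1$ since $a$ did not occur before). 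Then $\min(i + n_i - 2,\ i-2) = i - 2$ because $n_i \geq 2$; this is precisely the claimed $i + x - 2 = i - 2$.

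The main obstacle I expect is Case B, specifically the clean evaluation of the ``second term'' $r(u(0,\lambda_i), a)$: one must see that the letter $a$ beginning block $i+1$ last occurred at the start of block $i-1$ (here the hypothesis $n_i \geq 2$ is what makes block $i$ a genuine ``barrier'' — actually $n_i \ge 1$ suffices for this, but $n_i \geq 2$ is what forces the $\min$ to pick the smaller term rather than the $r_{\lambda_i}+1$ term), and that along the way the minimum of the $r$-values is attained at the leftmost position of block $i-1$, whose value is $(i-2) - 2 + 1 = i - 3$ for $i \geq 3$ — hence the $+1$ in \eqref{eq-algo-ri} restores it to $i-2$. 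Keeping the boundary cases $i = 1$ and $i = 2$ straight (where ``block $i-1$'' is the initial block governed by the $i = 0$ line, or does not exist) will require a little care but no real difficulty. I would therefore structure the write-up as: base case (first block, giving the $i = 0$ line), then the general step split into Case A and Case B, invoking \eqref{eq-new-r-rules*} and \eqref{eq-algo-ri} and the hypothesis $n_i \geq 2$ exactly where indicated above.
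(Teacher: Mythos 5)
Your overall strategy---induction on the position $p=\lambda_i+x$, with a trivial step inside a block and a genuine computation at each block boundary---is exactly the paper's, but your Case~B computation is wrong, and the error is conceptual rather than typographical. When you evaluate the ``second term'' via \eqref{eq-algo-ri} you take the minimum over the positions $\lambda_{i-2}+1,\ldots,\lambda_{i-1}$, i.e.\ over block $i-1$, the block \emph{ending} at the last occurrence of $a$. But \eqref{eq-algo-ri} minimises over the positions \emph{from} the last occurrence $j=\lambda_{i-1}$ \emph{forward to} the position preceding the current one, i.e.\ over $\{\lambda_{i-1}\}\cup(\text{block }i)$. The minimum there is attained at $\lambda_{i-1}+1$ (start of block $i$), whose value is $i-2$, so the term evaluates to $1+(i-2)=i-1$, not $i-2$; correspondingly $r_{\lambda_i+1}=\min(i+n_i-2,\;i-1)=i-1$, which is the claimed $i+x-2$ with $x=1$. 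Your final line asserts that your value $i-2$ ``is precisely the claimed $i+x-2$'', but with $x=1$ the claim reads $i-1$: your computation contradicts the very formula you are proving (e.g.\ for $u=\ta^2\tb^2\ta^2\tb^2$ one has $r_{\lambda_3+1}=r_7=2=i-1$, not $1$).

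A related misdiagnosis: you say $n_i\geq 2$ is ``what forces the $\min$ to pick the smaller term''. In fact $\min(i+n_i-2,\,i-1)$ picks $i-1$ as soon as $n_i\geq 1$. The hypothesis is really needed one block earlier: in the \eqref{eq-algo-ri} minimum the competing value is $r_{\lambda_{i-1}}$, the entry at the \emph{last} position of block $i-1$ (where $a$ last occurred), which by the induction hypothesis equals $(i-2)+n_{i-1}-2$ when $i-1\geq 2$; one needs $n_{i-1}\geq 2$ to guarantee this is $\geq i-2$, so that the minimum is taken at the start of block $i$ rather than at $\lambda_{i-1}$. This is exactly how the paper uses the hypothesis: it writes the boundary step as $r_{\lambda_i+1}=1+\min\bigl(r_{\lambda_{i-2}+n_{i-1}},r_{\lambda_{i-1}+1}\bigr)$ and observes that $n_{i-1}-(1\text{ or }2)$ is always $\geq 0$. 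So the skeleton of your proof is right and repairable, but as written Case~B derives a false value and locates the use of the hypothesis $n_i\geq 2$ in the wrong place.
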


\begin{proof}
By induction on $p=\lambda_i+x$. When $x>1$ the induction step is
trivial since the previous occurrence of $u(p)$ is at $p-1$, so
\Cref{eq-algo-ri} directly gives $r_p=1+r_{p-1}$.  Assume now $x=1$ so
that $p=\lambda_i+1$ is the first position in block $i+1$.
If $i=0$ we are at the first letter of $u$ so that $r_{\lambda_0+1}=r_1=0$ which
agrees with \eqref{eq-rix}.
If $i=1$ we are at the first occurrence of $\tb$ in
$u$ so that $r_{\lambda_1+1} \eqby{\eqref{eq-algo-ri}} 0$, agreeing with \eqref{eq-rix}.
If now $i\geq 1$ then ---assuming that $i$ is, say, even in order to
simplify notations--- $r_{\lambda_i+1}$ is
$r(\ta^{n_1}\cdots\tb^{n_i},\ta)$ and the previous occurrence of
$u(p)$ is at $\lambda_{i-2}+n_{i-1}$. Thus
\begin{align*}
r_{\lambda_i+1} \eqby{\eqref{eq-algo-ri}} &
		1+\min(r_{\lambda_{i-2}+n_{i-1}}, r_{\lambda_{i-1}+1}, \ldots, r_{\lambda_{i-1}+n_i})
\\
	      =\: &
		 1+\min(r_{\lambda_{i-2}+n_{i-1}},r_{\lambda_{i-1}+1})
\\
	      \eqby{\text{i.h.}}\:&
	      1+\min(i-2+n_{i-1}-(1\text{ or }2),i-1+1-2)
\\
\shortintertext{and, since $n_{i-1}- (1\text{ or }2)$ is always $\geq 0$,}
	       =\: & i+x-2\:.
\end{align*}
\end{proof}

By symmetry we obtain the following expression for the $\ell$-vector, on the
condition that $n_2,\ldots,n_{k-1}\geq 2$:
\begin{equation}
\label{eq-lix}
\ell_{\lambda_i+x} = \begin{cases} k-i+n_i-x-1 &\text{if $i=k-1$,}\\
			     k-i+n_i-x-2 &\text{if $i<k-1$.}
		\end{cases}
\end{equation}
Summing the two vectors, looking for a maximum value, and adding $1$,
we obtain $ \rho(u) = k + \max\bigl( n_1+1, n_k+1, n_2, n_3, \ldots,
n_{k-1}\bigr)-3 $ as claimed in \Cref{lem-h-binary-ni>=2}: In each
block the sum $r_{\lambda_{i-1}+d}+\ell_{\lambda_{i-1}+d}$ is
$k+n_i-2$ except in the first and last blocks where it is $k+n_1-1$
and $k+n_k-1$, respectively.

\subsection{Proof of \Cref{lem-h-binary-u11v}}

Recall that the claim is $\rho(u\ta\tb v)=1+\rho(u v)$ when the $\ta$
and the $\tb$ are isolated, i.e., when $u$ does not end with $\ta$ and
$v$ does not start with $\tb$.

Let us assume that both $u$ and $v$ are not empty and write them
under the form $u=u'\tb$ and $v=\ta v'$.  Write $w$ for
$u'\tb\ta\tb\ta v'$ and $w'$ for $u'\tb\ta v'$. Further define
$\Lambda=|w|$, the total length, and $K=|u\ta|$, the critical position
inside $w$.  Finally write $\tuple{r_1,\ldots,r_\Lambda}$ and
$\tuple{r'_1,\ldots,r'_{\Lambda-2}}$ for the $r$-vectors of $w$ and $w'$,
respectively.  The following diagram will help follow our
reasoning (it assumes that if $\tb$ occurs in $v'$ then its first
occurrence is at position $\delta$).
\[
\begin{array}{r c c c c c c c c c c c c c c c c c}
w{:}	      & &u(1)\: &u(2)\:&\cdots\:&\tb\:&\ta\:&\tb\:&\ta\:&\cdots\:&\tb\:&\cdots
\\
\text{$r$-vector:}   & \langle & r_1,& r_2,& \cdots,& r_{K-1},& r_K,& r_{K+1},& r_{K+2},& \cdots,& r_{K+2+\delta},& \cdots
\\[.5em]
w'{:}	      & &u(1)\: &u(2)\:&\cdots\:&\tb\:&	  &   &\ta\:&\cdots\:&\tb\:&\cdots
\\
\text{$r$-vector:}   & \langle & r'_1,& r'_2,& \cdots,& r'_{K-1},& & & r'_K,& \cdots, & r'_{K+\delta},& \cdots
\\
\end{array}
\]
Obviously $r_i=r'_i$ when $1\leq i \leq K$ since the values at position
$i$ in some $r$-vector only depend on the first $i$ letters of a word.
\begin{lemma} $r_{i+2}=1+r'_{i}$ when $K\leq i\leq \Lambda-2$.
\end{lemma}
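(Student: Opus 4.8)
The plan is a strong induction on $i$ over the range $K\le i\le\Lambda-2$, driven by the $r$-vector recursion \eqref{eq-algo-ri}: each entry is $1$ plus the minimum of a contiguous window of earlier entries, the window starting at the last previous occurrence of the current letter. The structural fact underlying everything is that $w$ is obtained from $w'$ by inserting the factor $\tb\ta$ right after position $K$: thus $w(j)=w'(j)$ for $j\le K$, $w(K+1)=\tb$, $w(K+2)=\ta$, and $w(j)=w'(j-2)$ for $j\ge K+3$. In particular, for every $i$ in range the letter read at position $i+2$ of $w$ is the same as the letter read at position $i$ of $w'$, and (as already noted before the lemma) $r_j=r'_j$ for $1\le j\le K$.

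For the base case $i=K$ I would compute $r_{K+2}=r(u'\tb\ta\tb,\ta)$ by unfolding \eqref{eq-new-r-rules*} twice, obtaining $r_{K+2}=\min\bigl(r(u'\tb,\ta)+2,\ r(u'\tb,\tb)+1,\ r(u'\tb,\ta)+1\bigr)$, and then invoke the inequality $r(u'\tb,\ta)=\min\bigl(r(u',\tb)+1,\ r(u',\ta)\bigr)\le r(u',\tb)+1=r(u'\tb,\tb)$ (again from \eqref{eq-new-r-rules*}) to collapse this minimum to $r(u'\tb,\ta)+1=r'_K+1$. The same two-step unfolding yields the auxiliary identity $r_{K+1}=1+\min(r'_{K-1},r'_K)$, which is needed in the induction. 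The degenerate cases are disposed of in passing: if $v'=\epsilon$ then $\Lambda-2=K$ and only the base case arises, and if $u'=\epsilon$ nothing changes since the values $r(u',\cdot)$ are still governed by \eqref{eq-new-r-rules*}.

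For the inductive step, with $K+1\le i\le\Lambda-2$, let $c$ be the common letter $w(i+2)=w'(i)$, which lies inside the copy of $v'$. I would split according to the last occurrence of $c$ strictly before the current position. If $c$ already reoccurs inside $v'$ before that point, the window for $r_{i+2}$ in $w$ and the window for $r'_i$ in $w'$ are exact translates of one another by $2$, so applying the induction hypothesis entrywise (plus the base case when the leftmost index equals $K$) shows that the minimum of the $w$-window equals $1$ plus the minimum of the $w'$-window, hence $r_{i+2}=1+r'_i$. If $c$ does not reoccur inside $v'$, then $c\in\{\ta,\tb\}$: for $c=\ta$ the previous occurrence sits at position $K$ in $w'$ and at $K+2$ in $w$, so the windows are again translates by $2$ and we conclude as before; for $c=\tb$ the previous occurrence is at $K-1$ in $w'$ but at $K+1$ in $w$ (the inserted $\tb$), so the $w$-window is $\{K+1,\dots,i+1\}$ while the $w'$-window is $\{K-1,\dots,i-1\}$, and here the induction hypothesis covers the indices $\ge K$ while the auxiliary identity $r_{K+1}=1+\min(r'_{K-1},r'_K)$ precisely absorbs the extra term $r'_{K-1}$ appearing on the $w'$ side, so once more the minimum of the $w$-window is $1$ plus that of the $w'$-window and $r_{i+2}=1+r'_i$.

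The single delicate point is this last subcase: because the inserted letter at position $K+1$ is a $\tb$, the $\min$ in \eqref{eq-algo-ri} computing $r_{i+2}$ bottoms out at a different earliest index than the $\min$ computing $r'_i$, so the two windows are not simple translates; the auxiliary value $r_{K+1}$ is exactly what reconciles them. Everything else is routine bookkeeping about which block of $w$ (resp.\ $w'$) a given position falls into.
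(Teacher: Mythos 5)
Your proof is correct and follows essentially the same route as the paper's: the same base-case identity $r_{K+2}=1+r'_K$ (via $r(u'\tb,\ta)\leq r(u'\tb,\tb)$), the same auxiliary value $r_{K+1}=1+\min(r'_{K-1},r'_K)$, and the same reconciliation at the first $\tb$ of $v'$, where the inserted $\tb$ at position $K+1$ shifts the window of \eqref{eq-algo-ri}. Your inductive step is merely organized as an explicit case analysis on the last occurrence of the current letter, which makes rigorous what the paper summarizes as the two vectors ``evolving in lock-step'' after position $K+\delta$.
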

\begin{proof}
By induction on $i$.  We use \Cref{eq-algo-ri} repeatedly, and one of
its consequences in particular: values in a $r$-vector cannot increase
by more than 1, i.e., $r_{i+1}\leq r_i+1$.

Now let us assume $i=K$.  Then $r_{K+2} = \min(1+r_K,1+r_{K+1})$ so,
using $r_{K+1} = \min(1+r_{K-1},1+r_K)$, we have $r_{K+2} =
\min(1+r_K,2+r_{K-1},2+r_K)$ hence
\begin{equation}
\label{eq-rK2-1+rK}
		     r_{K+2} = 1 + r_k = 1 + r'_K
\end{equation}
since $r_K\leq 1+r_{K-1}$. This proves the claim for $i=K$.

For $i=K+1, K+2,\ldots$ the induction is easy as long as
$w'(K+1),w'(K+2),\ldots$ are the same letter $\ta$: the values in both
$r$-vectors increase by 1 at each step.

When/if we reach $w'(K+\delta)=\tb$ for the first time we use
\[
r'_{K+\delta} = \min(1+r'_{K-1},1+r'_K) = \min(1+r_{K-1},1+r_K)=r_{K+1}\:.
\]
On the other hand, and since $r_{K+1}\leq 1+r_K$,
\[
r_{K+2+\delta}=\min(1+r_{K+2},1+r_{K+1})\eqby{\eqref{eq-rK2-1+rK}} \min(2+r_K,1+r_{K+1})=1+r_{K+1} \:.
\]
This
proves the claim for $i=K+\delta$.

Beyond $K+\delta$ the values in both $r$-vectors only depend on values
past the critical point $K$, where $r_{i+2}=1+r'_i$ holds by the
induction hypothesis, so they keep evolving in lock-step maintaining
the +1 difference.
\end{proof}

By symmetry we deduce similar relations between the $\ell$-vectors
$\tuple{\ell_1,\ldots,\ell_\Lambda}$ of
$w$ and $\tuple{\ell'_1,\ldots,\ell'_{\Lambda-2}}$ of $w'$, namely
$\ell_i=1+\ell'_i$ for $i=1,\ldots,K-1$
and $\ell_i=\ell'_{i-2}$ for $i=K+2,\ldots,\Lambda$.

Now $\rho(w')$, i.e., $1+\max_i(r'_i+\ell'_i)$ by \Cref{eq-rho-by-rl},
is $1+r'_{i_0}+\ell'_{i_0}$ for some position $i_0$ and we deduce
$r_{i_0}+\ell_{i_0}=r'_{i_0}+(1+\ell'_{i_0})$ if $i_0<K$, or
$r_{i_0+2}+\ell_{i_0+2}=(1+r'_{i_0})+\ell'_{i_0}$ if $i_0\geq K$.  Both
cases yield $\rho(w)\geq 1+\rho(w')$.

In order to show the reverse inequality, i.e., $\rho(w)\leq 1+
\rho(w')$, we need to prove that $\rho(w) = 1+\max_i(r_i+\ell_i)$ can
be reached with $i\not\in\{K,K+1\}$. For $i=K$, recall first that
$r_{K+2}\eqby{\eqref{eq-rK2-1+rK}}1+r_K$. Since $\ell_k\leq
1+\ell_{K+2}$ (by symmetry from already noted $r_{K+1}\leq 1+r_{K-1}$)
we see that $r_K+\ell_K\leq r_{K+2}+\ell_{K+2}$.  By symmetry we
deduce $r_{K+1}+\ell_{K+1}\leq r_{K-1}+\ell_{K-1}$. Finally we see
that $\max_i(r_i+\ell_i)$ can be reached with $i<K$ or $i\geq K+2$,
proving $\rho(w)\leq 1+ \rho(w')$ and establishing the claim for
non-empty $u,v$.
\\

The cases where one of $u$ and $v$ is $\epsilon$ is handled in a
similar way (actually it is a bit simpler) and we omit its detailed
analysis. Finally the claim is obviously true when $u=v=\epsilon$: we
noted that $\rho(\epsilon) = 0$ and one readily checks that
$\rho(\ta\tb) = 1$.

\subsection{Proof of \Cref{lem-h-binary-un1mv}}

Write $w$ for $u\ta\ta\tb\ta\ta v$ and $\tuple{r_1,\ldots,r_{|w|}}$
and $\tuple{\ell_1,\ldots}$ for its $r$- and $\ell$- vectors.  Write
$w'$ for $u\ta\ta\ta\tb\tb\tb \widetilde{v}$ and use $r'_i$ and
$\ell'_i$ for its vectors.  The claim we have to prove is
$\rho(w)=\rho(w')$. The critical position is $K=|u|+2$ and the
following diagram show how the two $r$-vectors relate:

\[
\begin{array}{r c c c c c c c c c c c c c c c c c}
w{:} &	     &u(1)\:&\cdots\:&\ta\:&\ta\:& &\tb\:&\ta\:&\ta\:&v(1)\:&\cdots\:&\tb\:&\cdots
\\
     &\langle& r_1, &\cdots,& r_{K-2},& r_{K-1}, & & r_{K},& r_{K+1},& r_{K+2},& \cdots&\cdots&r_{K+\delta},&\cdots
\\[.5em]
w'{:}&	     &u(1)\:&\cdots\:&\ta\:&\ta\: &\ta\:&\tb\:&\tb\:&\tb\:&\widetilde{v(1)}\:&\cdots&\ta\:&\cdots
\\
     &\langle&r'_1, & \cdots,& r'_{K-2},& r'_{K-1}, & r'_K,& r'_{K+1},& r'_{K+2}, & r'_{K+3},& \cdots &\cdots&r'_{K+\delta+1},&\cdots
\\
\end{array}
\]
Obviously $r'_i=r_i$ when $1\leq i < K$.
\begin{lemma}
$r'_{i+1}=r_i$ when
$K\leq i\leq|w|$.
\end{lemma}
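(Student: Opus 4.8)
The plan is to establish $r'_{i+1}=r_i$ by induction on $i$, the only tool being the one-pass recursion \eqref{eq-algo-ri}. I would start by recording a few purely structural facts relating $w=u\ta\ta\tb\ta\ta v$ and $w'=u\ta\ta\ta\tb\tb\tb\widetilde v$: the two words share the prefix that ends with the first $\ta\ta$, so $r'_i=r_i$ for $i<K$; around the critical position $w$ reads $\tb\,\ta\ta\,v$ whereas $w'$ reads $\tb\tb\tb\,\widetilde v$, so position $i$ of $w$ matches position $i+1$ of $w'$, the corresponding letters being equal at $i=K$ (both $\tb$) and swapped for every $i\geq K+1$; consequently the previous-occurrence position of any letter moves by exactly $+1$ from $w$ to $w'$, with the single exception of the previous-$\tb$ anchor of the \emph{first} $\tb$ that occurs inside $v$, which sits at position $K$ in $w$ and also at position $K$ in $w'$.

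For the base cases $i\in\{K,K+1,K+2\}$ I would just unfold \eqref{eq-algo-ri}. Two observations do the work: $r_{K-1}=1+r_{K-2}$ because positions $K-2$ and $K-1$ carry the same letter, so the window used for $r_K$ still contains $r_{K-2}$ and hence $r_K\leq 1+r_{K-2}=r_{K-1}$; and $r'_K=1+r_{K-1}$, so $r'_K>r_K$. Then \eqref{eq-algo-ri} yields $r'_{K+1}=1+\min(\ldots,r_{K-1},r'_K)=1+\min(\ldots,r_{K-1})=r_K$ since the extra term $r'_K$ is dominated, then $r_{K+1}=1+\min(r_{K-1},r_K)=1+r_K=1+r'_{K+1}=r'_{K+2}$, and finally $r_{K+2}=1+r_{K+1}=1+r'_{K+2}=r'_{K+3}$. (If $\tb\notin\alphabet(u)$ all these quantities collapse to $0$.)

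For the inductive step $i\geq K+3$, that is, for the positions inside $v$, I would apply \eqref{eq-algo-ri} on both words. By the structural correspondence together with the induction hypothesis, the window $\langle r_p,\ldots,r_{i-1}\rangle$ computing $r_i$ and the corresponding window computing $r'_{i+1}$ contain the same entries, with one exception: when $w(i)$ is the first $\tb$ met inside $v$, the $w'$-window carries one additional leading entry $r'_K=1+r_{K-1}$, which is larger than $r_K$ and therefore dominated by the $r_K$ already present in both windows. In all cases the two minima coincide, giving $r'_{i+1}=1+\min(\cdots)=r_i$ and closing the induction (the case $v=\epsilon$ being subsumed by the base cases).

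The step I expect to be the main obstacle is precisely this window bookkeeping at the $\tb\ta\ta$-versus-$\tb\tb\tb$ junction: one must check that the isolated $\tb$ at position $K$ of $w$ serves as the last $\tb$ for a $\tb$ read later in $v$ over exactly the same stretch as the $\ta$ at position $K$ of $w'$ serves as the last $\ta$ for the corresponding swapped letter of $\widetilde v$, that the length-$2$ block in $w$ and the length-$3$ block in $w'$ offset all remaining previous-occurrence pointers by one in a consistent way, and that the lone discrepant entry $r'_K$ can never lower a minimum. Everything else is a mechanical unfolding of \eqref{eq-algo-ri}.
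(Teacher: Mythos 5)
Your proof is correct and follows essentially the same route as the paper's: induction on $i$ driven by \eqref{eq-algo-ri}, using the prefix identity $r'_j=r_j$ for $j<K$, the observation that the extra window entry $r'_K=1+r'_{K-1}$ is always dominated and so never realises a minimum, and special treatment of the first $\tb$ occurring in $v$, whose previous-occurrence anchor is the one pointer that does not shift by $+1$. The only cosmetic difference is that you phrase the inductive step as a uniform window-matching argument (note your blanket claim that previous-occurrence pointers shift by exactly $+1$ is literally false at $i=K,K+1,K+2$, but you compute those positions by hand as base cases anyway, so no harm is done), whereas the paper argues ``lock-step'' propagation with an explicit separate case at the position $i=K+\delta$ of the first $\tb$ in $v$.
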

\begin{proof}
By induction on $i$. We start with $i=K$: if $\tb$ does not occur in
$u$ then $r_K=0$ and $r'_{K+1}=0$, while if $\tb$'s last occurrence in
$u$ is at position $p$ then $r_K=1+\min\{r_j~|~j=p,\ldots,K-1\}$ while
$r'_{K+1}=1+\min\{r'_j~|~j=p,\ldots,K\}$. The two values coincide
since $r'_{K}>r'_{K-1}$ and since $r_j=r'_j$ for $j<K$.

The case for $i=K+1$ is easier: $r'_{K+2}=1+r'_{K+1}$ while
$r_{K+1}=1+\min(r_{K-1},r_K)=1+r_K$ since
$r_K\leqby{\eqref{eq-algo-ri}}1+r_{K-2}\eqby{\eqref{eq-algo-ri}}r_{K-1}$. From
them the values increment in lock-step in both vectors until we reach
the first occurrence of $\tb$ in $v$ at position $i=K+\delta$.

At that position $r_{K+\delta}=1+\min(r_K,r_{K+1})=1+r_K$ since
$r_{K+1}=1+r_K$ as seen above.
On the other hand $r'_{K+\delta+1}=1+\min(r'_K,r'_{K+1})=1+r'_{K+1}$
since
$r'_K= 2+r'_{K-2}$ while $r'_{K+1}\leq 1+r'_{K-2}$.  Finally
\[
r_{K+\delta}=1+r_K\eqby{\text{i.h.}}1+r'_{K+1}=r'_{K+\delta+1}\:.
\]
From then on, the values in both vectors only depend on previous
values past the critical point, and since (modulo the swapping of
$\ta$ with $\tb$) the letters agree in both suffixes, the induction
goes without any difficulty.
\end{proof}
A symmetrical results holds for the $\ell$-vectors, from which we conclude
$ \rho(w)\leq \rho(w')$ since all the sums $1+r_i+\ell_i$ on $w$'s
side also exist on $w'$'s side.

Then, in order to conclude the proof of \Cref{lem-h-binary-un1mv}, we
only need to show that $\rho(w')\leq \rho(w)$, i.e., that
$1+r'_K+\ell'_K$, the only sum that is not present on $w$'s side, can
also be reached or bested at some position $i\neq K$ in $w'$. But this
is easy to show, by observing, e.g., that $\ell_{K-1}=1+\ell'_K$ and
that $r'_{K}\leq 1+r'_{K-1}$ so that $r'_K + \ell'_K \leq r'_{K-1} +
\ell'_{K-1}$.

\end{document}